\DeclareSymbolFont{slenderlargesymbols}{OMX}{ccex}{m}{n}
\DeclareMathSymbol{\prod}{\mathop}{slenderlargesymbols}{"51}
\tikzstyle{vertex}=[circle,black, fill=black, draw, inner sep=0pt, minimum size=6pt]
\definecolor{cof}{RGB}{219,144,71}
\definecolor{pur}{RGB}{186,146,162}
\definecolor{greeo}{RGB}{91,173,69}
\definecolor{greet}{RGB}{52,111,72}
\pgfplotsset{compat=1.14}
\providecommand{\customgenericname}{}
\newcommand{\newcustomtheorem}[2]{%
  \newenvironment{#1}[1]
  {%
   \renewcommand\customgenericname{#2}%
   \renewcommand\theinnercustomgeneric{##1}%
   \innercustomgeneric
  }
  {\endinnercustomgeneric}
}
\newtheorem{thm}{Theorem}
\newtheorem{defn}[thm]{Definition}
\newtheorem{prop}{Proposition}
\newcommand{\R}{\mathbb{R}}
\newcommand{\E}{\mathbb{E}}
\newcommand{\Lcal}{\mathcal{L}}
\newcommand{\Ncal}{\mathcal{N}}
\DeclareMathOperator*{\argmax}{arg\,max}
\DeclareMathOperator*{\argmin}{arg\,min}
\DeclareMathOperator*{\KL}{KL}
\DeclareMathOperator*{\ELBO}{ELBO}
\DeclareMathOperator*{\ELBOProxy}{ELBOProxy}
\DeclarePairedDelimiterX{\infdivx}[2]{(}{)}{%
  #1\;\delimsize\|\;#2%
}
\newcommand{\infdiv}{D_{\KL}\infdivx}
\newtheorem{lemma}{Lemma}
\newcites{Phys}{References}
\begin{document}

\title{Discovering Structure From Corruption for Unsupervised Image Reconstruction} %

\author{Oscar Leong$^*$, Angela F. Gao$^*$, He Sun, and Katherine L. Bouman
\thanks{The authors are with the Computing and Mathematical Sciences Department at the California Institute of Technology (Caltech), Pasadena, California. Katherine L. Bouman is also with the Electrical Engineering, Astronomy, and Mechanical Engineering Departments at Caltech. He Sun is also with the College of Future Technology and National Biomedical Imaging Center at Peking University, Beijing, China. Most of this work was done while He Sun was at Caltech. This research was carried out at the Jet Propulsion Laboratory and Caltech under a contract with the National Aeronautics and Space Administration and funded through the PDRDF. In addition, 
this work was sponsored by NSF Awards 2048237, 1935980, and an Amazon AI4Science Partnership Discovery Grant. We would like to thank Ben Prather, Abhishek Joshi, Vedant Dhruv, Chi-kwan Chan, and Charles Gammie for providing black hole simulations used in this work. We would also like to thank Aviad Levis, Yu Sun, and Jorio Cocola for their feedback and guidance.}}


\setlength{\abovedisplayskip}{6pt}
\setlength{\belowdisplayskip}{6pt}

\maketitle

\def\thefootnote{*}\footnotetext{These authors contributed equally. Corresponding emails: \href{mailto:oleong@caltech.edu}{oleong@caltech.edu}, \href{mailto:afgao@caltech.edu}{afgao@caltech.edu}.}

\begin{figure*}
    \centering
    \includegraphics[width=\textwidth]{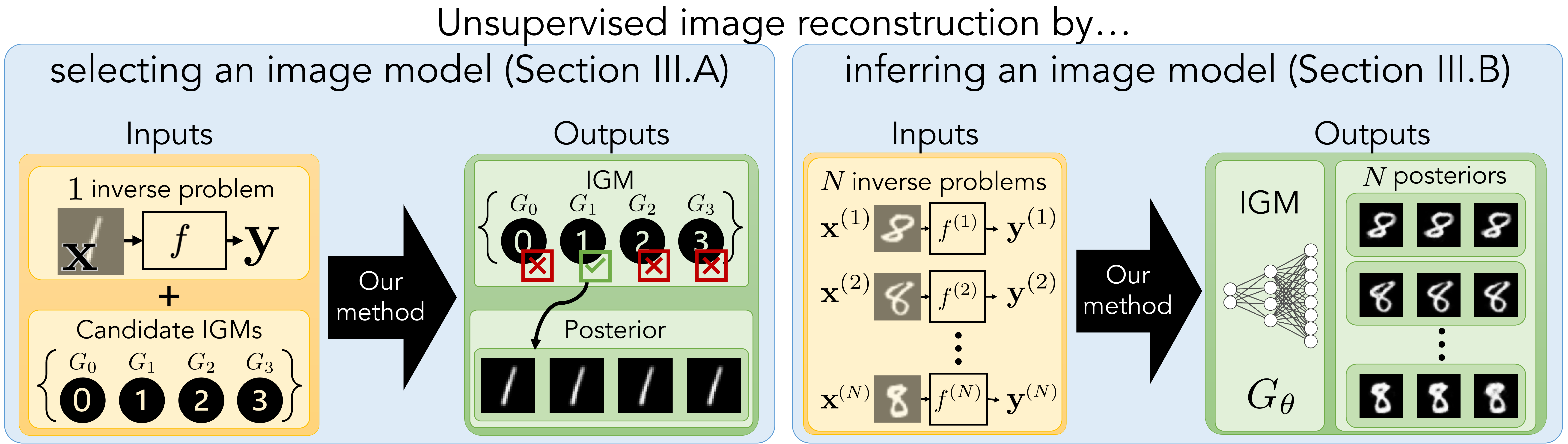}
    \caption{\textbf{Method overview.} In this paper we tackle ill-posed image reconstruction problems when a traditional image prior is not available or cannot be derived from example images. The key idea of our work is that image reconstruction is possible when one has access to only corrupted measurement examples and the underlying images share common, low-dimensional structure. We explore such ideas in this work in the following two ways: \textbf{Left:} Given a single measurement example (with the corresponding forward model) and a set of candidate image generation models (IGMs), we use our proposed criterion to select the best IGM from the measurements alone and then recover an image posterior under that IGM. The criterion we use for model selection is an approximation of the evidence lower bound (i.e., ELBOProxy in Eq. \eqref{eqref:ELBOProxy})  (see Sec.~\ref{sec:ELBO-intro}). \textbf{Right:} Building off of this, we consider the setting where we must infer the IGM from noisy measurements alone (see Sec.~{\ref{sec:learning}}). In this case, we solve $N$ inverse problems simultaneously. The inputs of this method are $N$ measurement examples with their known forward models, and the outputs are a single inferred IGM and $N$ latent embeddings that, when combined, lead to $N$ image reconstruction posteriors.}
    \label{fig:method_both}
\end{figure*}

\begin{abstract}

We consider solving ill-posed imaging inverse problems without access to an image prior or ground-truth examples.
An overarching challenge in these inverse problems is that an infinite number of images, including many that are implausible, are consistent with the observed measurements.
Thus, image priors are required to reduce the space of possible solutions to more desirable reconstructions. However, in many applications it is difficult or potentially impossible to obtain example images to construct an image prior. Hence inaccurate priors are often used, which inevitably result in biased solutions.
Rather than solving an inverse problem using priors that encode the spatial structure of any one image, we propose to solve a set of inverse problems jointly by incorporating prior constraints on the collective structure of the underlying images. 
The key assumption of our work is that the underlying images we aim to reconstruct share common, low-dimensional structure. We show that such a set of inverse problems can be solved simultaneously without the use of a spatial image prior by instead inferring a shared image generator with a low-dimensional latent space.
The parameters of the generator and latent embeddings are found by maximizing a proxy for the Evidence Lower Bound (ELBO). Once identified, the generator and latent embeddings can be combined to provide reconstructed images for each inverse problem.
The framework we propose can handle general forward model corruptions, and we show that measurements derived from only a small number of ground-truth images ($\leqslant 150$) are sufficient for image reconstruction. We demonstrate our approach on a variety of convex and non-convex inverse problems, including  denoising, phase retrieval, and black hole video reconstruction.

\end{abstract}
\begin{IEEEkeywords}
Inverse problems, computational imaging, prior models, generative networks, Bayesian inference
\end{IEEEkeywords}

\vspace{-5mm}

\section{Introduction}

In imaging inverse problems, the goal is to recover an underlying image from corrupted measurements, where the measurements and image are related via an understood forward model: $y = f(x) + \eta$. Here, $y$ are measurements, $x$ is the underlying image, $f$ is a known forward model, and $\eta$ is noise. Such problems are ubiquitous and include denoising \cite{Burger2012, dyer2022lightsheet}, super-resolution \cite{CandesGranda}, compressed sensing \cite{CandesRombergTao2006, Donoho2006}, phase retrieval \cite{Numerics-of-PR}, and deconvolution \cite{Lietal2016}. 
Due to corruption by the forward model and noise, these problems are often ill-posed: there are many images that are consistent with the observed measurements, including ones that are implausible.

To combat the ill-posedness in imaging  problems, solving for an image traditionally requires imposing additional structural assumptions to reduce the space of possible solutions. 
We encode these assumptions in an \textit{image generation model (IGM)}, whose goal is to capture the desired properties of an image's spatial structure. 
IGMs are general; they encompass probabilistic \textit{spatial-domain} priors (e.g., that encourage smoothness or sparsity), but also include deep image generators that are not necessarily probabilistic but are trained to primarily sample a certain class of images.

In order to define an IGM, it is necessary to have knowledge of the underlying image's structure. 
If images similar to the underlying image are available, then an
IGM can be learned directly \cite{romano2017RED, venkatakrishnan2013pnp, Boraetal17}. However, an abundance of clean images is not available for many scientific imaging modalities (e.g., geophysical imaging and astronomical imaging). Collecting images in these domains can be extremely invasive, time-consuming, expensive, or even impossible. For instance, how should we define an IGM for black hole imaging without having ever seen a direct image of a black hole or knowing what one should look like? Moreover, classical approaches that utilize hand-crafted IGMs, such as total variation \cite{TV-ROF} or sparsity in a wavelet basis \cite{mallat1999wavelet}, are prone to human bias \cite{Levinetal09}.

In this work, we show how one can solve a set of ill-posed image reconstruction tasks in an unsupervised fashion, i.e., without prior information about an image's spatial structure or access to clean, example images. The key insight of our work is that knowledge of common structure across multiple diverse images can be sufficient regularization alone. In particular, suppose we have access to a collection of noisy measurements $\{y^{(i)}\}_{i=1}^N$ that are observed through (potentially different) forward models $y^{(i)} := f^{(i)}(x^{(i)}) + \eta^{(i)}$. 
The core assumption we make is that the different underlying images $\{x^{(i)}\}_{i=1}^N$ are drawn from the same distribution (unknown {\it a priori}) and share common, low-dimensional structure. Thus, our ``prior'' is not at the spatial-level, but rather exploits the \textit{collective structure} of the underlying images. This assumption is satisfied in a number of applications where there is no access to an abundance of clean images. For instance, although we might not know what a black hole looks like, we might expect it to be similar in appearance over time. We show that under this assumption, the image reconstruction posteriors $p(x|y^{(i)})$ can be learned jointly from a small number of examples $\{y^{(i)}\}_{i=1}^N$ due to the common, low-dimensional structure of the collection $\{x^{(i)}\}_{i=1}^N$. Specifically, our main result is that one can capitalize on this common structure by jointly solving for 1) a shared image generator $G_{\theta}$ and 2) $N$ low-dimensional latent distributions $q_{\phi^{(i)}}$, such that the distribution induced by the push-forward of $q_{\phi^{(i)}}$ through $G_{\theta}$ approximately captures the image reconstruction posterior $p(x| y^{(i)})$ for each measurement example $i \in [N]$.

\vspace{-2mm}

\subsection{Our Contributions}

We outline the main contributions of our work, which extends our prior work presented in \cite{gao2023image}:
\begin{enumerate}
    \item We solve a collection of ill-posed inverse problems without prior knowledge of an image's spatial structure by exploiting the common, low-dimensional structure shared across images. This common structure is exploited when inferring a shared IGM with a low-dimensional latent space.
    \item To infer this IGM, we define a loss inspired by the evidence lower bound (ELBO).
    We motivate this loss by showing how it aids in unsupervised image reconstruction by helping select one IGM from a collection of candidate IGMs using a single measurement example.
    
    
    \item 
    We apply our approach to convex and non-convex inverse problems, such as denoising, black hole compressed sensing, and phase retrieval. We establish that we can solve inverse problems without spatial-level priors and demonstrate good performance with only a small number of independent measurement examples (e.g., $\leqslant 150$).
\item We theoretically analyze the inferred IGM in linear inverse problems under a linear image model to show that in this setting the inferred IGM performs dimensionality reduction akin to PCA on the collection of measurements. 
\end{enumerate}


\vspace{-3mm}
\section{Background and Related Work}

We now discuss related literature in model selection and learning-based IGMs. In order to highlight our key contributions, we emphasize the following assumptions in our framework:

\begin{enumerate}
    \item We do not have access to a set of images from the same distribution as the underlying images.
    \item We only have access to a collection of measurement examples, where each example comes from a different underlying image. The number of examples $N$ is small, e.g., $N \leqslant 150$.
    \item For each underlying image $x^{(i)}$ we wish to reconstruct, we only have access to a single measurement example $y^{(i)} = f^{(i)}(x^{(i)}) + \eta^{(i)}$. That is, we do not have multiple observations of the same underlying image. Note each $f^{(i)}$ can be potentially different.
\end{enumerate} 

\vspace{-4mm}
\subsection{Model selection}
Model selection techniques seek to choose a model that best explains data by balancing performance and model complexity. In supervised learning problems with sufficiently large amounts of data, this can be achieved simply by evaluating the performance of different candidate models using reserved test data \cite{stone1974cross}. However, in image reconstruction or other inverse problems with limited data, one cannot afford to hold out data. In these cases, model selection is commonly conducted using probabilistic metrics. The simplest probabilistic metric used for linear model selection is adjusted R$^2$ \cite{Miles-Rsquared}. It re-weights the goodness-of-fit by the number of linear model parameters, helping reject high-dimensional parameters that do not improve the data fitting accuracy. Similar metrics in nonlinear model selection are Bayesian Information Criterion (BIC) \cite{BIC} and Akaike Information Criterion (AIC) \cite{akaike1974new}. AIC and BIC compute different weighted summations of a model's log-likelihood and complexity, offering different trade-offs between bias and variance to identify the best model for a given dataset. 

In our work, we consider the use of the ELBO as a model selection criterion. In \cite{Abdellatif18, AbdellatifAlquier18}, the use of the ELBO as a model selection criterion is theoretically analyzed and rates of convergence for variational posterior estimation are shown. Additionally, \cite{Taoetal18} proposes a generalized class of evidence lower bounds leveraging an extension of the evidence score. 
In \cite{Sunetal-alphaDPI}, the ELBO is used for model selection to select a few, discrete parameters modeling a physical system (e.g., parameters that govern the orbit of an exoplanet). A significant difference in our context, however, is that we use the ELBO as a model selection criterion in a \textit{high-dimensional} imaging context, and we optimize the ELBO over a continuous space of possible parameters.

\vspace{-4mm}
\subsection{Learning IGMs}
With access to a large corpus of example images, it is possible to directly learn an IGM to help solve inverse problems. Seminal work along these lines utilizing generative networks showcased that a pre-trained Generative Adversarial Network (GAN) can be used as an IGM in the problem of compressed sensing \cite{Boraetal17}. To solve the inverse problem, the GAN was used to constrain the search space for inversion.  This approach was shown to outperform sparsity-based techniques with 5-10x fewer measurements. Since then, this idea has been expanded to other inverse problems, including denoising \cite{Heckeletal2018}, super-resolution \cite{PULSE_CVPR_2020}, magnetic resonance imaging (MRI) \cite{Mardani2017, Songetal22}, and phase retrieval \cite{HLV18, ShamshadAhmed21}. However, the biggest downside to this approach is the requirement of a large dataset of example images similar to the underlying image, which is often difficult or impossible to obtain in practice. Hence, we consider approaches that are able to directly solve inverse problems without example images.

Methods that aim to learn an IGM from only noisy measurements have been proposed. The main four distinctions between our work and these methods are that these works either: 1) require multiple independent observations of the \textit{same} underlying image, 2) can only be applied to certain inverse problems, 3) require significantly more observations (either through more observations of each underlying image or by observing more underlying images), or 4) require significant hyperparameter tuning based on knowledge of example images.


\noindent \textbf{Noise2Noise (N2N)} \cite{Lehtinenetal18} learns to denoise images by training on a collection of noisy, independent observations of the \textit{same} image. To do so, N2N learns a neural network $\Phi_{\theta}$ whose goal is to map between noisy images $y$ and denoised images $x$. Since it has no denoised image examples to supervise training, it instead employs a loss that maps between noisy examples of the same underlying image. This objective is as follows:
\begin{equation}
\argmin_{\theta}
\sum_{i=1}^N \E_{y^{(i)}_1 \sim \mathcal{Y}_1} \E_{y^{(i)}_2 \sim \mathcal{Y}_2}[L(\Phi_{\theta}(y^{(i)}_1), y^{(i)}_2)],
\label{eq:n2n}
\end{equation}
where $y^{(i)}_j$ corresponds to a noisy observation of the $i$-th underlying image $x^{(i)}$, and $\mathcal{Y}_j$ is a distribution of noisy images where $\E_{y \sim \mathcal{Y}_j}[y] = x$. 
This N2N objective requires at least two observations of the same image and is limited by the assumption that the expected value of multiple observations of a single image is the underlying image.
Thus, N2N is only applicable to denoising problems where the forward model is the identity matrix with independent noise on each pixel.
Additionally, in practice N2N requires thousands of underlying images (i.e., $N = O(1000)$) to perform well. 
Thus, N2N's main distinctions with our work are distinctions 1), 2), and 3).


\noindent \textbf{Regularization by Artifact Removal (RARE)} \cite{liu2020rare} generalizes N2N to perform image reconstruction from measurements under linear forward models. That is, the objective in Eq.~\eqref{eq:n2n} is modified to include a pseudo-inverse. Nonetheless, multiple observations of the same underlying image are required, such that $\E_{y \sim \mathcal{Y}}[A^{\dagger}y] = x$ for the pseudo-inverse matrix $A^{\dagger}$. Thus, RARE suffers from the same limiting distinctions as N2N (i.e., 1), 2), and 3)).

\noindent \textbf{Noise2Void} \cite{Krulletal19} \textbf{and Noise2Self} \cite{batson2019noise2self} assume that the image can be partitioned such that the measurement noise in one subset of the partition is independent conditioned on the measurements in the other subset. This is true for denoising, but not applicable to general forward models. For example, in black hole and MRI compressed sensing, it is not true that the measurement noise can be independently partitioned since each measurement is a linear combination of all pixels. While this makes Noise2Void and Noise2Self more restrictive in the corruptions they can handle compared to RARE, they also don't require multiple observations of the same underlying image. Hence the main differences between these works and our own are distinctions 2) and 3). 


\noindent \textbf{AmbientGAN} \cite{bora2018ambientgan} and other similar approaches based on GANs \cite{kabkab2018task} and Variational Autoencoders (VAEs) \cite{Olsenetal22, Mendozaetal22} have been proposed to learn an IGM directly from noisy measurements. For instance, AmbientGAN aims to learn a generator whose images lead to simulated measurements that are indistinguishable from the observed measurements; this generator can subsequently be used as a prior to solve inverse problems.
However, AmbientGAN requires many measurement examples (on the order of 10,000) to produce a high quality generator.
We corroborate this with experiments in Section \ref{sec:results-joint} to show that they require many independent observations and/or fine tuning of learning parameters to achieve good performance. Thus, the main distinctions between AmbientGAN and our work are 3) and 4).


\noindent \textbf{Deep Image Prior (DIP)} \cite{ulyanov2018deepjournal} uses a convolutional neural network as an implicit ``prior''. DIP has shown strong performance across a variety of inverse problems to perform image reconstruction without explicit probabilistic priors. However, it is prone to overfitting and requires selecting a specific stopping criterion. While this works well when example images exist, selecting this stopping condition from noisy measurements alone introduces significant human bias that can negatively impact results. Thus, the main distinction between DIP and our work is 4).

We would also like to highlight additional work done to improve certain aspects of the DIP. While the original DIP method used a U-Net architecture \cite{U-Net-paper}, other works such as the Deep Decoder \cite{HH2018} and ConvDecoder \cite{DarestaniHeckel} used a decoder-like architecture that progressively grows a low-dimensional random tensor to a high-dimensional image. When underparameterized, such architectures have been shown to avoid overfitting and the need for early stopping. Other works mitigating early stopping include \cite{Chengetal19}, which takes a Bayesian perspective to the DIP by using Langevin dynamics to perform posterior inference over the weights to improve performance and show this does not lead to overfitting.


\vspace{-4mm}
\section{Approach}
\begin{figure*}[ht]
    \centering
    \includegraphics[width=0.95\textwidth]{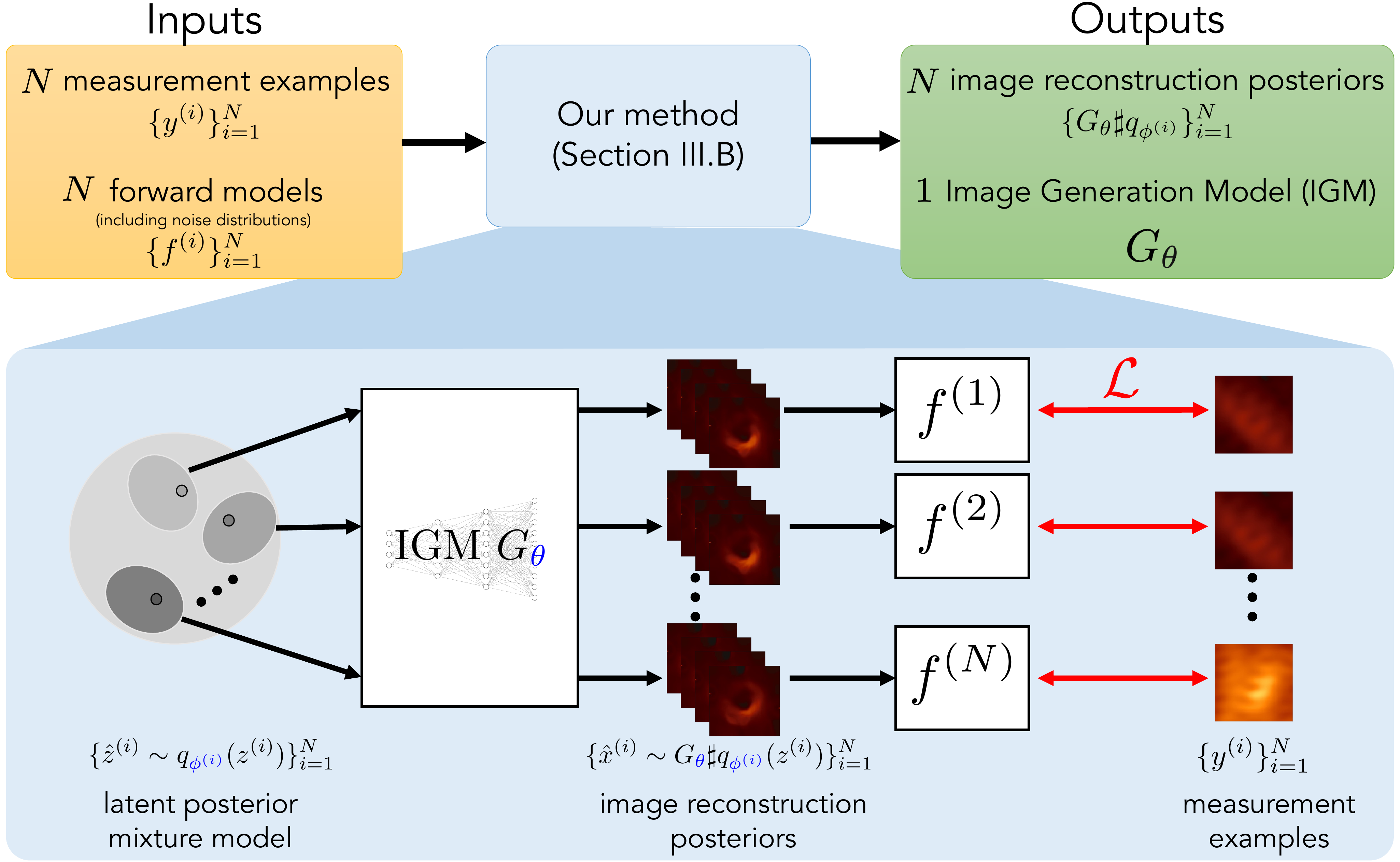}
     \caption{We propose to solve a collection of $N$ ill-posed inverse problems by exploiting the common, low-dimensional structure of the underlying images. Given a set of $N$ measurement examples $\{y^{(i)}\}_{i=1}^N$ from $N$ different underlying images, we propose to model each image posterior as the output of a shared IGM with a low-dimensional latent space. In particular, each posterior is approximated by $G_{\theta}\sharp q_{\phi^{(i)}}$, the push-forward of $q_{\phi^{(i)}}$ through $G_{\theta}$, where $G_{\theta}$ is the shared, common generator to all $N$ examples and $q_{\phi^{(i)}}$ is a low-dimensional, variational distribution particular to the $i$-th example. The parameters, $\theta$ and $\{\phi^{(i)}\}_{i=1}^N$, are colored in blue and are jointly inferred. That is, such parameters are inferred specifically for the measurements $\{y^{(i)}\}_{i=1}^N$. The loss we use  is the negative $\ELBOProxy$, which is denoted by $\mathcal{L}$ and given by Eq. \eqref{eqref:learning-objective}. Note that there is no notion of a training set and test set as we aim to solve the inverse problems jointly from all available measurements.
     }
    \label{fig:diagram}
\end{figure*}
In this work, we propose to solve a set of inverse problems without prior access to an IGM by assuming that the set of underlying images have common, low-dimensional structure. We motivate the use of optimizing the ELBO to infer an IGM by showing that it is a good criterion for generative model {\it selection} in Section~\ref{sec:ELBO-intro}.
Then, by optimizing the ELBO, we show in Section~\ref{sec:learning} that one can directly {\it infer}  an IGM from corrupted measurements alone by parameterizing the image model as a deep generative network with a low-dimensional latent distribution. 
The IGM network weights are shared across all images, capitalizing on the common structure present in the data, while the parameters of each latent distribution are learned jointly with the generator to model the image posteriors for each measurement example. 
\subsection{Motivation for ELBO as a model selection criterion} \label{sec:ELBO-intro}

In order to accurately {\it infer} an IGM, we motivate the use of the ELBO as a loss by showing that it provides a principled criterion for {\it selecting} an IGM to use as a prior model. Suppose we are given noisy measurements from a single image: $y = f(x) + \eta$. In order to reconstruct the image $x$, we traditionally first require an IGM $G$ that captures the distribution $x$ was sampled from. A natural approach would be to find or select the model $G$ that maximizes the model posterior distribution $p(G |y) \propto p(y | G)p(G).$
That is, conditioned on the noisy measurements, find the IGM of highest likelihood. Unfortunately computing $p(y | G)$ is intractable, as it requires marginalizing and integrating over all $x$ encompassed by the IGM $G$. However, we show that this quantity can be well approximated using the ELBO.


To motivate our discussion, we first consider estimating the image posterior $p(x|y,G)$ by learning the parameters $\phi$ of a variational distribution $h_{\phi}(x)$. Observe that the definition of the KL-divergence followed by an application of Bayes' theorem gives \begin{align*}
    &\infdiv{h_{\phi}(x)}{p(x|y,G)}  := \mathbb{E}_{x \sim h_{\phi}(x)}\left[\log \frac{h_{\phi}(x)}{p(x|y,G)}
     \right] \\
    & = \mathbb{E}_{x \sim h_{\phi}(x)}\left[\log \frac{h_{\phi}(x)p(y|G)}{p(y|x,G)p(x|G)}
     \right]\\
    & = - \mathbb{E}_{x \sim h_{\phi}(x)}[\log p(y|x,G)  + \log p(x|G) - \log h_{\phi}(x)]\\
    & \qquad \qquad +\log p(y|G).
\end{align*} The ELBO of an IGM $G$ given measurements $y$ under variational distribution $h_{\phi}$ is defined by \begin{align}
    \ELBO(G, h_{\phi};y) & := \mathbb{E}_{x \sim h_{\phi}(x)}[\log p(y|x,G) \nonumber\\
    & + \log p(x|G) - \log h_{\phi}(x)].
    \label{eq:ELBO} 
\end{align} Rearranging the previous equation, we see that by the non-negativity of the KL-divergence that \begin{align*}
    \log p(y|G) & = \infdiv{h_{\phi}(x)}{p(x|y,G)}  \\
    & + \ELBO(G, h_{\phi};y) \\
    & \geqslant \ELBO(G, h_{\phi};y).
\end{align*}
Thus, we can lower bound the model posterior as \begin{align*}
    \log p(G|y) \geqslant \ELBO(G, h_{\phi};y) + \log p(G) - \log p(y).
\end{align*} 
Note that $\log p(y)$ is independent of the parameters of interest, $\phi$. If the variational distribution $h_{\phi}(x)$ is a good approximation to the posterior $p(x|y,G)$, $D_{\mathrm{KL}} \approx 0$. Thus, maximizing $\log p(G|y)$ with respect to $G$ is approximately equivalent to maximizing $\ELBO(G, h_{\phi};y) + \log p(G)$.


 Each term in the ELBO objective encourages certain properties of the IGM $G$. In particular, the first term in the ELBO, $\E_{x \sim h_{\phi}(x)}[\log p(y|x,G)]$, requires that $G$ should lead to an image estimate that is consistent with our measurements $y$. 
 The second term, $\E_{x \sim h_{\phi}(x)}[\log p(x|G)]$, encourages images sampled from $h_{\phi}(x)$ to have high likelihood under our model $G$. 
 The final term is the entropy term, $\E_{x \sim h_{\phi}(x)} [-\log h_{\phi}(x)]$, which encourages a $G$ that leads to ``fatter'' minima that are less sensitive to small changes in likely images $x$ under $G$.

\subsubsection{ELBOProxy} 

Some IGMs are explicit, which allows for direct computation of $\log p(x|G)$. For example, if our IGM models $x$ as isotropic Gaussian with variance $\lambda$, then $-\log p(x|G) \propto \lambda^{-1}\|x\|^2_2$. In this case, we can optimize the ELBO defined in Equation~\eqref{eq:ELBO} directly and then perform model selection. However, an important class of IGMs that we are interested in are those given by deep generative networks. Such IGMs are not probabilistic in the usual Bayesian interpretation of a prior, but instead implicitly enforce structure in the data. A key characteristic of many generative network architectures (e.g., VAEs and GANs) that we leverage is that they generate high-dimensional images from low-dimensional latent representations. Bottlenecking helps the network learn global characteristics of the underlying image distribution while also respecting the low intrinsic dimensionality of natural images. 
However, this means that we can only compute $\log p(x|G)$ directly if we have an injective map \cite{kothari2021trumpets}. This architectural requirement limits the expressivity of the network.

We instead consider a proxy of the ELBO that is especially helpful for deep generative networks. That is, suppose our IGM is of the form $x = G(z)$. Introducing a variational family for our latent representations $z \sim q_{\phi}(z)$ and choosing a latent prior distribution $\log p_Z(z| G)$, we arrive at the following proxy of the ELBO:
\begin{align}
\ELBOProxy(G, q_{\phi}&;y)  := \E_{z \sim q_{\phi}(z)}[\log p(y |G(z)) \nonumber\\
& + \log p_Z(z | G) - \log q_{\phi}(z)]. \label{eqref:ELBOProxy}  
\end{align} In our experiments, we chose $p_Z(z|G)$ to be an isotropic Gaussian prior. This is a common choice in many generative modeling frameworks and has shown to be a good choice of prior in the latent space. 

To motivate this proxy, it is instructive to consider the case where our variational distribution $h_{\phi}:= G\sharp q_{\phi}$ is the push-forward of a latent distribution $q_{\phi}$ through an injective or invertible function $G$. To be precise, recall the following definition of the push-forward measure.

\begin{defn}
    Let $G :\R^k \rightarrow \R^n$ be a measurable function and suppose $p$ is a distribution (or, more generally, a measure) on $\R^k$. Then the push-forward measure $\mu := G \sharp p$ is the measure on $\R^n$ that satisfies the following: for all Borel sets $A$ of $\R^n$, $\mu(A) = p(G^{-1}(A))$ where $G^{-1}(A)$ denotes the preimage of $A$ with respect to $G$.
\end{defn}

The push-forward measure essentially characterizes how a distribution $p$ changes when passed through a function $G$. It follows from the definition of the push-forward that $x \sim G\sharp q_{\phi}$ if and only if $x = G(z)$ where $z \sim q_{\phi}$.  In the case $h_{\phi} = G \sharp q_{\phi}$ for an injective function $G$, the $\ELBO$ and $\ELBOProxy$ are equivalent, as shown in the following proposition:


\begin{prop} \label{prop:ELBOProxy_equivalence}
Suppose $G : \R^k \rightarrow \R^n$ is continuously differentiable and injective. For two probability distributions $p_Z$ and $q_{\phi}$ on $\R^k$, define the measures $p(\cdot|G) = G\sharp p_Z$ and $h_{\phi} = G\sharp q_{\phi}$. Then \begin{align*}
    \ELBO(G, h_{\phi};y)  = \ELBOProxy(G,q_{\phi};y)\ \forall y \in \R^m.
\end{align*}
\end{prop}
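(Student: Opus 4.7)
The plan is to rewrite both $\ELBO$ and $\ELBOProxy$ as the sum of a ``data fit'' expectation and a negative Kullback--Leibler divergence, and then match the two pieces separately using that $x\sim h_\phi$ is the same as $x = G(z)$ with $z\sim q_\phi$. Concretely, I would first group terms as
\begin{align*}
\ELBO(G,h_\phi;y) &= \E_{x\sim h_\phi}[\log p(y\,|\,x,G)] - \infdiv{h_\phi}{p(\cdot\,|\,G)},\\
\ELBOProxy(G,q_\phi;y) &= \E_{z\sim q_\phi}[\log p(y\,|\,G(z))] - \infdiv{q_\phi}{p_Z(\cdot\,|\,G)},
\end{align*}
and then argue the two summands match term by term.

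For the data fit term, the identity $\E_{x\sim G\sharp q_\phi}[\psi(x)] = \E_{z\sim q_\phi}[\psi(G(z))]$ (which is just the definition of the push-forward applied to $\psi(x) = \log p(y\,|\,x,G)$, together with the fact that in the model the measurements depend on $G$ only through $x$) immediately gives equality. This step is essentially free.

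The substantive content is in the KL term, where I would invoke the invariance of KL divergence under injective measurable maps: $\infdiv{G\sharp q_\phi}{G\sharp p_Z} = \infdiv{q_\phi}{p_Z}$. This is the heart of the argument and also the main subtlety, because when $k<n$ the push-forwards $h_\phi$ and $p(\cdot\,|\,G)$ are singular with respect to Lebesgue measure on $\R^n$ (both are concentrated on the $k$-dimensional manifold $G(\R^k)$), so one must be careful about what ``$\log h_\phi(x)$'' means. My plan is to make this rigorous via the area/co-area formula: continuous differentiability of $G$ plus injectivity yield that both $h_\phi$ and $p(\cdot\,|\,G)$ have densities with respect to the $k$-dimensional Hausdorff measure on $G(\R^k)$, given respectively by
\[
h_\phi(G(z)) = \frac{q_\phi(z)}{\sqrt{\det(J_G(z)^{\T} J_G(z))}},\qquad p(G(z)\,|\,G) = \frac{p_Z(z)}{\sqrt{\det(J_G(z)^{\T} J_G(z))}}.
\]
The key observation is that the Jacobian factors appear in both densities and therefore cancel in the ratio: $\log p(G(z)\,|\,G) - \log h_\phi(G(z)) = \log p_Z(z) - \log q_\phi(z)$. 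Taking expectations under $h_\phi$ and changing variables to $z\sim q_\phi$ yields $\infdiv{h_\phi}{p(\cdot\,|\,G)} = \infdiv{q_\phi}{p_Z}$, which together with the data fit identity above completes the proof. Alternatively, one could bypass densities entirely by invoking the general fact that for any injective measurable $G$, the preimage $\sigma$-algebra identifies $q_\phi$ with $G\sharp q_\phi$ and $p_Z$ with $G\sharp p_Z$ in a way that preserves Radon--Nikodym derivatives; the area formula route is just the more hands-on version of this.

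The hard part is conceptual rather than computational: ensuring the reader understands that the seemingly ill-defined quantities $\log h_\phi(x)$ and $\log p(x\,|\,G)$ are interpreted consistently (both with respect to the same base measure on $G(\R^k)$), so that the Jacobian cancellation is legitimate. A clean way to present this is to state the invariance of KL under injective maps as a short preliminary lemma (or cite it) and then use injectivity of $G$ as the single hypothesis that makes the change-of-variables work. Continuous differentiability is only needed to apply the area formula; injectivity alone suffices for KL invariance at the measure-theoretic level.
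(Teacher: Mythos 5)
Your proposal is correct and follows essentially the same route as the paper's proof: both rest on the injective change-of-variables formula giving densities with a $\tfrac{1}{2}\log\det(J_G^{\T}J_G)$ correction that cancels in the difference $\log p(x|G)-\log h_{\phi}(x)$, followed by a change of variables from $x\sim h_{\phi}$ to $z\sim q_{\phi}$. Your packaging of that cancellation as ``invariance of KL under injective maps,'' with the densities interpreted relative to Hausdorff measure on $G(\R^k)$, is a slightly more careful statement of the same argument.
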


\begin{proof} It suffices to show \begin{align*}
    \mathbb{E}_{x \sim h_{\phi}(x)}[&\log p(x|G)  - \log h_{\phi}(x)] \\
    & = \mathbb{E}_{z \sim q_{\phi}(z)}[\log p_Z(z|G) - \log q_{\phi}(z)]
\end{align*} Let $J_{G}(z) \in \R^{n \times k}$ denote the Jacobian of $G$ at an input $z \in \R^k$. Since $G$ is injective and continuously differentiable with $p(\cdot|G) = G\sharp p_Z$, we can compute the likelihood of any point $x \in \mathrm{range}(G)$ \cite{kothari2021trumpets} via \begin{align*}
    \log p(x|G) & = \log p_Z(G^{\dagger}(x)|G) \\
   &  - \frac{1}{2}\log|\det[J_G(G^{\dagger}(x))^TJ_G(G^{\dagger}(x))]|
\end{align*} where $G^{\dagger}$ is the inverse of $G$ on its range. This is essentially the classical change-of-variables formula specialized to the case when $G$ is injective and we wish to access likelihoods on the range of $G$. Note that this equation is only valid for points in the range of the injective function $G$. Likewise, since $h_{\phi} = G \sharp q_{\phi}$, we can compute the entropy of $h_{\phi}$ for any point $x \in \mathrm{range}(G)$ as \begin{align*}
    \log h_{\phi}(x) & = \log q_{\phi}(G^{\dagger}(x)) \\
   &  - \frac{1}{2}\log|\det[J_G(G^{\dagger}(x))^TJ_G(G^{\dagger}(x))]|.
\end{align*} 

Now observe that for $x \sim h_{\phi}$, $x \in \mathrm{range}(G)$ as $h_{\phi}$ is the push-forward of $q_{\phi}$ through $G$. Thus, for $x \sim h_{\phi}$, we have that \begin{align*}
    \log p(x|G) & - \log h_{\phi}(x) \\
    & = \log p_Z(G^{\dagger}(x)|G) - \log q_{\phi}(G^{\dagger}(x)).
\end{align*} By the definition of the push-forward measure, we have that $x \sim h_{\phi}$ implies $x = G(z)$ for some $z \sim q_{\phi}$. Using our previous formulas, we can compute the expectation over the difference $\log p(x|G)  - \log h_{\phi}(x)$ with respect to $h_{\phi}$ as \begin{align*}
    & \mathbb{E}_{x \sim h_{\phi}(x)} [\log p(x|G)  - \log h_{\phi}(x)] \\
    & = \mathbb{E}_{x \sim h_{\phi}(x)}[\log p_Z(G^{\dagger}(x)|G) - \log q_{\phi}(G^{\dagger}(x))] \\
    & =\mathbb{E}_{z \sim q_{\phi}(z)}[\log p_Z(G^{\dagger}(G(z))|G) - \log q_{\phi}(G^{\dagger}(G(z)))] \\
    & = \mathbb{E}_{z \sim q_{\phi}(z)}[\log p_Z(z|G) - \log q_{\phi}(z)].
\end{align*}
\vspace{-2mm}
\end{proof}
An important consequence of this result is that for injective generators $G$, the inverse of $G$ (on its range) is not required for computing the $\ELBO$. In this case, the $\ELBOProxy$ is in fact equivalent to the $\ELBO$. While not all generators $G$ will be injective, quality generators are largely injective over high likelihood image samples. 
In Section \ref{sec:mnist-model-selection-exp} and Fig. \ref{fig:MNIST_ELBO_exp}, we experimentally show that this proxy can aid in selecting potentially non-injective generative networks from corrupted measurements.

\subsubsection{Toy example} \label{sec:mnist-model-selection-exp}  To illustrate the use of the $\ELBOProxy$ as a model selection criterion, we conduct the following experiment that asks whether the $\ELBOProxy$ can identify the best model from a given set of image generation models. For this experiment, we use the MNIST dataset \cite{MNIST} and consider two inverse problems: denoising and phase retrieval. We train a generative model $G_c$ on each class $c \in \{0,1,2,\dots,9\}$ using the clean MNIST images directly. Hence, $G_c$ generates images from class $c$ via $G_c(z)$ where $z \sim \mathcal{N}(0,I)$. Then, given noisy measurements $y_{c}$ from a single image from class $c$, we ask whether the generative model $G_c$ from the appropriate class would achieve the best $\ELBOProxy$. Each $G_c$ is the decoder of a VAE with a low-dimensional latent space, with no architectural constraints to ensure injectivity. For denoising, our measurements are $y_c = x_c + \eta_c$ where $ \eta_c\sim \mathcal{N}(0, \sigma^2 I)$ and $\sigma = \sqrt{0.5}$. For phase retrieval, $y_c = |\mathcal{F}(x_{c})| + \eta_c$ where $\mathcal{F}$ is the Fourier transform and $\eta_{c} \sim \mathcal{N}(0, \sigma^2 I)$ with $\sigma = \sqrt{0.05}$.

We construct $10 \times 10$ arrays for each problem, where in the $i$-th row and $j$-th column, we compute the negative $\ELBOProxy$ obtained by using model $G_{{i-1}}$ to reconstruct images from class $j-1$. We calculate $\ELBOProxy(G_{c},q_{\phi_c};y_c)$ by parameterizing $q_{\phi_c}$ with a Normalizing Flow \cite{dinh2016density} and optimizing network weights $\phi_c$ to maximize \eqref{eqref:ELBOProxy}. The expectation in the $\ELBOProxy$ is approximated via Monte Carlo sampling. Results from the first $5$ classes are shown in Fig.~\ref{fig:MNIST_ELBO_exp} and the full arrays are shown in the supplemental materials. We note that all of the correct models are chosen in both denoising and phase retrieval. We also note some interesting cases where the $\ELBOProxy$ values are similar for certain cases, such as when recovering the $3$ or $4$ image. For example, when denoising the $4$ image, both $G_{4}$ and $G_{9}$ achieve comparable $\ELBOProxy$ values. By carefully inspecting the noisy image of $4$, one can see that both models are reasonable given the structure of the noise. 

\begin{figure}
    \centering
    \includegraphics[width=0.48\textwidth]{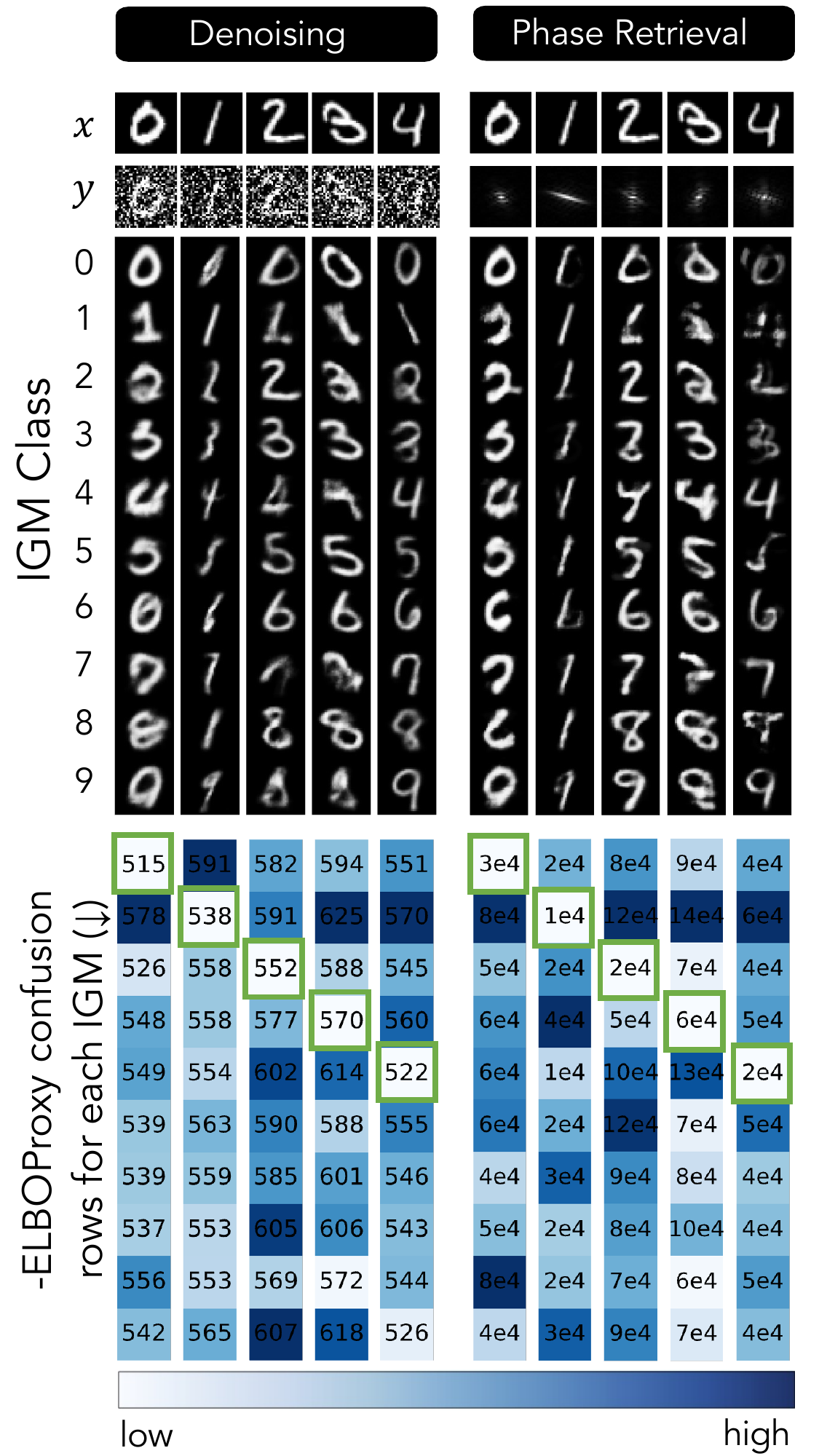}
    \caption{We consider two inverse problems: denoising and phase retrieval. Top: the two topmost rows correspond to the ground-truth image $x_c$ and the noisy measurements $y_c$. Center: in each column, we show the means of the distribution induced by the push-forward of $G_j$ and each latent distribution $z \sim q_{\phi_j}$ for $j \in \{0,\dots,9\}$. 
   Bottom: each column of the array corresponds to the negative $\ELBOProxy$ achieved by each model in reconstructing the images. Here, lower is better. Boxes highlighted in green correspond to the best negative $\ELBOProxy$ values in each column. In all these examples, the correct model was chosen.}
    \label{fig:MNIST_ELBO_exp}
\end{figure}

\begin{figure*}[ht]
    \centering
    \includegraphics[width=0.95\textwidth]{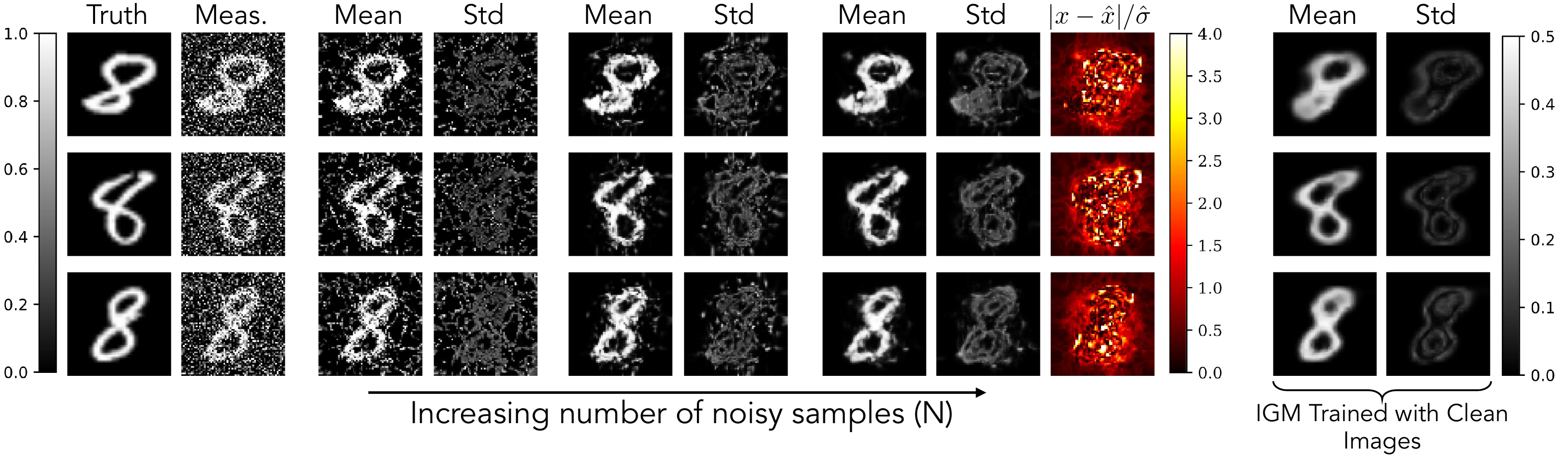}
    \caption{\textbf{Improvement with an increasing number of noisy observations.} We demonstrate our method of inferring an IGM to perform denoising for an increasing number of noisy MNIST images (5, 75, and 150 images from left to right). We showcase results on three randomly selected examples that appear in each collection of inverse problems. In each panel, we include the ground-truth, noisy measurements, mean of the posterior, and standard deviation of the posterior. We also include the residual error divided by the empirical standard deviation for $N = 150$. On the far right, we visualize reconstructions using an IGM trained on the full clean MNIST 8's class (6000 images). We observe that the mean reconstructions and standard deviations from our low-data IGMs become more similar to the full-data IGM with increasing data. Our residual errors are largely within 3 standard deviations.}
    \label{fig:MNIST_denoising}
\end {figure*}
\vspace{-3mm}
\subsection{Simultaneously solving many inverse problems}\label{sec:learning}

\begin{figure*}[h!]
    \centering
    \includegraphics[width=0.95\textwidth]{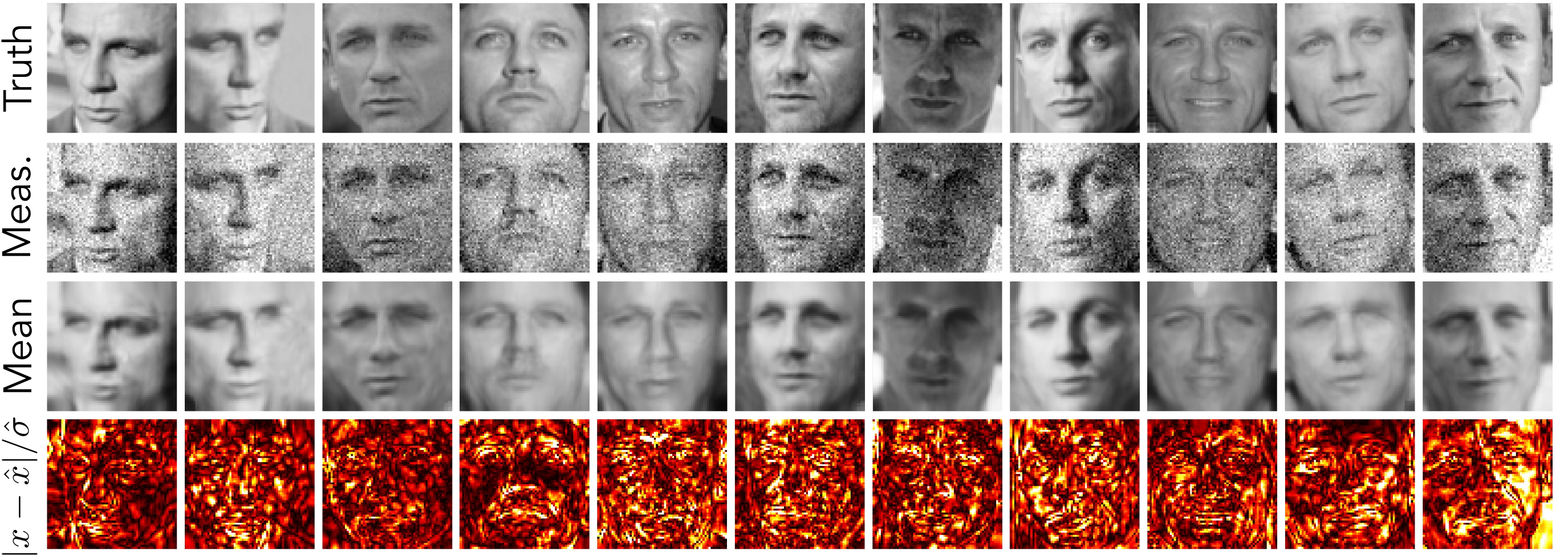}
    \caption{\textbf{Denoising 95 images of celebrity A.} We demonstrate our method described in \ref{sec:learning} using 95 noisy images of a celebrity. Here we show the underlying image (row 1), noisy measurements (row 2), mean reconstruction (row 3), and residual error (row 4) for a subset of the 95 different noisy images. Our reconstructions are much less noisy than the measurements and recover sharp features that are hard to discern in the noisy images. We visualize the residual error normalized by the empirical standard deviation, which indicates errors are largely within 3 standard deviations (refer to the colorbar in Fig.~\ref{fig:MNIST_denoising}). Note that no explicit spatial-domain prior/regularizer was used in denoising. }
    \label{fig:Bond_Denoising}
\end{figure*}



As the previous section illustrates, the $\ELBOProxy$ provides a good criterion for choosing an appropriate IGM from noisy measurements. 
Here, we consider the task of directly inferring the IGM from a collection of measurement examples $y^{(i)} = f^{(i)}(x^{(i)}) + \eta^{(i)}$ for $i \in [N]$, where the parameters are found by optimizing the $\ELBOProxy$. 
The key assumption we make is that common, low-dimensional structure is shared across the underlying images $\{x^{(i)}\}_{i=1}^N$. We propose to find a \textit{shared} generator $G_{\theta}$ with weights $\theta$ along with latent distributions $q_{\phi^{(i)}}$ that can be used to reconstruct the full posterior of each image $x^{(i)}$ from its corresponding measurement example $y^{(i)}$. This approach is illustrated in Fig.~\ref{fig:diagram}. Having the generator be shared across all images helps capture their common collective structure. Each forward model corruption, however, likely induces its own complicated image posteriors. Hence, we assign each measurement example $y^{(i)}$ its own latent distribution to capture the differences in their posteriors. Note that because we optimize a proxy of the ELBO, the inferred distribution may not necessarily be the true image posterior, but it still captures a distribution of images that fit to the observed measurements. 

\paragraph{Inference approach} 
More explicitly, given a collection of measurement examples $\{y^{(i)}\}_{i = 1}^N$, we jointly infer a generator $G_{\theta}$ and a set of variational distributions $\{q_{\phi^{(i)}}\}_{i = 1}^N$  by optimizing a Monte Carlo estimate of the $\ELBOProxy$ from Eq.~\eqref{eqref:ELBOProxy}, described by:
\begin{align}
    \{\hat{\theta}, \hat{\phi}^{(1)},\dots,\hat{\phi}^{(N)}\}&\in \argmax_{\theta, \{\phi^{(i)}\}_{i=1}^N} \mathcal{L}
\end{align} 
where 
\begin{align}
 \mathcal{L} =  & \frac{1}{N}\sum_{i = 1}^N \ELBOProxy(G_{\theta},q_{\phi^{(i)}}; y^{(i)}) + \log p(G_{\theta}). \label{eqref:learning-objective} 
\end{align}

In terms of choices for $\log p(G_{\theta})$, we can add additional regularization to promote particular properties of the IGM $G_{\theta}$, such as having a small Lipschitz constant. Here, we consider having sparse neural network weights as a form of regularization and use dropout \cite{srivastava2014dropout} during training to represent $\log p(G_{\theta})$.
 
 Once a generator $G_{\hat{\theta}}$ and variational parameters $\hat{\phi}^{(i)}$ have been inferred, we solve the $i$-th inverse problem by simply sampling $\hat{x}^{(i)} = G_{\hat{\theta}}(\hat{z}^{(i)})$ where $\hat{z}^{(i)} \sim q_{\hat{\phi}^{(i)}}(z^{(i)})$ or computing an average $\overline{x}^{(i)}=\frac{1}{T}\sum_{t=1}^T G_{\hat{\theta}}(\hat{z}_t^{(i)})$. Producing samples for each inverse problem can help visualize the range of uncertainty under the learned IGM $G_{\hat{\theta}}$, while the expected value of the distribution empirically provides clearer estimates with better metrics in terms of PSNR or MSE. We report PSNR outputs in our subsequent experiments and also visualize the standard deviation of our reconstructions.

\vspace{-2mm}
\section{Experimental Results}
\label{sec:results-joint}


We now consider solving a set of inverse problems via the framework described in \ref{sec:learning}. For each of these experiments, we use a multivariate Gaussian distribution to parameterize each of the posterior distributions $q_{\phi^{(i)}}$ and a Deep Decoder \cite{HH2018} with $6$ layers, $150$ channels in each layer, a latent size of $40$, and a dropout of $10^{-4}$ as the IGM. The multivariate Gaussian distributions are parameterized by means and covariance matrices $\{\mu^{(i)}, \Lambda^{(i)} = U_iU_i^T + \varepsilon I\}_{i = 1}^N$, where $\varepsilon I$ with $\varepsilon = 10^{-3}$ is added to the covariance matrix to help with stability of the optimization. We choose to parameterize the latent distributions using Gaussians for memory
considerations. Note that the same hyperparameters are used for all experiments demonstrating our proposed method.

In our experiments, we also compare to the following baseline methods: AmbientGAN \cite{bora2018ambientgan}, Deep Image Prior (DIP) \cite{ulyanov2018deep}, and regularized maximum likelihood using total variation (TV-RML). AmbientGAN is most similar to our setup, as it constructs an IGM directly from measurement examples; however, it doesn't aim to estimate image reconstruction posteriors, but instead aims to learn an IGM that samples from the full underlying prior. TV-RML uses explicit total variation regularization, while DIP uses an implicit convolutional neural network ``prior''. As we will show, all these baseline methods require fine-tuning hyperparameters to each set of measurements in order to produce their best results. All methods can also be applied to a variety of inverse problems, making them appropriate choices as baselines. 



\begin{figure*}
    \centering
    \includegraphics[width=0.9\textwidth]{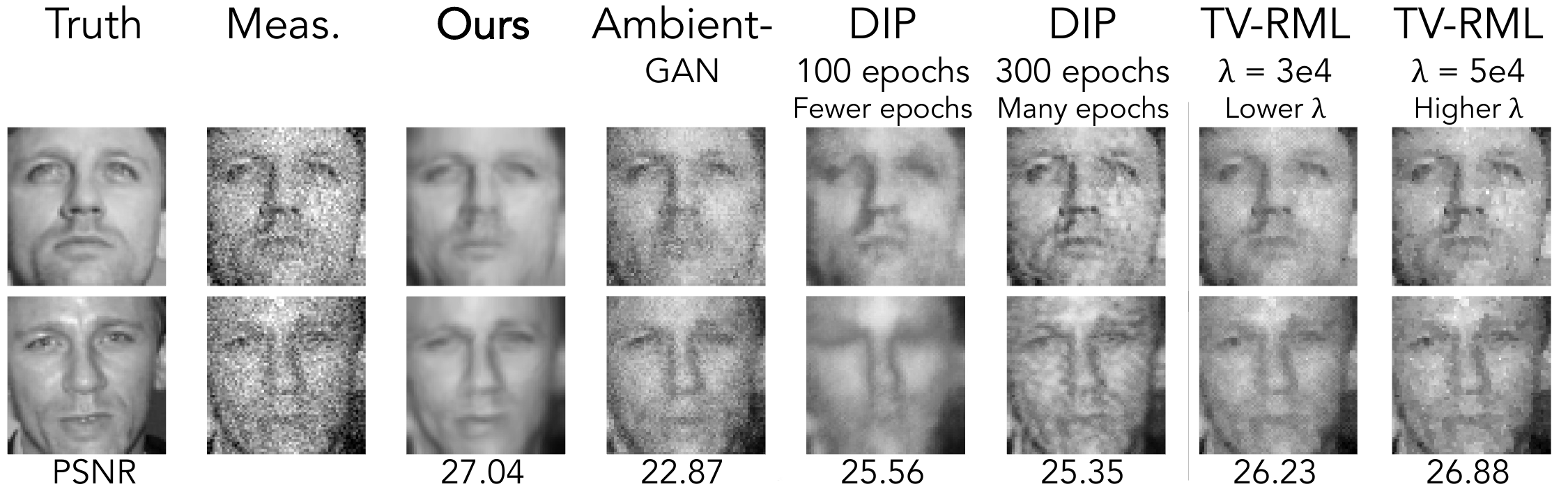}
    \caption{\textbf{Denoising baseline comparisons.} We compare to various baselines (AmbientGAN, Deep Image Prior (DIP), and regularized maximum likelihood using TV (TV-RML) with weight $\lambda$), and we report the average PSNR across all 95 reconstructions. We show both early stopping and full training results using DIP. Our method exhibits higher PSNR than all other baselines. We also include results for baselines that require fine-tuning to demonstrate sensitivity to subjective stopping conditions.}
    \label{fig:bondbaselines}
\end{figure*}

\begin{figure*}
    \centering
    \includegraphics[width=.95\textwidth]{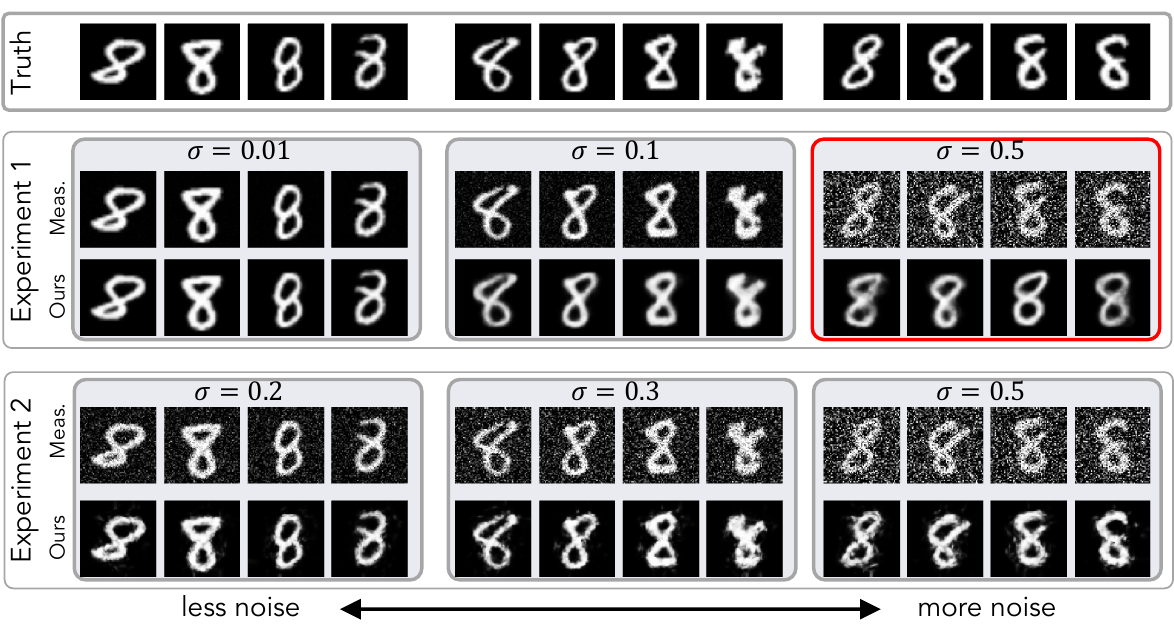}
    \caption{\textbf{Multi-noise denoising.} We demonstrate our method described in Section \ref{sec:learning} to perform denoising on measurement collections experiencing different noise levels. For each experiment, we use 75 measurement examples, which are defined by  $y^{(i)} = x^{(i)} + \eta$ where $\eta \in \{\eta_1, \eta_2, \eta_3\}$ and $\eta_i \sim \mathcal{N}(0, \sigma_i^2 I)$. In Experiment 1, we use noisy measurement examples that have additive noise with standard deviations of 0.01, 0.1, and 0.5. In Experiment 2, we use noisy measurement examples that have additive noise with standard deviations of 0.2, 0.3, and 0.5. We visualize the true underlying images, the measurement used for each experiment, and the mean of the image reconstruction posterior. Most of the reconstructions recover the primary features of the true image. However, in Experiment 1, the reconstructions of the low SNR measurements exhibit bias and do not match the true images well. This is likely because in Experiment 1, the high SNR measurements influence the inferred IGM more strongly than low SNR measurements, leading to biased reconstructions for the reconstructions highlighted in the red box.}
    \label{fig:multi_denoise_some}
\end{figure*}

\subsection{Denoising} \label{sec:denoising-exps}
We show results on denoising a collection of noisy images of 8's from the MNIST dataset in Fig.~\ref{fig:MNIST_denoising} and denoising a collection of noisy images of a single face from the PubFig \cite{kumar2009attribute} dataset in Fig.~\ref{fig:Bond_Denoising}. The measurements for both datasets are defined by $y = x + \eta$ where $\eta \sim \mathcal{N}(0,\sigma^2I)$ with an SNR of $\sim$-3 dB for the MNIST digits and an SNR of $\sim$15 dB for the faces. Our method is able to remove much of the added noise and recovers small scale features, even with only 10's of observations. As shown in Fig.~\ref{fig:MNIST_denoising}, the reconstructions achieved under the learned IGM improves as the number of independent observations increases. Our reconstructions also substantially outperform the baseline methods, as shown in Fig.~\ref{fig:bondbaselines}. Unlike DIP, our method does not overfit and does not require early stopping. Our method does not exhibit noisy artifacts like those seen in all baselines methods, despite such methods being fine-tuned. We show quantitative comparisons in Table~\ref{table:psnr}.

In Fig.~{\ref{fig:multi_denoise_some}} we show additional \textit{multi-noise} denoising experiments where we have $75$ noisy images, which have 3 different noise levels. More formally, $y^{(i)} = x^{(i)} + \eta$ where $\eta \in \{\eta_1, \eta_2, \eta_3\}$ and $\eta_i \sim \mathcal{N}(0, \sigma_i^2 I)$.  In Experiment 1, the noise levels have a wide range, and we use standard deviations of $\{\sigma_1, \sigma_2, \sigma_3\} = \{0.01, 0.1,  0.5\}$. In Experiment 2, the noise levels are much closer together, and we use standard deviations of $\{\sigma_1, \sigma_2, \sigma_3\} = \{0.2, 0.3,  0.5\}$. When the SNRs are similar (as in Experiment 2), the reconstructions match the true underlying images well. However, when the measurements have a wide range of SNRs (i.e., Experiment 1),  the reconstructions from low SNR measurements exhibit bias and poorly reconstruct the true underlying image, as shown in Fig.\ref{fig:multi_denoise_some}. This is likely because the high SNR measurements influence the inferred IGM more strongly than the low SNR measurements. The full set of results are available in the supplemental materials.

\begin{figure}
    \centering
    \includegraphics[width=0.45\textwidth]{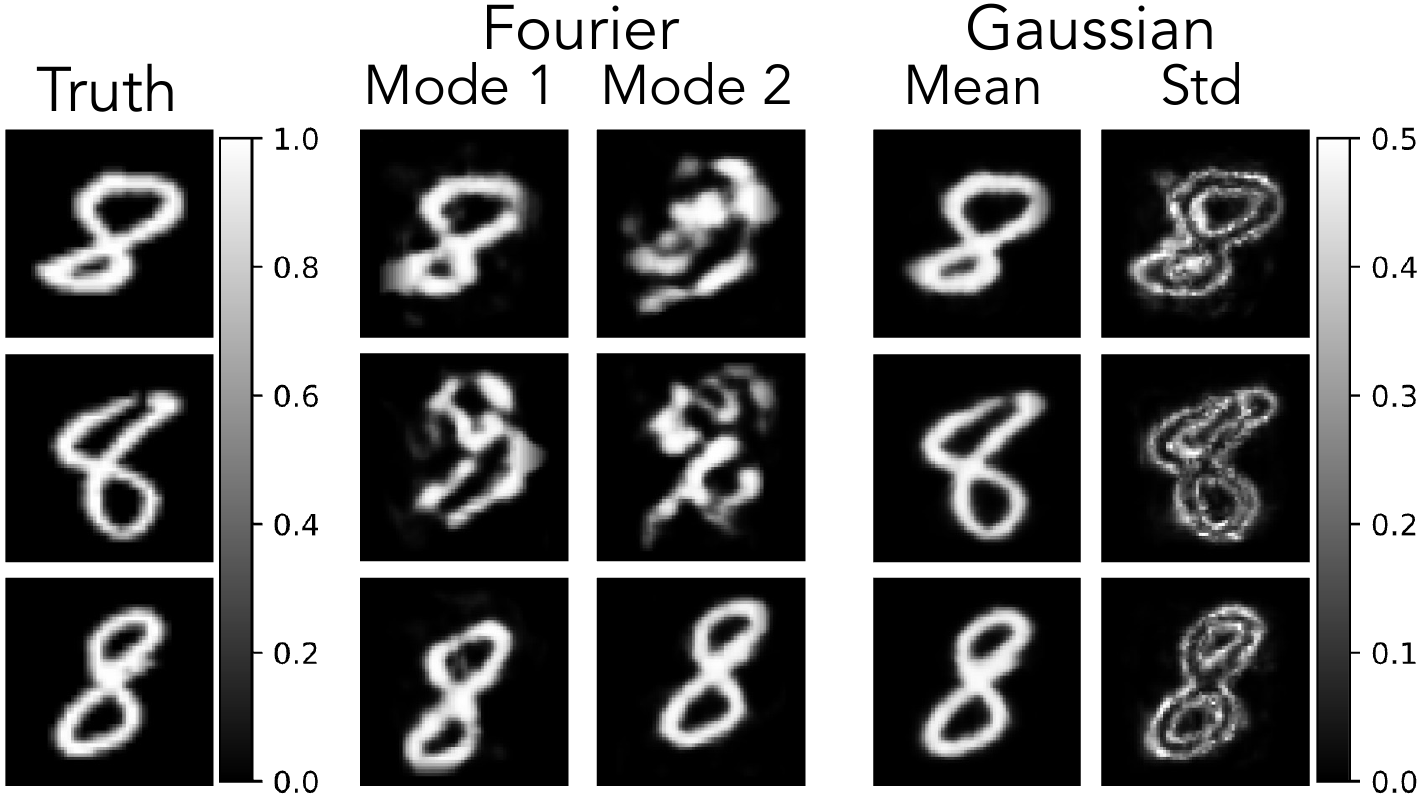}
    \caption{\textbf{Phase retrieval from MNIST 8's.} We demonstrate our method described in \ref{sec:learning} to perform phase retrieval on 150 images. For the Fourier phase retrieval setting, we show examples from the two observed modes of the posterior. For the Gaussian phase retrieval setting, we show the mean and standard deviation of our reconstructions. 
    }
    \label{fig:Mnist_phase_retrieval}
\end{figure}

\begin{figure}
    \centering
    \includegraphics[width=.45\textwidth]{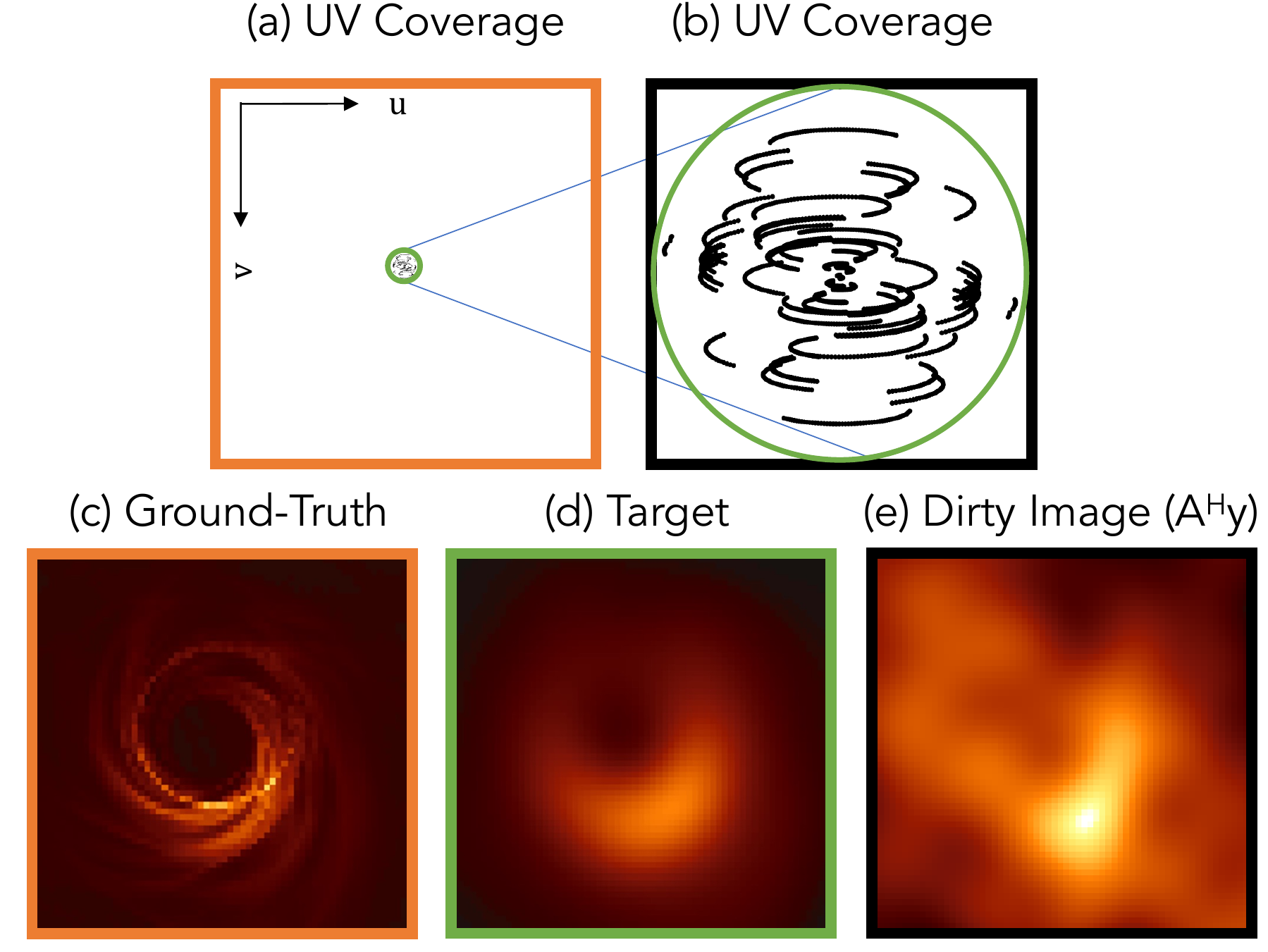}
    \caption{\textbf{Visualization of the intrinsic resolution of the EHT compressed sensing measurements.} The EHT measures sparse spatial frequencies of the image (i.e., components of the image's Fourier transform). In order to generate the underlying image (c), all frequencies in the entire domain of (a) must be used. Restricting spatial frequencies to the ones in (a) and (b)'s green circle generates the target (d), where (b) is a zoom in of (a).
    The EHT samples a subset of the interior of the green circle, indicated by the sparse black samples in (b). Naively recovering an image using only these frequencies results in the \textit{dirty image} (e), which is computed by $A^Hy$. The 2D spatial Fourier frequency coverage represented with $(u, v)$ positions is referred to as the UV coverage.} 
    \label{fig:uvcoverage}
\end{figure}

\vspace{-4mm}
\subsection{Phase retrieval} Here we consider solving non-convex inverse problems, and demonstrate our approach on phase retrieval. Our measurements are described by $y = |\mathcal{F}(x)| + \eta$ where $\mathcal{F}(\cdot)$ is a linear operator and $\eta \sim \mathcal{N}(0, \sigma^2 I)$. We consider two types of measurements, one where $\mathcal{F}(\cdot)$ is the Fourier transform and the other when $\mathcal{F}(\cdot)$ is an  $m \times n$ complex Gaussian matrix with $m = \lceil 0.1 n \rceil$. Since each measurement is the magnitude of complex linear measurements, there is an inherent phase ambiguity in the problem.
Additionally, flipping and spatial-shifts are possible reconstructions when performing Fourier phase retrieval. 
Due to the severe ill-posedness of the problem, representing this complicated posterior that includes all spatial shifts is challenging. Thus, we incorporate an envelope (i.e., a centered rectangular mask) as the final layer of $G_{\theta}$ to encourage the reconstruction to be centered. Nonetheless, flipping and shifts are still possible within this enveloped region. 

\begin{figure*}
    \centering
    \includegraphics[width=0.95\textwidth]{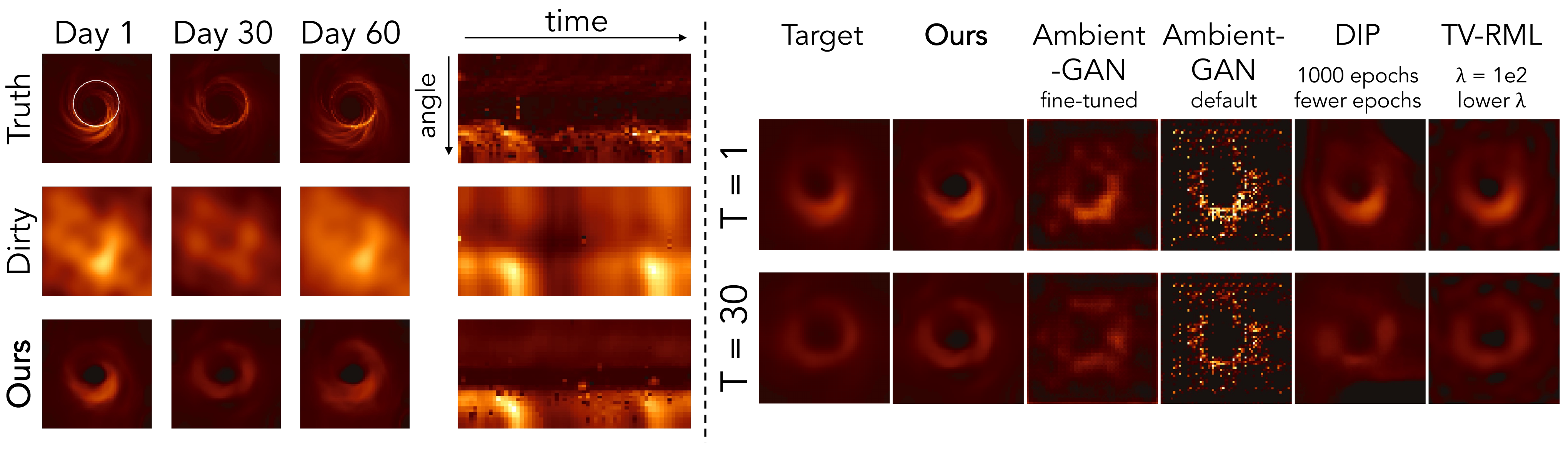}
    \caption{\textbf{Recovering a 60-day video of the M87 black hole with the EHT telescope array.} Left: We demonstrate our method described in \ref{sec:learning} using simulated measurements from 60 frames of an evolving black hole target with the forward model described in Fig.~\ref{fig:uvcoverage}. Here we show the underlying images, dirty images ($A^Hy$, see Fig.~\ref{fig:uvcoverage}), and mean reconstruction, respectively. Additionally, we show the unwrapped space × time image, which is taken along the overlaid white ring illustrated in the day 1 underlying image. The bright-spot's temporal trajectory of our reconstruction matches that of the truth.  Right: We compare our method to various baselines methods. Our results are much sharper and exhibit less artifacts than AmbientGAN and TV-RML with weight $\lambda$. Note that we include results using AmbientGAN with both default parameters and fine-tuned parameters. }
    \label{fig:m87_all}
\end{figure*}

\begin{table*}[h!]
\centering
\begin{adjustbox}{width=1\textwidth,center=\textwidth}
\begin{tabular}{l|l|l|c|ccccc}
\hline
     & &   &  & \multicolumn{5}{c}{Baselines}\\
    &  & Forward model & Ours & AmbGAN & DIP  & DIP  & TV  & TV  \\
               Dataset &  $N$ &   $y = f(x) + \eta$ &          &                fine-tuned &             fewer epochs &            many epochs &      lower $\lambda$ &     higher $\lambda$ \\
\hline
celeb. A &  95&  $y=x+\eta$ &                 \textbf{27.0} &                22.9 &                   25.6 &                   25.4 &               26.2 &               26.9 \\
celeb. B &  95&  $y=x+\eta$ &                 26.2             &    19.7 &                   24.4 &                   25.2 &                25.9 &                \textbf{26.4} \\
  MNIST 8's & 150&   $y=x+\eta$ &                  \textbf{21.1} &                18.0 &                    18.8 &                    13.3 &                16.4 &                18.2 \\
  \hline
  M87* (target) & 60&  $y=Ax+\eta$ &                 \textbf{29.3} &                  25.7 &                    29.0 &                    28.6 &                24.3 &                25.9 \\
\hline
\end{tabular} 
\end{adjustbox}
\caption{\textbf{Quantitative comparisons to baselines.} We show the mean PSNR between the reconstructions of different methods (ours, AmbientGAN \cite{bora2018ambientgan}, DIP \cite{ulyanov2018deep}, and TV-RML) with either the true underlying image or the target in the case of black hole compressed sensing. The highest PSNR for each set of measurements is in bold. We fix the hyperparameters of the baselines across each forward model, empirically selecting them for good performance on each type of problem. Note that the range in performance depends on the choice of hyperparameters. For denoising, the DIP results corresponds to 100 and 300 epochs for the fewer and many epochs baselines, respectively, and the lower and higher $\lambda$'s are 1e3 and 1e4, respectively. For black hole compressed sensing, the DIP results corresponds to 1000 and 3000 epochs for the fewer and many epochs baselines, respectively, and the lower and higher $\lambda$'s are 1e2 and 1e3, respectively. Visual examples of celeb B. are shown in the supplemental materials.}
\label{table:psnr}
\end{table*}

We show results from a set of $N = 150$ noisy phase retrieval measurements from the MNIST 8's class with a SNR of $\sim$52 dB. We consider three settings: 1) all measurements arise from a Gaussian measurement matrix, 2) all measurements arise from Fourier measurements, and 3) half of the measurements are Gaussian and the other half are Fourier. We show qualitative results for cases 1 and 2 in Fig.~\ref{fig:Mnist_phase_retrieval}. In the Gaussian case, we note that our mean reconstructions are nearly identical to the true digits and the standard deviations exhibit uncertainty in regions we would expect (e.g., around edges). In the Fourier case, our reconstructions have features similar to the digit $8$, but contain artifacts. These artifacts are only present in the Fourier case due to additional ambiguities, which lead to a more complex posterior \cite{SparsePRoverview}. We also show the average PSNR of our reconstructions for each measurement model in Table \ref{table:phase-retrieval-psnr}. For more details on this experiment, please see Section \ref{appx:pr-details} of the supplemental materials.

\begin{table}[h!]
\centering
\begin{adjustbox}{width=0.4\textwidth,center=0.5\textwidth}
\begin{tabular}{l|rrr}
\hline
& \multicolumn{3}{c}{Measurement Operator(s)}\\
{} &   Gaussian  &   Fourier  &   Both \\
\hline
Gaussian & 30.8 & --- & 30.2\\
Fourier & --- & 13.6 & 19.4\\
\hline
\end{tabular}
\end{adjustbox}
\caption{\textbf{Phase Retrieval PSNRs for different measurement operators.} Each column corresponds to the type of measurements that the method was given (a total of $N = 150$). In the case of ``Both'', $75$ Gaussian measurements and $75$ Fourier measurements were given. We then show the average PSNR for our reconstructions given the specific measurement operator (either Fourier or Gaussian). Note that when given both Gaussian and Fourier measurement examples our method exhibits a higher PSNR on the recovered images from Fourier examples than when given only Fourier measurements. Additionally, there is only a slight decrease in performance on the Gaussian measurement reconstructions as compared to given entirely Gaussian examples.}
\label{table:phase-retrieval-psnr}
\end{table}

\vspace{-4mm}
\subsection{Black hole imaging}

We consider a real-world inverse problem for which ground-truth data would be impossible to obtain. In particular, we consider a compressed sensing problem inspired by astronomical imaging of black holes with the Event Horizon Telescope (EHT): suppose we are given access to $N$ measurement examples of the form $y^{(i)} = A^{(i)}x^{(i)} + \eta^{(i)}$ where $A^{(i)}\in \mathbb{C}^{m \times n}$ is a low-rank compressed sensing matrix arising from interferometric telescope measurements and $\eta^{(i)}$ denotes noise with known properties (e.g., distributed as a zero-mean Gaussian with known covariance). The collection of images $\{x^{(i)}\}_{i=1}^N$ are snapshots of an evolving black hole target. This problem is ill-posed and requires the use of priors or regularizers to recover a reasonable image \cite{akiyama2019first4}. Moreover, it is impossible to directly acquire example images of black holes, so any pixel-level prior defined {\it a priori} will exhibit human bias. 
Recovering an image, or movie, of a black hole with as little human bias as possible is essential for both studying the astrophysics of black holes as well as testing fundamental physics \cite{m87I, sagAI}.
We show how our proposed method can be used to tackle this important problem.
In particular, we leverage knowledge that, although the black hole evolves, it will not change drastically from 
minute-to-minute or day-to-day. We study two black hole targets: the black hole at the center of the Messier 87 galaxy (M87$^*$) and the black hole at the center of the Milky Way galaxy -- Sagattarius A* (Sgr A$^*$). 

\paragraph{Imaging M87$^*$ using the current EHT array}
We first consider reconstructing the black hole at the center of the Messier 87 galaxy, which does not evolve noticeably within the timescale of a single day. The underlying images are from a simulated 60 frame video with a single frame for each day. We show results on $N = 60$ frames from an evolving black hole target with a diameter of $\sim$40 microarcseconds, as was identified as the diameter of M87 according to \cite{porth2019event, akiyama2019first5} in Fig.~\ref{fig:m87_all}. In particular, the measurements are given by $\{y^{(i)} = Ax^{(i)} + \eta^{(i)}\}_{i=1}^N$, where $x^{(i)}$ is the underlying image on day $i$, $A$ is the forward model that represents the telescope array, which is static across different days, and the noise $\eta^{(i)} \sim \mathcal{N}(0, \Sigma)$ has a covariance of  $\Sigma$, which is a diagonal matrix with realistic variances based on the telescope properties\footnote{We leave the more challenging atmospheric noise that appears in measurements for future work.}. Measurements are simulated from black hole images with a realistic flux of $1$ Jansky \cite{2022apJWong}. We also visualize a reference ``target'' image, which is the underlying image filtered with a low-pass filter that represents the maximum resolution achievable with the telescope array used to collect measurements -- in this case the EHT array consisting of 11 telescopes (see Fig.~\ref{fig:uvcoverage} and Section \ref{appx:EHT-arrays} in the supplemental materials). 

As seen in Fig.~\ref{fig:m87_all}, our method is not only able to reconstruct the large scale features of the underlying image without any aliasing artifacts, but also achieves a level of super-resolution (see Table~\ref{table:superres_table} in the supplemental materials). Our reconstructions also achieve higher super-resolution as compared to our baselines (i.e., AmbientGAN, TV-RML, and DPI) in Fig.~\ref{fig:m87_all} and do not exhibit artifacts evident in the reconstructions from these baselines. The two AmbientGAN settings were qualitatively chosen to show that the final result is sensitive to the choice of hyperparameters. The default AmbientGAN parameters produce poor results, and even with fine-tuning to best fit the underlying images (i.e., cheating with knowledge of the ground-truth), the results still exhibit substantial artifacts. We outperform  the baselines in terms of PSNR when compared to the target image (see Table~\ref{table:psnr}). Our results demonstrate that we are able to capture the varying temporal structure of the black hole, rather than just recovering a static image. It is important to note that there is no explicit temporal regularization introduced; the temporal smoothness is implicitly inferred by the constructed IGM.  



\begin{figure}
    \centering
    \includegraphics[width=0.44\textwidth]{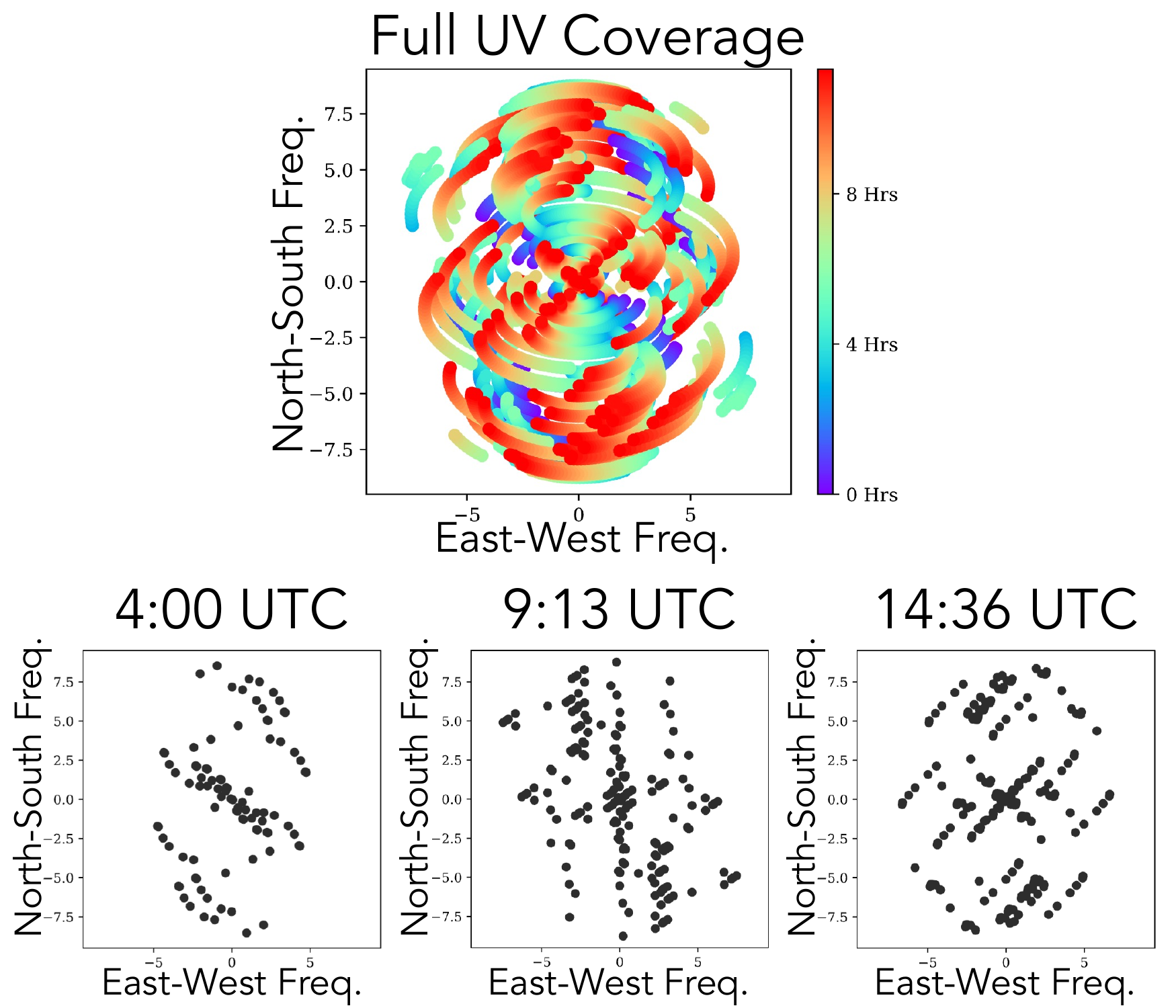}
    \caption{\textbf{Visualization of the EHT forward model of Sgr A$^*$ over time.} The spatial frequency space coordinates of Sgr A$^*$ from an ngEHT \cite{raymond2021evaluation} array measured over a course of one night is shown on the top. Snapshots of the coordinates that are measured at different times are shown on the bottom. Note that the location and sparsity of the measurement coordinates change over time.}
    \label{fig:sagA_coverage}
\end{figure}

\begin{figure*}
    \centering
    \includegraphics[width=0.87\textwidth]{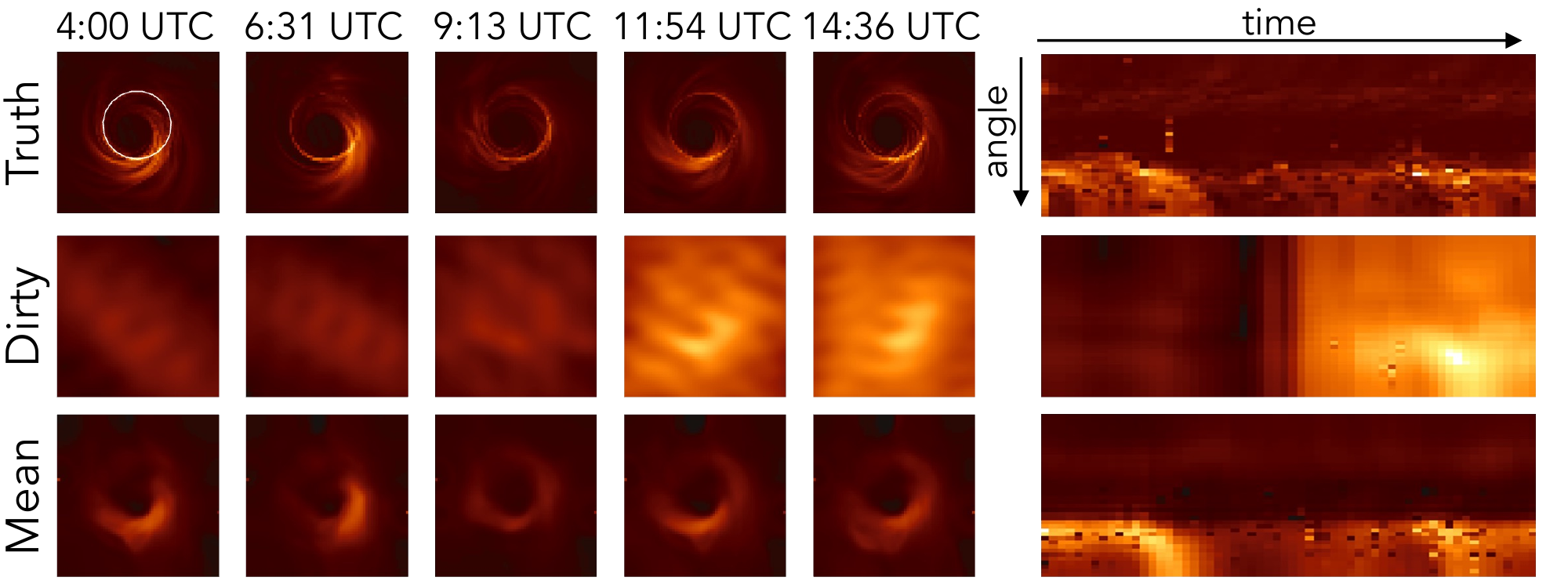}
    \caption{\textbf{Recovering a video of the Sgr A* black hole with a futuristic telescope array over the course of a night.} We demonstrate our method described in \ref{sec:learning} to perform video reconstruction on 60 frames of a video, where the measurements for each frame are generated by different forward models given by imaging Sgr A$^*$. Here we show the underlying image, the dirty image ($A^Hy$), and the mean of the reconstructed posterior. Additionally, we show the unwrapped space × time image, which is taken along the white ring illustrated in the 4:00 UTC underlying image. The bright-spot's temporal trajectory of our reconstruction matches that of the truth. The measurement noise is consistent with a black hole having a flux of 2 Janskys.}
    \label{fig:sagA}
\end{figure*}


\paragraph{Imaging Sgr A$^*$ from multiple forward models}\label{sec:results-joint-multi-fwd}

The framework we introduce can also be applied to situations in which the measurements themselves are induced by different forward models. In particular, the measurements $\{y^{(i)} = f^{(i)}(x^{(i)}) + \eta^{(i)}\}_{i=1}^N$ are given by an underlying image $x^{(i)}$, a forward model $f^{(i)}$ that is specific to that observation, and noise $\eta^{(i)}$ with known properties. 

As an illustrative example, we consider the problem of reconstructing a video of the black hole at the center of the Milky Way -- Sagittarius A$^*$ (Sgr A$^*$). Unlike M87$^*$, Sgr A$^*$ evolves on the order of minutes. 
Therefore, we can only consider that the black hole is static for only a short time when only a subset of telescopes are able to observe the black hole. This results in a different measurement forward model for each frame of the black hole ``movie'' \cite{akiyama2022firstIII}. In particular, the measurements are given by $\{y^{(i)} = A^{(i)}x^{(i)} + \eta^{(i)}\}_{i=1}^N$, where $x^{(i)}$ is the underlying image at time $i$, $A^{(i)}$ is the forward model that incorporates the telescope configuration at that time, and $\eta^{(i)} \sim \mathcal{N}(0, \Sigma^{(i)})$ is noise where $\Sigma^{(i)}$ is a diagonal matrix with realistic standard deviations derived from the telescopes' physical properties. The measurement noise is consistent with a black hole with a flux of 2 Janskys. The measurement operator is illustrated in Fig.~\ref{fig:sagA_coverage}. 

We show examples of reconstructing 60 frames of Sgr A$^*$ with a diameter of $\sim$50 microarcseconds using measurements simulated from a proposed future next-generation EHT (ngEHT) \cite{raymond2021evaluation} array, which consists of 23 telescopes. These results are shown in Fig.~\ref{fig:sagA}. Our reconstructions remove much of the aliasing artifacts evident in the dirty images and reconstructs the primary features of the underlying image without any form of temporal regularization. These results have high fidelity especially considering that the measurements are very sparsely sampled.
Although these results come from simulated measurements that do not account for all forms of noise we expect to encounter, the high-quality movie reconstructions obtained without the use of a spatio-temporal prior show great promise towards scientific discovery that could come from a future telescope array paired with our proposed reconstruction approach.

\vspace{-2mm}
\section{Theory for Linear IGMs}

\label{sec:theory}

We now introduce theoretical results on the inferred IGM for linear inverse problems. Specifically, we consider the case when the IGM $G_{\theta}$ is linear and the latent variational distributions are Gaussian. The goal of this section is to develop intuition for the inferred IGM in a simpler setting. While our results may not generalize to non-linear generators parameterized by deep neural networks, our results aim to provide an initial understanding on the inferred generator.


More concretely, suppose we are given $N$ measurement examples of noisy linear measurements of the form \begin{align*}
    y^{(i)} := Ax^{(i)} + \eta^{(i)},\ \eta^{(i)} \sim \mathcal{N}(0,\sigma^2 I)
\end{align*} where $A \in \R^{m \times n}$ with $m \leqslant n$ and $x^{(i)} \in \R^n$. We aim to infer $G_{\theta} \in \R^{n \times k}$ with $k < n$ and latent distributions $q_{\phi^{(i)}} = \mathcal{N}(\mu_i, U_i U_i^T)$ where $\phi^{(i)} = \{\mu_i, U_i\}$, $\mu_i \in \R^k$, and $U_i \in \R^{k \times k}$ by minimizing the negative ELBOProxy:\begin{align}
     \mathcal{L}(G_{\theta},\{\phi^{(i)}\}) & := \frac{1}{N}\sum_{i=1}^N \E_{z \sim q_{\phi^{(i)}(z)}}[-\log p(y^{(i)} |G_{\theta}(z)) \nonumber \\
     & - \log p_Z(z | G_{\theta}) + \log q_{\phi^{(i)}}(z)]. \label{eq:linear_theory_loss}
\end{align}

We characterize the generator $G_*$ and latent parameters $\phi_*^{(i)} = \{\mu_i^*, U_i^*\}$ that are stationary points of \eqref{eq:linear_theory_loss}. The result is proven in Section \ref{appx:supplementary-theory}: \begin{thm} \label{thm:main-estimators}
Fix $\sigma > 0$ and let $y^{(i)} = Ax^{(i)} + \eta^{(i)} \in \R^n$ for $i \in [N]$ where $\eta^{(i)} \sim \mathcal{N}(0,\sigma^2 I)$ and $A \in \R^{m \times n}$ has $k \leqslant \mathrm{rank}(A) = m\leqslant n$. Define the sample covariance of the measurements $Y_N := \frac{1}{N}\sum_{i=1}^N y^{(i)}(y^{(i)})^T$. Then, with probability $1$,  $G_*$, $\mu_i^*$, and $U_i^*$ that satisfy \begin{align*} AG_* & := E_k(L_k - \sigma^2I)^{1/2}R, \\
\mu_i^* & := (G_*^TA^TAG_* + \sigma^2 I)^{-1}G_*^TA^Ty^{(i)},\ \text{and} \\ U_i^* & := \sigma (G_*^TA^TAG_* + \sigma^2I)^{-1/2}\tilde{R},
\end{align*} for $i \in [N]$ are stationary points of the objective \eqref{eq:linear_theory_loss} over $\R^{n \times k} \times (\R^k \times \mathrm{GL}(k,\R))^N$ where $\mathrm{GL}(k,\R)$ denotes the set of real invertible $k \times k$ matrices. Here, $E_k \in \R^{m \times k}$ denotes the matrix whose columns correspond to the top-$\min\{\mathrm{rank}(Y_N),k\}$ eigenvectors of $Y_N$, $L_k \in \R^{k \times k}$ contains the corresponding eigenvalues of $Y_N$, and $R, \tilde{R} \in \R^{k \times k}$ are arbitrary orthogonal matrices. If $\mathrm{rank}(Y_N) < k$, then for $\mathrm{rank}(Y_N) < i\leqslant k$, the $i$-th column of $E_k$ can be arbitrary and $L_{k,ii} = \sigma^2$.
\end{thm}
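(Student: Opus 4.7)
The plan is to perform a direct first-order analysis of the closed-form loss. Under the Gaussian assumptions $p(y^{(i)}\mid G_\theta(z)) = \mathcal{N}(AG_\theta z,\sigma^2 I)$, $p_Z(z\mid G_\theta) = \mathcal{N}(0,I)$, and $q_{\phi^{(i)}} = \mathcal{N}(\mu_i, U_iU_i^T)$, all three expectations in \eqref{eq:linear_theory_loss} admit closed forms. Up to additive constants, each summand of $\mathcal{L}$ reduces to
\[
\tfrac{1}{2\sigma^2}\|y^{(i)}-AG_\theta\mu_i\|^2 + \tfrac{1}{2\sigma^2}\mathrm{tr}(G_\theta^T A^T A G_\theta\, U_iU_i^T) + \tfrac{1}{2}\|\mu_i\|^2 + \tfrac{1}{2}\mathrm{tr}(U_iU_i^T) - \tfrac{1}{2}\log\det(U_iU_i^T).
\]
A crucial observation is that $\mathcal{L}$ depends on $G_\theta$ only through $W:=AG_\theta$, which is why the theorem specifies $G_*$ only up to its image under $A$ (components in $\ker A$ are free).

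The stationarity conditions in $\mu_i$ and $U_i$ decouple across $i$. Setting $\partial\mathcal{L}/\partial\mu_i = 0$ gives the linear system $(G_\theta^T A^T A G_\theta + \sigma^2 I)\mu_i = G_\theta^T A^T y^{(i)}$, whose unique solution matches the claimed $\mu_i^*$. Differentiating the $U_i$-dependent terms using $\partial_U\log\det(UU^T) = 2U^{-T}$ and $\partial_U\mathrm{tr}(MUU^T) = 2MU$ for symmetric $M$ yields $(G_\theta^T A^T A G_\theta + \sigma^2 I)U_i = \sigma^2 U_i^{-T}$, equivalently $U_i^T M U_i = I$ with $M := \sigma^{-2}(G_\theta^T A^T A G_\theta + \sigma^2 I)$, forcing $U_i = \sigma(G_\theta^T A^T A G_\theta + \sigma^2 I)^{-1/2}\tilde R$ for an arbitrary orthogonal $\tilde R$, as claimed.

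Next I would verify the stationarity condition in $G_\theta$, namely
\[
A^T A G_\theta\cdot\tfrac{1}{N}\!\sum_i(\mu_i\mu_i^T + U_iU_i^T) = A^T\cdot\tfrac{1}{N}\!\sum_i y^{(i)}\mu_i^T.
\]
Substituting $\mu_i^*$ and $U_i^*$ and writing everything in terms of $W$ and the sample covariance $Y_N$ produces the reduced identity $A^T[W(W^T W + \sigma^2 I)^{-1}W^T Y_N W - (Y_N - \sigma^2 I)W] = 0$, which is the classical probabilistic-PCA stationarity condition for $W$ pre-multiplied by $A^T$. Plugging in $W_* = E_k(L_k - \sigma^2 I)^{1/2} R$ and using $Y_N E_k = E_k L_k$, $E_k^T E_k = I$, together with the fact that $L_k$ commutes with $(L_k - \sigma^2 I)^{1/2}$ since both are diagonal, a one-line eigendecomposition computation gives that both $W_*(W_*^T W_* + \sigma^2 I)^{-1}W_*^T Y_N W_*$ and $(Y_N - \sigma^2 I)W_*$ simplify to $E_k(L_k - \sigma^2 I)^{3/2} R$, so the bracketed expression vanishes identically. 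The rank-deficient case is handled transparently because columns of $E_k$ corresponding to zero entries of $L_k - \sigma^2 I$ are annihilated by the square-root factor and may be chosen arbitrarily; the almost-sure qualifier accommodates genericity of $Y_N$'s eigenstructure (e.g.\ distinctness of the top-$k$ eigenvalues, so that the top-$k$ eigendecomposition is well-defined up to sign).

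The main obstacle I anticipate is the bookkeeping in the substitution step: both $\frac{1}{N}\sum_i\mu_i^*(\mu_i^*)^T = (W^T W + \sigma^2 I)^{-1}W^T Y_N W(W^T W + \sigma^2 I)^{-1}$ and $\frac{1}{N}\sum_i U_i^*(U_i^*)^T = \sigma^2(W^T W + \sigma^2 I)^{-1}$, as well as $\frac{1}{N}\sum_i y^{(i)}(\mu_i^*)^T = Y_N W(W^T W + \sigma^2 I)^{-1}$, must combine cleanly after multiplication by the remaining $(W^T W + \sigma^2 I)$ factor to expose the pPCA structure. Once this algebraic collapse is carried out, the specific verification at $W_*$ is immediate, and the orthogonal degrees of freedom in $R$ and $\tilde R$ as well as the $\mathrm{rank}(Y_N) < k$ edge case follow without further work.
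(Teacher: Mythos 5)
Your proposal is correct and follows essentially the same route as the paper: compute the closed-form Gaussian loss, solve the decoupled stationarity conditions for $\mu_i$ and $U_i$, substitute back, and reduce the $G_\theta$-condition to a probabilistic-PCA equation in $W=AG_\theta$. The only difference is in the last step, where you verify directly that $W_*=E_k(L_k-\sigma^2I)^{1/2}R$ satisfies the reduced condition (which suffices for the theorem as stated), whereas the paper uses the invertibility of $AA^T$ and an SVD of $AG_\theta$ to additionally derive that every stationary $AG_\theta$ must have this form.
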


The Theorem establishes the precise form of stationary points of the objective \eqref{eq:linear_theory_loss}. In particular, it shows that this inferred IGM performs dimensionality reduction akin to PCA \cite{TippingBishop} on the collection of measurement examples. To gain further intuition about the Theorem, in Section \ref{appx:supplementary-theory-denoising-ex} of the supplemental materials, we analyze this result in a context where the underlying images we wish to reconstruct explicitly lie in a low-dimensional subspace. We show that, in that setting, our estimator returns an approximation of the solution found via MAP estimation, which can only be computed with complete prior knowledge of the underlying image structure. While the Theorem focused on linear IGMs, it would be interesting to theoretically analyze non-linear IGMs parameterized by deep networks. We leave this for future work.
\vspace{-2mm}
\section{Conclusion}
In this work we showcased how one can solve a set of inverse problems without a pre-defined IGM (e.g., a traditional spatial image prior) by leveraging common structure present across a collection of diverse underlying images. We demonstrated that even with a small set of corrupted measurements, one can jointly solve these inverse problems by directly inferring
an IGM that maximizes a proxy of the ELBO. We demonstrate our method on a number of convex and non-convex imaging problems, including the challenging problem of black hole video reconstruction from interferometric measurements.
Overall, our work showcases the possibility of solving inverse problems in a completely unsupervised fashion, free from significant human bias typical of ill-posed image reconstruction. We believe our approach can aid in automatic discovery of novel structure from scientific measurements, potentially paving the way to new avenues of exploration.

\bibliographystyle{plain}
\bibliography{references.bib}

\begin{thebibliography}{1}

\bibitem{dinh2016densitySupp}
Laurent Dinh, Jascha Sohl-Dickstein, and Samy Bengio.
\newblock Density estimation using real nvp.
\newblock 2016.

\bibitem{HH2018Supp}
Reinhard Heckel and Paul Hand.
\newblock Deep decoder: Concise image representations from untrained
  non-convolutional networks.
\newblock {\em International Conference on Learning Representations (ICLR)},
  2019.

\bibitem{kingma2014adamSupp}
Diederik~P. Kingma and Jimmy Ba.
\newblock Adam: A method for stochastic optimization.
\newblock In Yoshua Bengio and Yann LeCun, editors, {\em 3rd International
  Conference on Learning Representations, ICLR 2015, San Diego, CA, USA, May
  7-9, 2015, Conference Track Proceedings}, 2015.

\bibitem{KingmaWelling14Supp}
Diederik~P. Kingma and Max Welling.
\newblock Auto-encoding variational bayes.
\newblock {\em International Conference on Learning Representations (ICLR)},
  2014.

\bibitem{VershyninHDPSupp}
Roman Vershynin.
\newblock High-dimensional probability: An introduction with applications in
  data science.
\newblock {\em Cambridge University Press}, 2018.

\bibitem{Wedin1973}
Per-Ake Wedin.
\newblock Perturbation theory for pseudo-inverses.
\newblock {\em BIT Numerical Mathematics volume}, 13:217--232, 1973.

\end{thebibliography}


\begin{thebibliography}{10}

\bibitem{akaike1974new}
Hirotugu Akaike.
\newblock A new look at the statistical model identification.
\newblock {\em IEEE transactions on automatic control}, 19(6):716--723, 1974.

\bibitem{Mardani2017}
Marcus Alley, Joseph~Y. Cheng, William Dally, Enhao Gong, Song Hann, Morteza
  Mardani, John~M. Pauly, Neil Thakur, Shreyas Vasanawala, Lei Xing, and Greg
  Zaharchuk.
\newblock Deep generative adversarial networks for compressed sensing (gancs)
  automates mri.
\newblock {\em arXiv preprint arXiv:1706.00051}, 2017.

\bibitem{batson2019noise2self}
Joshua Batson and Loic Royer.
\newblock Noise2self: Blind denoising by self-supervision.
\newblock In {\em International Conference on Machine Learning}, pages
  524--533. PMLR, 2019.

\bibitem{Boraetal17}
Ashish Bora, Ajil Jalal, Eric Price, and Alexandros Dimakis.
\newblock Compressed sensing using generative models.
\newblock {\em International Conference on Machine Learning (ICML)}, 2017.

\bibitem{bora2018ambientgan}
Ashish Bora, Eric Price, and Alexandros~G Dimakis.
\newblock Ambientgan: Generative models from lossy measurements.
\newblock In {\em International conference on learning representations}, 2018.

\bibitem{Burger2012}
Harold~C. Burger, Christian~J. Schuler, and Stefan Harmeling.
\newblock Image denoising: Can plain neural networks compete with bm3d?
\newblock {\em IEEE Conference on Computer Vision and Pattern Recognition
  (CVPR)}, 2012.

\bibitem{CandesGranda}
Emmanuel~J. Cand\`{e}s and Carlos Fernandez-Granda.
\newblock Towards a mathematical theory of super-resolution.
\newblock {\em Communications on Pure and Applied Mathematics}, 67(6):906--956,
  2013.

\bibitem{CandesRombergTao2006}
Emmanuel~J. Cand\`{e}s, Justin~K. Romberg, and Terence Tao.
\newblock Stable signal recovery from incomplete and inaccurate measurements.
\newblock {\em Communications on Pure and Applied Mathematics},
  59(8):1207--1223, 2006.

\bibitem{Chengetal19}
Zezhou Cheng, Matheus Gadelha, Subhransu Maji, and Daniel Sheldon.
\newblock A bayesian perspective on the deep image prior.
\newblock {\em IEEE Conference on Computer Vision and Pattern Recognition
  (CVPR)}, pages 5438–--5446, 2019.

\bibitem{Abdellatif18}
Badr-Eddine Ch\'{e}rief-Abdellatif.
\newblock Consistency of elbo maximization for model selection.
\newblock {\em Proceedings of The 1st Symposium on Advances in Approximate
  Bayesian Inference}, 96:1--21, 2018.

\bibitem{AbdellatifAlquier18}
Badr-Eddine Ch\'{e}rief-Abdellatif and Pierre Alquier.
\newblock Consistency of variational bayes inference for estimation and model
  selection in mixtures.
\newblock {\em Electronic Journal of Statistics}, 12(2):2995--3035, 2018.

\bibitem{m87I}
The Event Horizon~Telescope Collaboration.
\newblock First m87 event horizon telescope results. i. the shadow of the
  supermassive black hole.
\newblock {\em The Astrophysical Journal Letters}, 875(1):L1, Apr 2019.

\bibitem{akiyama2019first4}
The Event Horizon~Telescope Collaboration.
\newblock First m87 event horizon telescope results. iv. imaging the central
  supermassive black hole.
\newblock {\em The Astrophysical Journal Letters}, 875(1):L4, 2019.

\bibitem{akiyama2019first5}
The Event Horizon~Telescope Collaboration.
\newblock First m87 event horizon telescope results. v. physical origin of the
  asymmetric ring.
\newblock {\em The Astrophysical Journal Letters}, 875(1):L5, 2019.

\bibitem{sagAI}
The Event Horizon~Telescope Collaboration.
\newblock First sagittarius a* event horizon telescope results. i. the shadow
  of the supermassive black hole in the center of the milky way.
\newblock {\em The Astrophysical Journal Letters}, 930(2):L12, May 2022.

\bibitem{akiyama2022firstIII}
The Event Horizon~Telescope Collaboration.
\newblock First sagittarius a* event horizon telescope results. iii. imaging of
  the galactic center supermassive black hole.
\newblock {\em The Astrophysical Journal Letters}, 930(2):L14, 2022.

\bibitem{DarestaniHeckel}
Mohammad~Zalbagi Darestani and Reinhard Heckel.
\newblock Accelerated mri with un-trained neural networks.
\newblock {\em IEEE Transactions of Computational Imaging}, 7:724--733, 2021.

\bibitem{dinh2016density}
Laurent Dinh, Jascha Sohl-Dickstein, and Samy Bengio.
\newblock Density estimation using real nvp.
\newblock 2016.

\bibitem{Donoho2006}
David Donoho.
\newblock For most large underdetermined systems of linear equations the
  minimal l1-norm solution is also the sparsest solution.
\newblock {\em Communications on Pure and Applied Mathematics}, 59(6), 2006.

\bibitem{dyer2022lightsheet}
Laura Dyer, Andrew Parker, Keanu Paphiti, and Jeremy Sanderson.
\newblock Lightsheet microscopy.
\newblock {\em Current Protocols}, 2(7):e448, 2022.

\bibitem{Numerics-of-PR}
Albert Fannjiang and Thomas Strohmer.
\newblock The numerics of phase retrieval.
\newblock {\em Acta Numerica}, 29:125 -- 228, 2020.

\bibitem{gao2023image}
Angela~F. Gao, Oscar Leong, He~Sun, and Katie~L. Bouman.
\newblock Image reconstruction without explicit priors.
\newblock In {\em IEEE International Conference on Acoustics, Speech, and
  Signal Processing (ICASSP)}, 2023.

\bibitem{HLV18}
Paul Hand, Oscar Leong, and Vladislav Voroninski.
\newblock Phase retrieval under a generative prior.
\newblock {\em Advances in Neural Information Processing Systems (NeurIPS)},
  2018.

\bibitem{HH2018}
Reinhard Heckel and Paul Hand.
\newblock Deep decoder: Concise image representations from untrained
  non-convolutional networks.
\newblock {\em International Conference on Learning Representations (ICLR)},
  2019.

\bibitem{Heckeletal2018}
Reinhard Heckel, Wen Huang, Paul Hand, and Vladislav Voroninski.
\newblock Rate-optimal denoising with deep neural networks.
\newblock {\em Information and Inference: A Journal of the IMA}, 2020.

\bibitem{TV-ROF}
Leonid I.Rudin, Stanley Osher, and Emad Fatemi.
\newblock Nonlinear total variation based noise removal algorithms.
\newblock {\em Physica D: Nonlinear Phenomena}, 60:259--268, 1992.

\bibitem{SparsePRoverview}
Kishore Jaganathan, Samet Oymak, and Babak Hassibi.
\newblock Sparse phase retrieval: Convex algorithms and limitations.
\newblock {\em 2013 IEEE International Symposium on Information Theory
  Proceedings (ISIT)}, pages 1022--1026, 2013.

\bibitem{kabkab2018task}
Maya Kabkab, Pouya Samangouei, and Rama Chellappa.
\newblock Task-aware compressed sensing with generative adversarial networks.
\newblock In {\em Proceedings of the AAAI Conference on Artificial
  Intelligence}, volume~32, 2018.

\bibitem{kothari2021trumpets}
Konik Kothari, AmirEhsan Khorashadizadeh, Maarten de~Hoop, and Ivan
  Dokmani{\'c}.
\newblock Trumpets: Injective flows for inference and inverse problems.
\newblock In {\em Uncertainty in Artificial Intelligence}, pages 1269--1278.
  PMLR, 2021.

\bibitem{Krulletal19}
Alexander Krull, Tim-Oliver Buchholz, and Florian Jug.
\newblock Noise2void - learning denoising from single noisy images.
\newblock {\em Proceedings of the IEEE Conference on Computer Vision and
  Pattern Recognition}, pages 2129--2137, 2019.

\bibitem{kumar2009attribute}
Neeraj Kumar, Alexander~C Berg, Peter~N Belhumeur, and Shree~K Nayar.
\newblock Attribute and simile classifiers for face verification.
\newblock In {\em 2009 IEEE 12th international conference on computer vision},
  pages 365--372. IEEE, 2009.

\bibitem{MNIST}
Yann LeCun, Leon Bottou, Yoshua Bengio, and Patrick Haffner.
\newblock Gradient-based learning applied to document recognition.
\newblock {\em Proceedings of the IEEE}, 86(11):2278–--2324, 1998.

\bibitem{Lehtinenetal18}
Jaakko Lehtinen, Jacob Munkberg, Jon Hasselgren, Samuli Laine, Tero Karras,
  Miika Aittala, and Timo Aila.
\newblock Noise2noise: Learning image restoration without clean data.
\newblock {\em Proceedings of the 35th International Conference on Machine
  Learning}, 80:2965--2974, 2018.

\bibitem{Levinetal09}
Anat Levin, Yair Weiss, Fredo Durand, and William~T. Freeman.
\newblock Understanding and evaluating blind deconvolution algorithms.
\newblock {\em IEEE Conference on Computer Vision and Pattern Recognition
  (CVPR)}, pages 1964--1971, 2009.

\bibitem{Lietal2016}
Xiaodong Li, Shuyang Ling, Thomas Strohmer, and Ke~Wei.
\newblock Rapid, robust, and reliable blind deconvolution via nonconvex
  optimization.
\newblock {\em arXiv preprint}, arXiv:1606.04933, 2016.

\bibitem{liu2020rare}
Jiaming Liu, Yu~Sun, Cihat Eldeniz, Weijie Gan, Hongyu An, and Ulugbek~S
  Kamilov.
\newblock Rare: Image reconstruction using deep priors learned without
  groundtruth.
\newblock {\em IEEE Journal of Selected Topics in Signal Processing},
  14(6):1088--1099, 2020.

\bibitem{mallat1999wavelet}
St{\'e}phane Mallat.
\newblock {\em A wavelet tour of signal processing}.
\newblock Elsevier, 1999.

\bibitem{Mendozaetal22}
Ray Mendoza, Minh Nguyen, Judith~Weng Zhu, Vincent Dumont, Talita Perciano,
  Juliane Mueller, and Vidya Ganapati.
\newblock A self-supervised approach to reconstruction in sparse x-ray computed
  tomography.
\newblock In {\em NeurIPS Machine Learning and the Physical Sciences Workshop},
  2022.

\bibitem{PULSE_CVPR_2020}
Sachit Menon, Alex Damian, McCourt Hu, Nikhil Ravi, and Cynthia Rudin.
\newblock Pulse: Self-supervised photo upsampling via latent space exploration
  of generative models.
\newblock In {\em The IEEE Conference on Computer Vision and Pattern
  Recognition (CVPR)}, June 2020.

\bibitem{Miles-Rsquared}
Jeremy Miles.
\newblock R squared, adjusted r squared.
\newblock {\em Encyclopedia of Statistics in Behavioral Science}, 2005.

\bibitem{Olsenetal22}
Andrew Olsen, Yolanda Hu, and Vidya Ganapati.
\newblock Data-driven computational imaging for scientific discovery.
\newblock In {\em NeurIPS AI for Science Workshop}, 2022.

\bibitem{porth2019event}
Oliver Porth, Koushik Chatterjee, Ramesh Narayan, Charles~F Gammie, Yosuke
  Mizuno, Peter Anninos, John~G Baker, Matteo Bugli, Chi-kwan Chan, Jordy
  Davelaar, et~al.
\newblock The event horizon general relativistic magnetohydrodynamic code
  comparison project.
\newblock {\em The Astrophysical Journal Supplement Series}, 243(2):26, 2019.

\bibitem{raymond2021evaluation}
Alexander~W Raymond, Daniel Palumbo, Scott~N Paine, Lindy Blackburn,
  Rodrigo~C{\'o}rdova Rosado, Sheperd~S Doeleman, Joseph~R Farah, Michael~D
  Johnson, Freek Roelofs, Remo~PJ Tilanus, et~al.
\newblock Evaluation of new submillimeter vlbi sites for the event horizon
  telescope.
\newblock {\em The Astrophysical Journal Supplement Series}, 253(1):5, 2021.

\bibitem{romano2017RED}
Yaniv Romano, Michael Elad, and Peyman Milanfar.
\newblock The little engine that could: Regularization by denoising (red).
\newblock {\em SIAM Journal on Imaging Sciences}, 10(4):1804--1844, 2017.

\bibitem{U-Net-paper}
Olaf Ronneberger, Philipp Fischer, and Thomas Brox.
\newblock U-net: Convolutional networks for biomedical image segmentation.
\newblock {\em Medical Image Computing and Computer-Assisted Intervention –
  MICCAI 2015}, 9351:234–241, 2015.

\bibitem{BIC}
Gideon~E. Schwarz.
\newblock Estimating the dimension of a model.
\newblock {\em Annals of Statistics}, 6(2):461--464, 1978.

\bibitem{ShamshadAhmed21}
Fahad Shamshad and Ali Ahmed.
\newblock Compressed sensing-based robust phase retrieval via deep generative
  priors.
\newblock {\em IEEE Sensors Journal}, 21(2):2286 -- 2298, 2017.

\bibitem{Songetal22}
Yang Song, Liyue Shen, Lei Xing, and Stefano Ermon.
\newblock Solving inverse problems in medical imaging with score-based
  generative models.
\newblock {\em International Conference on Learning Representations (ICLR)},
  2022.

\bibitem{srivastava2014dropout}
Nitish Srivastava, Geoffrey Hinton, Alex Krizhevsky, Ilya Sutskever, and Ruslan
  Salakhutdinov.
\newblock Dropout: a simple way to prevent neural networks from overfitting.
\newblock {\em The journal of machine learning research}, 15(1):1929--1958,
  2014.

\bibitem{stone1974cross}
Mervyn Stone.
\newblock Cross-validatory choice and assessment of statistical predictions.
\newblock {\em Journal of the royal statistical society: Series B
  (Methodological)}, 36(2):111--133, 1974.

\bibitem{Sunetal-alphaDPI}
He~Sun, Katherine~L. Bouman, Paul Tiede, Jason~J. Wang, Sarah Blunt, and
  Dimitri Mawet.
\newblock alpha-deep probabilistic inference (alpha-dpi): efficient uncertainty
  quantification from exoplanet astrometry to black hole feature extraction.
\newblock {\em The Astrophysical Journal}, 932(2):99, 2022.

\bibitem{Taoetal18}
Chenyang Tao, Liqun Chen, Ruiyi Zhang, Ricardo Henao, and Lawrence Carin.
\newblock Variational inference and model selection with generalized evidence
  bounds.
\newblock {\em International Conference on Machine Learning (ICML)},
  80:893--902, 2018.

\bibitem{TippingBishop}
Michael~E. Tipping and Christopher~M. Bishop.
\newblock Probabilistic principal component analys.
\newblock {\em Journal of the Royal Statistical Society: Series B (Statistical
  Methodology)}, 61(3):611–--622, 1999.

\bibitem{ulyanov2018deep}
Dmitry Ulyanov, Andrea Vedaldi, and Victor Lempitsky.
\newblock Deep image prior.
\newblock In {\em Proceedings of the IEEE conference on computer vision and
  pattern recognition}, pages 9446--9454, 2018.

\bibitem{ulyanov2018deepjournal}
Dmitry Ulyanov, Andrea Vedaldi, and Victor Lempitsky.
\newblock Deep image prior.
\newblock {\em International Journal of Computer Vision}, 128:1867–1888,
  2020.

\bibitem{venkatakrishnan2013pnp}
Singanallur~V Venkatakrishnan, Charles~A Bouman, and Brendt Wohlberg.
\newblock Plug-and-play priors for model based reconstruction.
\newblock In {\em 2013 IEEE Global Conference on Signal and Information
  Processing}, pages 945--948. IEEE, 2013.

\bibitem{2022apJWong}
George~N. {Wong}, Ben~S. {Prather}, Vedant {Dhruv}, Benjamin~R. {Ryan}, Monika
  {Mo{\'s}cibrodzka}, Chi-kwan {Chan}, Abhishek~V. {Joshi}, Ricardo {Yarza},
  Angelo {Ricarte}, Hotaka {Shiokawa}, Joshua~C. {Dolence}, Scott~C. {Noble},
  Jonathan~C. {McKinney}, and Charles~F. {Gammie}.
\newblock {PATOKA: Simulating Electromagnetic Observables of Black Hole
  Accretion}.
\newblock {\em The Astrophysical Journal Supplement Series}, 259(2):64, April
  2022.

\end{thebibliography}

\newpage
\tableofcontents

\section{Experimental details}
\label{appx:supplementary-exp-details}

 \paragraph{Model selection} We parameterize the latent variational distributions $q_{\phi}(z)$ with Normalizing Flows $l_{\phi}$. In particular, suppose our latent variables $z = l_{\phi}(s)$ where $s \sim \mathcal{N}(0,I)$. Then, \begin{align*}
     \log q_{\phi}(z) = \log \pi (s) - \log\left|\det\frac{dl_{\phi}(s)}{ds}\right|
 \end{align*} where $\pi(\cdot)$ denotes the likelihood of the base distribution of $s$, which in this case is a Gaussian. For our prior on $z$ given $G_{\theta}$, we use an isotropic Gaussian prior. Then the $\ELBOProxy$ equation becomes \begin{align*}
    \ELBOProxy(G_{\theta}, q_{\phi}; y) & := \E_{z \sim l_{\phi}(z)} [\log p(y|G(z)) \\
    & + \log p(z|G) - \log q_{\phi}(z)] \\
    & = \E_{s \sim \mathcal{N}(0,I)} \Big[\log p(y|G_{\theta}(l_{\phi}(s))) \\
    & + \frac{1}{2}\|l_{\phi}(s)\|^2_2 - \log \pi (s) \\
    & + \log\left|\det\frac{dl_{\phi}(s)}{ds}\right|\Big].
 \end{align*} Note that the expectation over $\log\pi(s)$ is constant with respect to the parameters $\phi$, so we only optimize the remaining terms.



\section{Inference Details}

\paragraph{Optimization details for model selection} Given a generator $G_c$ for a class $c$, we optimized the $\ELBOProxy$ with respect to the parameters $\phi$ of the Normalizing Flow. For denoising, we optimized for $5000$ epochs with Adam \citePhys{kingma2014adamSupp} and a learning rate of $10^{-4}$ and set $T = 200$. For phase retrieval, we optimized for $7500$ epochs with Adam and a learning rate of $10^{-4}$ and set $T = 1000$. The entire experiment took approximately $10/15$ hours for denoising/phase retrieval, respectively, on a single NVIDIA V100 GPU. When optimizing the flow $h_{\phi}$ for a particular generator $G_c$, we record the parameters with the best $\ELBOProxy$ value. This is the final value recorded in the figure. To generate the mean and standard deviation, we used $100$ samples.

\paragraph{Network details} The generator $G_c$ corresponds to the decoder part of a VAE \citePhys{KingmaWelling14Supp} trained on the training set of class $c$ of MNIST. We used a convolutional architecture, where the encoder and decoder architecture contained $4$ convolutional layers. There are also linear, fully-connected layers in between to learn the latent representation, which had dimension $16.$ The Normalizing Flows used the RealNVP \citePhys{dinh2016densitySupp} architecture. In both problems, the network had $16$ affine coupling layers with linear layers, Leaky ReLU activation functions, and batch normalization. After each affine-coupling block, the dimensions were randomly permuted.

\subsection{Inferring the IGM} 

\paragraph{Variational distribution} We parameterize the posterior using Gaussian distributions due to memory constraints; the memory of the whole setup with Normalizing Flows using 75 images is around 5 times larger than the setup using a Gaussian variational family. However, the Gaussian variational family is likely not as good of a variational distribution, so there will be a wider margin between $p(y|G)$ and the ELBO. Additionally, the Gaussian variational family will not be as good at capturing multimodal distributions. Note that we do not notice overfitting so there is no need for early stopping.

\paragraph{IGM network details} We use a Deep Decoder \citePhys{HH2018Supp} generator architecture as the IGM with a dropout of $10^{-4}$. The Deep Decoder has 6 layers with 150 channels each, and has a latent size of 40. There is a sigmoid activation as the final layer of the generator.

\paragraph{Optimization details} We train for 99,000 epochs using the Adam \citePhys{kingma2014adamSupp} optimizer with a learning rate of $10^{-4}$, which takes around 2 hours for 5 images, 17 hours for 75 images, and 33 hours for 150 images on a single NVIDIA RTX A6000 GPU. The memory usage is around 11000MiB for 5 images, 24000MiB for 75 images, and
39000MiB for 150 images. We use a batch size of 20 samples per measurement $i$, and we perform batch gradient descent based on all measurements $i \in \{1, \dots, N\}$.

\section{Event Horizon Telescope Array} \label{appx:EHT-arrays}
\subsection{Array used for the M87$^*$ results}

\begin{tabular}{lrrr}
NAME & X & Y & Z\\
\hline
PDB & 4523998 & 468045 & 4460310\\
PV & 5088968 & -301682 & 3825016\\
SMT & -1828796 & -5054407 & 3427865\\
SMA & -5464523 & -2493147 & 2150612\\
LMT & -768714 & -5988542 & 2063276\\
ALMA & 2225061 & -5440057 & -2481681\\
APEX & 2225040 & -5441198 & -2479303\\
JCMT & -5464585 & -2493001 & 2150654\\
CARMA & -2397431 & -4482019 & 3843524\\
KP & -1995679 & -5037318 & 3357328\\
GLT & 1500692 & -1191735 & 6066409\\
\end{tabular}

\subsection{Array used for the Sgr A$^*$ results}

\begin{tabular}{lrrr}
NAME & X & Y & Z\\
\hline
ALMA & 2225061 & -5440057 & -2481681\\
APEX & 2225040 & -5441198 & -2479303\\
GLT & 1500692 & -1191735 & 6066409\\
JCMT & -5464585 & -2493001 & 2150654\\
KP & -1995679 & -5037318 & 3357328\\
LMT & -768714 & -5988542 & 2063276\\
SMA & -5464523 & -2493147 & 2150612\\
SMT & -1828796 & -5054407 & 3427865\\
BAJA & -2352576 & -4940331 & 3271508\\
BAR & -2363000 & -4445000 & 3907000\\
CAT & 1569000 & -4559000 & -4163000\\
CNI & 5311000 & -1725000 & 3075000\\
GAM & 5648000 & 1619000 & -2477000\\
GARS & 1538000 & -2462000 & -5659000\\
HAY & 1521000 & -4417000 & 4327000\\
NZ & -4540000 & 719000 & -4409000\\
OVRO & -2409598 & -4478348 & 3838607\\
SGO & 1832000 & -5034000 & -3455000\\
CAS & 1373000 & -3399000 & -5201000\\
LLA & 2325338 & -5341460 & -2599661\\
PIKE & -1285000 & -4797000 & 3994000\\
PV & 5088968 & -301682 & 3825016\\
\end{tabular}

\section{Supplemental Theoretical Results}
\label{appx:supplementary-theory}

We present proofs of some of the theoretical results that appeared in the main body of the paper, along with new results that were not discussed. Throughout the proofs, we let $\|\cdot\|_2$, $\|\cdot\|$, and $\|\cdot\|_F$ denote the Euclidean norm for vectors, the spectral norm for matrices, and the Frobenius norm for matrices, respectively. For a matrix $M \in \R^{m \times n}$, let $\sigma_i(M)$ and $\lambda_i(M)$ denote the $i$-th singular value and eigenvalue, respectively (ordered in decreasing order). Let $I_n$ denote the $n \times n$ identity matrix. We may use $I$ for simplicity when the size of the matrix is clear from context. To avoid cumbersome notation, we sometimes use the shorthand notation $\E_z[\cdot]$ for $\E_{z \sim \mathcal{D}}[\cdot]$. 

\subsection{Proof of Theorem \ref{thm:main-estimators}}

We first focus on proving Theorem \ref{thm:main-estimators}. To show this, we first establish that the $\ELBOProxy$ loss can be evaluated in closed-form:
\begin{prop} \label{prop:loss_closed_form}
Fix $\sigma > 0$ and suppose $y^{(i)} := Ax^{(i)} + \eta^{(i)},\ \eta^{(i)} \sim \mathcal{N}(0,\sigma^2 I)$ for $i \in [N]$ where $A \in \R^{m \times n}$. Then the loss $\Lcal : \R^{n \times k} \times (\R^k \times \R^{k \times k})^N \rightarrow \R$ in \eqref{eq:linear_theory_loss} can be written as \begin{align*}
    \frac{1}{N}& \sum_{i=1}^N\Big[\frac{1}{2\sigma^2}\left(\|y^{(i)}\|_2^2 + \|AG_{\theta}U_i\|_F^2 + \|AG_{\theta}\mu_i\|_2^2\right) \\
   &- \frac{1}{\sigma^2} \langle y^{(i)}, AG_{\theta} \mu_i\rangle   + \frac{1}{2}(\|U_i\|_F^2 + \|\mu_i\|_2^2) \\
   & - \log\det(U_i)\Big]  -\frac{k}{2}- \frac{k}{2}\log (2\pi).
\end{align*}
\end{prop}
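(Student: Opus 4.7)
The plan is to carry out three direct expectation computations, one for each summand in the negative $\ELBOProxy$, and then collect terms. Since $y^{(i)}\mid z \sim \mathcal{N}(AG_\theta z, \sigma^2 I)$ and $z \sim q_{\phi^{(i)}} = \mathcal{N}(\mu_i, U_i U_i^T)$, every term is a quadratic polynomial in $z$ integrated against a Gaussian, so each evaluation reduces to the first two moments $\E[z] = \mu_i$ and $\E[zz^T] = U_i U_i^T + \mu_i\mu_i^T$.

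First I would handle the data-fit term. Writing $-\log p(y^{(i)}\mid G_\theta(z)) = \tfrac{1}{2\sigma^2}\|y^{(i)} - AG_\theta z\|_2^2 + \tfrac{m}{2}\log(2\pi\sigma^2)$ and expanding the squared norm, the cross term contributes $-\tfrac{1}{\sigma^2}\langle y^{(i)}, AG_\theta \mu_i\rangle$, while the quadratic term gives
\begin{equation*}
\E\bigl[\|AG_\theta z\|_2^2\bigr] = \mathrm{tr}\bigl(G_\theta^T A^T A G_\theta \E[zz^T]\bigr) = \|AG_\theta \mu_i\|_2^2 + \|AG_\theta U_i\|_F^2,
\end{equation*}
using the identity $\mathrm{tr}(G_\theta^T A^T A G_\theta U_i U_i^T) = \|AG_\theta U_i\|_F^2$. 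This reproduces the first line of the target expression.

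Next, for the prior term with $p_Z(z\mid G_\theta) = \mathcal{N}(0,I_k)$, I have $-\log p_Z(z\mid G_\theta) = \tfrac{1}{2}\|z\|_2^2 + \tfrac{k}{2}\log(2\pi)$, whose expectation is $\tfrac{1}{2}(\|\mu_i\|_2^2 + \|U_i\|_F^2) + \tfrac{k}{2}\log(2\pi)$ by the trace identity applied to $\E[zz^T]$. For the entropy contribution I would invoke the standard Gaussian differential entropy: with covariance $\Sigma_i = U_i U_i^T$,
\begin{equation*}
\E_{z\sim q_{\phi^{(i)}}}[\log q_{\phi^{(i)}}(z)] = -\tfrac{k}{2}(1+\log 2\pi) - \tfrac{1}{2}\log\det(U_i U_i^T) = -\tfrac{k}{2} - \tfrac{k}{2}\log(2\pi) - \log\det(U_i),
\end{equation*}
taking $U_i$ with positive determinant so that $\log|\det U_i| = \log\det U_i$.

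Finally I would add the three pieces and simplify. The $\tfrac{k}{2}\log(2\pi)$ contributions from the prior and entropy cancel, leaving the constant $-\tfrac{k}{2} - \tfrac{k}{2}\log(2\pi)$ after absorbing (or dropping) the data-fit normalizer $\tfrac{m}{2}\log(2\pi\sigma^2)$, which is independent of $G_\theta, \mu_i, U_i$. The resulting expression matches the claimed formula term-for-term. There is no real obstacle here; the argument is a bookkeeping exercise in Gaussian second-moment identities, and the only subtlety is keeping signs and the $\log\det$ convention straight.
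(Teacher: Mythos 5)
Your proposal is correct and follows essentially the same route as the paper's proof: a term-by-term Gaussian expectation computation, with the data-fit cross and quadratic terms, the isotropic prior term, and the Gaussian entropy term handled exactly as in the paper (the paper reparameterizes $v = U_i z + \mu_i$ with $z \sim \mathcal{N}(0,I_k)$ and a small moment lemma, whereas you apply the equivalent second-moment trace identities to $\E[zz^T] = U_iU_i^T + \mu_i\mu_i^T$ directly). Your explicit remarks on dropping the $\tfrac{m}{2}\log(2\pi\sigma^2)$ normalizer and on the $\log\lvert\det U_i\rvert$ sign convention match the paper's implicit conventions.
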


\begin{proof}[Proof of Proposition \ref{prop:loss_closed_form}]

Since the loss is separable in $N$ and expectation is linear, we can consider computing the loss for a single $i \in [N]$. Dropping the indices for notational simplicity, consider a fixed $y = Ax + \eta \in \R^m$ where $\eta \sim \Ncal(0,\sigma^2I_m)$. We aim to calculate \begin{align*}
     \E_{v}\left[- \log p(y | G_{\theta}(v)) - \log p(v |G_{\theta}) + \log q_{\phi}(v)\right]
\end{align*} where $p(y|G_{\theta}(v))$ is Gaussian with variance $\sigma^2$, $p(v|G_{\theta})$ is isotropic Gaussian, and $v \sim q_{\phi}=\Ncal(\mu, UU^T)$. Note that $v \sim \Ncal(\mu, UU^T)$ is equivalent to $v = Uz + \mu$ for $z \sim \mathcal{N}(0,I_k)$. We can calculate the individual terms in the loss as follows. We first collect some useful results, whose proofs are deferred until after this proof is complete:
\begin{lemma} \label{lem:helpful-expectations}
For any matrix $U \in \R^{d \times k}$, we have $\E_z\|Uz\|_2^2 = \|U\|_F^2$ where $z \sim \mathcal{N}(0,I_k)$. Moreover, if $UU^T$ is invertible, then $\E_z[z^TU^T(UU^T)^{-1}Uz] = d$.
\end{lemma}

Now, for the first term, since the noise is Gaussian with variance $\sigma^2$, it is equal to \begin{align*}
     \E_z\frac{1}{2\sigma^2}\|y & - AG_{\theta}(Uz + \mu)\|_2^2  = \frac{1}{2\sigma^2}\|y\|_2^2 \\
     &+ \E_z\frac{1}{2\sigma^2}\|AG_{\theta}(Uz + \mu)\|_2^2 \\
     & - \frac{1}{\sigma^2}\E_z\langle y, AG_{\theta}(Uz + \mu)\rangle \\
    & = \frac{1}{2\sigma^2}\left(\|y\|_2^2 + \|AG_{\theta}U\|_F^2 + \|AG_{\theta}\mu\|_2^2\right) \\
    & - \frac{1}{\sigma^2} \langle y, AG_{\theta} \mu\rangle
\end{align*} where we used the first part of Lemma \ref{lem:helpful-expectations} in the third equality. For the prior term, we can calculate using Lemma \ref{lem:helpful-expectations} again

 \begin{align*}
    \E_{v }\left[- \log p(v | G_{\theta})\right] & = \E_z\frac{1}{2}\|Uz + \mu\|_2^2 \\
    & = \frac{1}{2}(\|U\|_F^2 + \|\mu\|_2^2).
\end{align*} Finally, we can calculate the entropy term as

 \begin{align*}
    \E_{v}& \left[\log q_{\phi}(v)\right] = - \frac{1}{2}\E_{v }\left[(v - \mu)^T(UU^T)^{-1}(v-\mu)\right] \\
    & - \frac{k}{2}\log (2\pi) - \frac{1}{2}\log|\det(UU^T)| \\
    & =  - \frac{1}{2}\E_{z}\left[(Uz +\mu - \mu)^T(UU^T)^{-1}(Uz + \mu-\mu)\right] \\
    & - \frac{k}{2}\log (2\pi) - \frac{1}{2}\log|\det(UU^T)| \\
    & = - \frac{1}{2}\E_{z}\left[z^TU^T(UU^T)^{-1}Uz\right] \\
    & - \frac{k}{2}\log (2\pi) - \frac{1}{2}\log|\det(UU^T)| \\
    & =  -\frac{k}{2}- \frac{k}{2}\log (2\pi) - \log\det(U)
\end{align*} where we used the second half of Lemma \ref{lem:helpful-expectations} in the last equality and $\log |\det(UU^T)| = 2\log \det(U)$ in the last equality. Adding the terms achieves the desired result.
\end{proof}

\begin{proof}[Proof of Lemma \ref{lem:helpful-expectations}] For the first claim, let $U = ESV^T$ denote the singular value decomposition of $U$ with $E$ and $V$ square and unitary. Then by the rotational invariance of the Gaussian distribution and the $\ell_2$-norm, we have \begin{align*}
    \E_z\|Uz\|_2^2 = \E_z\|ESV^Tz\|_2^2 & = \E_z\|Sz\|_2^2 \\
    & = \sum_{i=1}^{\mathrm{rank}(U)}\sigma_i(U)^2\E_{z_i}[z_i^2] \\
    & = \sum_{i=1}^{\mathrm{rank}(U)}\sigma_i(U)^2 = \|U\|_F^2.
\end{align*} For the second claim, recall that $z^TU^T(UU^T)^{-1}Uz = \langle U^T(UU^T)^{-1}U, zz^T\rangle$ where $\langle A,B\rangle = \mathrm{Tr}(A^TB)$ is the Hilbert-Schmidt inner product for matrices. Using linearity of the expectation and the cyclic property of the trace operator, we get \begin{align*}
\E_z[z^TU^T(UU^T)^{-1}Uz]& = \E_z\langle U^T(UU^T)^{-1}U, zz^T\rangle  \\
& =\langle U^T(UU^T)^{-1}U, \E_z[zz^T]\rangle \\
& =\langle U^T(UU^T)^{-1}U, I_n\rangle  \\
& = \mathrm{Tr}(U^T(UU^T)^{-1}U) \\
& = \mathrm{Tr}(UU^T(UU^T)^{-1}) \\
& = \mathrm{Tr}(I_d) = d.
\end{align*}
\end{proof}

We now prove Theorem \ref{thm:main-estimators}. Our proof essentially computes the stationary conditions of the loss $\Lcal$ and computes the first-order critical points directly. 

\begin{proof}[Proof of Theorem \ref{thm:main-estimators}]

 Using standard matrix calculus, the gradient of the objective $\Lcal$ with respect to each parameter is given by  \begin{align*}
    \nabla_{G_{\theta}}\Lcal & = \frac{1}{\sigma^2}\Big[\frac{1}{N}\sum_{i=1}^NA^TAG_{\theta}(U_iU_i^T + \mu_i\mu_i^T) \\
    & - \frac{1}{N}\sum_{i=1}^NA^Ty^{(i)}\mu_i^T\Big], \\
    \nabla_{U_i}\Lcal & =  \frac{1}{\sigma^2}G_{\theta}^TA^TAG_{\theta}U_i + U_i - (U_i^T)^{-1},\ \text{and}\\
    \nabla_{\mu_i}\Lcal & = \frac{1}{\sigma^2}(G_{\theta}^TA^TAG_{\theta}\mu_i - G_{\theta}^TA^Ty^{(i)}) + \mu_i.
\end{align*} For a fixed $G_{\theta}$, we can directly compute which $\mu_i$ and $U_i$ satisfy the stationary equations $\nabla_{\mu_i}\Lcal = 0$ and $\nabla_{U_i} \Lcal = 0$: \begin{align*}
    \mu_i & = (G_{\theta}^TA^TAG_{\theta} + \sigma^2I)^{-1}G_{\theta}^TA^Ty^{(i)} \text{and}\\
    U_i & = \sigma(G_{\theta}^TA^TAG_{\theta} + \sigma^2I)^{-1/2}\tilde{R}
\end{align*} where $\tilde{R}$ is an arbitrary orthogonal matrix.

We now calculate, given such $\mu_i$ and $U_i$, which $G_{\theta}$ satisfies the condition $\nabla_{G_{\theta}}\Lcal = 0$. For notational simplicity, let $\Sigma_{\sigma} := G_{\theta}^TA^TAG_{\theta} + \sigma^2I$. Plugging in $\mu_i$ and $U_i$ into the stationary condition $\nabla_{G_{\theta}}\Lcal = 0$, this equation reads \begin{align*}
    A^TAG_{\theta}(\sigma^2 I_k + & \Sigma_{\sigma}^{-1}G_{\theta}^TA^TY_NAG_{\theta})\Sigma_{\sigma}^{-1} \\
    & = A^TY_NAG_{\theta}\Sigma_{\sigma}^{-1}.
\end{align*} Since $A$ has rank $m$, $AA^T$ is invertible, so we can multiply both sides of the equation on the left by $(AA^T)^{-1}A$ and the right by $\Sigma_{\sigma}$ to obtain \begin{align}
    (I_m - AG_{\theta}\Sigma_{\sigma}^{-1}G_{\theta}^TA^T)Y_N AG_{\theta} = \sigma^2 AG_{\theta}. \label{eq:stat_cond}
\end{align} Now, consider the SVD of $AG_{\theta} = ESV^T$ where $E \in \R^{m \times m}$ and $V \in \R^{k \times k}$ are orthogonal and $S \in \R^{m \times k}$ is rectangular with non-zero diagonal submatrix containing the singular values of $AG_{\theta}$ denoted by $\tilde{S} \in \R^{k \times k}$. First, observe that \begin{align*}
    AG_{\theta}\Sigma_{\sigma}^{-1}G_{\theta}^TA^T & = ESV^T(VS^TSV^T + \sigma^2 I_k)^{-1}VS^TE^T \\
    & = ESV^T(V\tilde{S}^2 V^T + \sigma^2 I_k)^{-1}VS^TE^T \\
    & = ESV^TV(\tilde{S}^2  + \sigma^2 I_k)^{-1}V^TVS^TE^T \\
    & = ES(\tilde{S}^2 + \sigma^2I_k)^{-1}S^TE^T \\
    & = E\left[\begin{array}{cc}
        \tilde{S}^2(\tilde{S}^2 + \sigma^2 I_k)^{-1} & 0 \\
        0 & 0
    \end{array}\right]E^T.
\end{align*} This implies that $I_m - AG_{\theta}\Sigma_{\sigma}^{-1}G_{\theta}^TA^T = E\hat{S}_{\sigma}E^T$ where \begin{align*}
    \hat{S}_{\sigma} := \left[\begin{array}{cc}
        I_k - \tilde{S}^2(\tilde{S}^2 + \sigma^2 I_k)^{-1}  &  0\\
        0 &  I_{m-k}
    \end{array}\right].
\end{align*} Note that for $\sigma > 0$, $\hat{S}_{\sigma}$ is invertible. Hence, our stationary equation \eqref{eq:stat_cond} reads \begin{align*}
    E\hat{S}_{\sigma}E^TY_NESV^T = \sigma^2 ESV^T.
\end{align*} Using the orthogonality of $E$ and $V$ and invertibility of $\hat{S}_{\sigma}$, we can further simplify the above equation to \begin{align*}
    Y_N ES = \sigma^2 E\hat{S}_{\sigma}^{-1}S.
\end{align*} 

Now, suppose $S_{ii} = \sigma_i(AG_{\theta}) = \sigma_i \neq 0$. Then we have that the $i$-th column of $E$, denoted $e_i$, satisfies \begin{align*}
    Y_N e_i = \sigma^2\left(1 - \frac{\sigma_i^2}{\sigma_i^2 + \sigma^2}\right)^{-1}e_i.
\end{align*} That is, $e_i$ is an eigenvector of $Y_N$ with eigenvalue $\lambda_i = \sigma^2(1 - \sigma_i^2/(\sigma_i^2 + \sigma^2))^{-1}.$ Observe that this equation is satisfied if and only if $\sigma_i = \pm\sqrt{\lambda_i - \sigma^2}$. Since singular values are non-negative, we must have $\sigma_i = \sqrt{\lambda_i - \sigma^2}.$ Note that if $S_{ii} = 0$, then $e_i$ can be arbitrary. Thus, we obtain that in order for $G_{\theta}$ to satisfy the stationary condition, it must satisfy \begin{align*}
    AG_{\theta} = E_k(L_k - \sigma^2 I)^{1/2}R
\end{align*} where $E_k$ contains the top-$\min\{\mathrm{rank}(Y_N), k\}$ eigenvectors of $Y_N$, $L_k$ is a diagonal matrix whose $i$-th entry is $\lambda_i$ with corresponding eigenvector $e_i$ and $\sigma^2$ otherwise, and $R$ is an arbitrary orthogonal matrix.

\end{proof}

\subsection{Application: Denoising Data From a Subspace} \label{appx:supplementary-theory-denoising-ex}

In order to gain further intuition about Theorem \ref{thm:main-estimators}, it will be useful to consider an example where the data explicitly lies in a low-dimensional subspace. We will then compare our inferred estimator with an estimator that has access to complete prior knowledge about the data.

Consider the denoising problem $A = I$ and suppose that each ground-truth image is drawn at random from a $k$-dimensional subspace as follows: $x^{(i)} = Gz^{(i)}$ where $G \in \R^{n \times k}$ and $z^{(i)} \sim \mathcal{N}(0,I)$. If we had access to the ground-truth generating matrix $G$ and knew the underlying distribution in the latent space $\mathcal{N}(0,I)$, then one could denoise the $i$-th image $y^{(i)}$ by returning the following MAP estimate $\hat{x}^{(i)}$ in the range of $G$: \begin{align*}
   \hat{x}^{(i)} & := G\hat{z}^{(i)},\\
   \text{where}\ \hat{z}^{(i)} & := \argmin_{z \in \R^k} \frac{1}{2\sigma^2}\|y^{(i)} - Gz\|_2^2 + \frac{1}{2}\|z\|_2^2 \\
   & = (G^TG + \sigma^2I)^{-1}G^Ty^{(i)}.
\end{align*} Note that this is the mean of the posterior $p(x|y^{(i)},G)$. Now, to compare solutions, observe that our (mean) estimator given by Theorem \ref{thm:main-estimators} is  \begin{align*}
    G_*\mu_i^* = G_*(G_*^TG_* + \sigma^2I)^{-1}G_*^Ty^{(i)}.
\end{align*} Recall that $G_*$ contains the top eigenvectors and eigenvalues of $Y_N$, the sample covariance of the measurements. In the case that our data $x^{(i)} = Gz^{(i)}$, note that for sufficiently many examples $N$, the sample covariance $Y_N$ is close to its mean $\E_{z,\eta} Y_N$. That is, for large $N$, $$Y_N \approx \E_{z,\eta} Y_N =  GG^T + \sigma^2 I.$$ To see this more concretely, let $G = ELV^T$ denote the SVD of $G$. Then $GG^T + \sigma^2 I = E\hat{L}_{\sigma}E^T$ where $\hat{L}_{\sigma}$ has $\sigma_i(G)^2 + \sigma^2$ in the first $k$ diagonal entries and $\sigma^2$ in the remaining $n-k$ entries. So in the limit of increasing data, $G_*$ estimates the top-$k$ eigenvectors of $\E_{z,\eta} Y_N$, which are precisely the top-$k$ principal components of $G$. Moreover, the singular values of $G_*$ are $\sqrt{\lambda_i - \sigma^2}$ where $\lambda_i = \lambda_i(Y_N)$. As data increases, $\lambda_i(Y_N) \approx \lambda_i(\E_{z,\eta} Y_N) =  \sigma_i(G)^2 + \sigma^2$ so that $\sigma_i(G_*) = \sqrt{\lambda_i - \sigma^2} \approx \sigma_i(G)$. Thus the singular values of $G_*$ also approach those of $G$. 

A natural question to consider is how large the number of examples $N$ would need to be in order for such an approximation to hold. Ideally, this approximation would hold as long as the number of examples $N$ scales like the intrinsic dimensionality of the underlying images.  For example, the following result shows that in the noiseless case, this approximation holds as long as the number of examples $N$ scales like $k$, the dimension of the subspace. The proof follows from standard matrix concentration bounds, e.g., Theorem 4.6.1 in \citePhys{VershyninHDPSupp}.
\begin{prop}
Fix $G \in \R^{n \times k}$ and let $y^{(i)} = Gz^{(i)}$ where $z^{(i)} \sim \mathcal{N}(0,I)$ for $i \in [N]$. Then, for any $\varepsilon > 0$, there exists positive absolute constants $C_1$ and $C_2$ such that the following holds: if $N \geqslant C_1\varepsilon^{-2} k$, then with probability $1 - 2e^{- k}$ the sample covariance $Y_N := \frac{1}{N}\sum_{i=1}^N y^{(i)}(y^{(i)})^T$ obeys \begin{align*}
    \left\|Y_N - GG^T\right\| \leqslant C_2 \|G\|^2 \varepsilon.
\end{align*}
\end{prop}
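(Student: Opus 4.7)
The strategy is to reduce the problem to a $k$-dimensional matrix concentration statement and then lift the bound back through $G$. First I would observe that $y^{(i)} = Gz^{(i)}$ lives in $\mathrm{range}(G)$, and the sample covariance factors as
\begin{align*}
    Y_N = \frac{1}{N}\sum_{i=1}^N Gz^{(i)}(z^{(i)})^TG^T = G Z_N G^T, \quad \text{where } Z_N := \frac{1}{N}\sum_{i=1}^N z^{(i)}(z^{(i)})^T.
\end{align*}
Since $z^{(i)} \sim \mathcal{N}(0,I_k)$ has population covariance $I_k$, we can write $GG^T = G I_k G^T$, so
\begin{align*}
    \|Y_N - GG^T\| = \|G(Z_N - I_k)G^T\| \leqslant \|G\|^2 \, \|Z_N - I_k\|.
\end{align*}
Thus the problem is reduced to controlling $\|Z_N - I_k\|$, the deviation of the sample covariance of $N$ i.i.d.\ standard Gaussian vectors in $\R^k$ from its mean.

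Next I would invoke the standard sub-Gaussian matrix concentration inequality (Vershynin, Theorem 4.6.1): for i.i.d.\ sub-Gaussian vectors $z^{(i)} \in \R^k$ with $\E[z^{(i)}(z^{(i)})^T] = I_k$ and sub-Gaussian norm bounded by an absolute constant, one has
\begin{align*}
    \|Z_N - I_k\| \leqslant C\left(\sqrt{\tfrac{k}{N}} + \tfrac{t}{\sqrt{N}}\right) + C\left(\sqrt{\tfrac{k}{N}} + \tfrac{t}{\sqrt{N}}\right)^2
\end{align*}
with probability at least $1 - 2e^{-t^2}$, for an absolute constant $C$. Choosing $t = \sqrt{k}$ reduces the bound to $C'\sqrt{k/N}$ (up to lower-order terms) with probability at least $1 - 2e^{-k}$. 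Then picking $C_1$ large enough so that $N \geqslant C_1 \varepsilon^{-2} k$ forces $\sqrt{k/N} \leqslant \varepsilon/(2C')$, giving $\|Z_N - I_k\| \leqslant \varepsilon$ (after possibly absorbing the constant into a redefined $C_2$).

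Combining the two steps and taking $C_2$ to be the constant emerging from the concentration bound yields
\begin{align*}
    \|Y_N - GG^T\| \leqslant \|G\|^2 \|Z_N - I_k\| \leqslant C_2\|G\|^2 \varepsilon
\end{align*}
on the same high-probability event. The proof is essentially a citation once the factorization $Y_N = G Z_N G^T$ is recognized; the only mild subtlety is to verify the sub-Gaussian norm of standard Gaussian vectors is an absolute constant (so that no dimension dependence enters through $K$) and to carefully match the quadratic term $\max(\delta,\delta^2)$ in the matrix concentration statement to the linear scaling claimed here, which works out cleanly because $\varepsilon$ can be taken $\leqslant 1$ without loss of generality (larger $\varepsilon$ makes the conclusion trivial).
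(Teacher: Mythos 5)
Your proof is correct and matches the paper's argument: the paper proves this proposition by direct appeal to Theorem 4.6.1 of Vershynin's \emph{High-Dimensional Probability}, which is exactly the reduction you carry out (factoring $Y_N = GZ_NG^T$, bounding $\|G(Z_N-I_k)G^T\|\leqslant\|G\|^2\|Z_N-I_k\|$ by submultiplicativity, and applying the sub-Gaussian covariance concentration bound with $t=\sqrt{k}$). Your handling of the $\max(\delta,\delta^2)$ term and the absolute sub-Gaussian norm of standard Gaussians is the right level of care here.
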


Hence, in the case of data lying explicitly in a low-dimensional subspace, our estimator returns an approximation of the solution found via MAP estimation, which can only be computed with complete prior knowledge of the underlying image structure. In contrast, our estimator infers this subspace from measurement information only and its performance improves as the number of examples increases. The approximation quality of the estimator is intimately related to the number of examples and the intrinsic dimensionality of the images, which we show can be quantified in simple settings. While our main result focused on IGMs inferred with linear models, it is an interesting future direction to theoretically understand properties of non-linear IGMs given by deep networks and what can be said for more complicated forward models. We leave this for future work.

\subsection{Additional Theoretical Results}

We can further analyze properties of the model found by minimizing $\Lcal$ from Theorem \ref{thm:main-estimators}. In particular, we can characterize specific properties of the inferred mean $\mu_i^*$ as a function of the amount of noise in our measurements and a fixed IGM $G_{\theta} \in \R^{n \times k}$. Specifically, given measurements $y^{(i)}$ and an IGM $G_{\theta}$, recall that the inferred mean computed in Theorem \ref{thm:main-estimators} is given by \begin{align*}
    \mu_i^*(\sigma) := \sigma^{-2}(\sigma^{-2}\cdot G_{\theta}^TA^TAG_{\theta} + I)^{-1}G_{\theta}^TA^Ty^{(i)}
\end{align*} We will give an explicit form for $\mu_i^*(\sigma)$ as the amount of noise goes to $0$ and show that, in the case our IGM is correct ($G_{\theta} = G$ for some ground-truth $G$), the mean recovers the ideal latent code with vanishing noise.

\begin{prop}
Suppose $A \in \R^{m \times n}$ and $G \in \R^{n \times k}$ with $k \leqslant m \leqslant n$. Let $y^{(i)} := AGz_i + \eta_i$ where $z_i \in \R^k$ and $\eta_i \sim \mathcal{N}(0,\sigma^2 I)$ for $i = 1,\dots,N$. Then, with probability $1$, for any $G_{\theta} \in \R^{n \times k}$, the estimator $\mu_i^*(\sigma)$ for $z_i$ defined above satisfies \begin{align*}
    \lim_{\sigma \rightarrow 0^+} \mu_i^*(\sigma) = (AG_{\theta})^{\dagger}AGz_i\ \forall\ i \in [N].
\end{align*} In the special case that $G_{\theta} = G$ and $\mathrm{rank}(AG) = k$, then \begin{align*}
    \lim_{\sigma \rightarrow 0^+} \mu_i^*(\sigma) = z_i\ \forall\ i \in [N].
\end{align*} Moreover, if $\mathrm{rank}(AG_{\theta}) = \mathrm{rank}(AG) = k$, then we get that the following error bound holds for all $i \in [N]$ with probability at least $1 - Ne^{-\gamma m}$ for some positive absolute constant $\gamma$: 

\begin{align*}
    \|\mu_i^*(\sigma) - z_i\|_2 & \leqslant \|AGz_i\|_2\sigma_{k}(AG_{\theta})^{-3}\cdot\sigma^2 \\
    & + 2\sigma_{k}(AG_{\theta})^{-1}\sqrt{m} \cdot \sigma  \\
    & + \sqrt{2}\|(AG_{\theta})^{\dagger}\|\|(AG)^{\dagger}\|\cdot\|A(G_{\theta} - G)\|.
\end{align*} 
\end{prop}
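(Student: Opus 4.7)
The plan is to compute the limit of $\mu_i^*(\sigma)$ as $\sigma \to 0^+$ via the singular value decomposition of $AG_\theta$, specialize to the case $G_\theta = G$, and finally derive the quantitative error bound through a three-term decomposition combined with a pseudoinverse perturbation bound.

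First, I would rewrite the estimator as $\mu_i^*(\sigma) = (M + \sigma^2 I)^{-1} G_\theta^T A^T y^{(i)}$ where $M := G_\theta^T A^T A G_\theta$. Using the compact SVD $AG_\theta = U \Sigma V^T$, one directly computes $(M + \sigma^2 I)^{-1} G_\theta^T A^T = V(\Sigma^T \Sigma + \sigma^2 I)^{-1} \Sigma^T U^T$, and since each singular value factor $\sigma_j/(\sigma_j^2 + \sigma^2)$ converges to $\sigma_j^\dagger$ as $\sigma \to 0^+$, this matrix converges to $(AG_\theta)^\dagger$. Writing $y^{(i)} = AGz_i + \eta_i$ with $\eta_i \sim \mathcal{N}(0, \sigma^2 I)$, the noise contribution vanishes almost surely as $\sigma \to 0^+$ (under the natural coupling $\eta_i = \sigma \zeta_i$ with $\zeta_i \sim \mathcal{N}(0,I)$). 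This yields the first limit $\mu_i^*(\sigma) \to (AG_\theta)^\dagger AGz_i$, and the special case $G_\theta = G$ with $\mathrm{rank}(AG) = k$ follows immediately from $(AG)^\dagger AG = I_k$.

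For the error bound, assuming $\mathrm{rank}(AG_\theta) = k$ so that $M^{-1} G_\theta^T A^T = (AG_\theta)^\dagger$, I would introduce the decomposition $\mu_i^*(\sigma) - z_i = T_1 + T_2 + T_3$, where
\begin{align*}
T_1 & := [(M + \sigma^2 I)^{-1} - M^{-1}] G_\theta^T A^T AGz_i, \\
T_2 & := (AG_\theta)^\dagger AGz_i - z_i, \\
T_3 & := (M + \sigma^2 I)^{-1} G_\theta^T A^T \eta_i.
\end{align*}
For $T_1$, the resolvent identity $(M + \sigma^2 I)^{-1} - M^{-1} = -\sigma^2 (M + \sigma^2 I)^{-1} M^{-1}$ combined with $\|(M + \sigma^2 I)^{-1}\| \leq \sigma_k(AG_\theta)^{-2}$ and $\|(AG_\theta)^\dagger AGz_i\| \leq \sigma_k(AG_\theta)^{-1}\|AGz_i\|$ gives $\|T_1\| \leq \sigma^2 \sigma_k(AG_\theta)^{-3} \|AGz_i\|$. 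For $T_3$, the SVD shows $\|(M + \sigma^2 I)^{-1} G_\theta^T A^T\| = \max_j \sigma_j/(\sigma_j^2 + \sigma^2) \leq \sigma_k(AG_\theta)^{-1}$; combining with the standard $\chi_m^2$ concentration bound $\|\eta_i\| \leq 2\sigma\sqrt{m}$ with probability $\geq 1 - e^{-m/2}$ and union-bounding across $i \in [N]$ yields the middle term with probability at least $1 - Ne^{-\gamma m}$.

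The main obstacle will be controlling $T_2$, where the $\sqrt{2}$ factor originates. Using $\mathrm{rank}(AG) = k$ so that $z_i = (AG)^\dagger AGz_i$, I can rewrite $T_2 = [(AG_\theta)^\dagger - (AG)^\dagger] AGz_i$. The desired factor is characteristic of the Wedin--Stewart perturbation bound for pseudoinverses of matrices of \emph{equal} rank, namely $\|(AG_\theta)^\dagger - (AG)^\dagger\| \leq \sqrt{2}\,\|(AG_\theta)^\dagger\|\|(AG)^\dagger\|\|A(G_\theta - G)\|$, which applies precisely because both $AG$ and $AG_\theta$ have full column rank $k$ (the bound would otherwise carry a worse constant or fail entirely). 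Summing the three bounds and intersecting the high-probability events over $i \in [N]$ completes the proof.
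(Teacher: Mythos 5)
Your proposal is correct and follows essentially the same route as the paper's proof: the same limit computation via the pseudoinverse (your SVD calculation is equivalent to the paper's use of the limit definition $M^{\dagger} = \lim_{\delta\to 0^+}(M^TM+\delta I)^{-1}M^T$), the same three-part error decomposition, the same $\chi^2$-type concentration for the noise term, and the same Wedin perturbation bound for the pseudoinverse difference. The only cosmetic difference is that you bound the deterministic bias term via the resolvent identity $(M+\sigma^2 I)^{-1}-M^{-1} = -\sigma^2(M+\sigma^2I)^{-1}M^{-1}$ rather than the paper's direct singular-value computation of $\max_i \sigma_i\sigma^2/\bigl(\sigma_i(\sigma_i^2+\sigma^2)\bigr)$; both yield the identical $\sigma^2\sigma_k(AG_\theta)^{-3}\|AGz_i\|_2$ bound.
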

\begin{proof}
 For notational simplicity, set $\Sigma_{\theta,\sigma} := \sigma^{-2} \cdot G_{\theta}^TA^TAG_{\theta} + I$. Then \begin{align*}
    \mu_i^*(\sigma) & = \sigma^{-2}\Sigma_{\theta,\sigma}^{-1}G_{\theta}^TA^Ty^{(i)} \\
    & = \sigma^{-2}\Sigma_{\theta,\sigma}^{-1}G_{\theta}^TA^T(AGz_i + \eta_i) \\
    & = \sigma^{-2}\Sigma_{\theta,\sigma}^{-1}G_{\theta}^TA^TAGz_i + \sigma^{-2}\Sigma_{\theta,\sigma}^{-1}G_{\theta}^TA^T\eta_i \\
    & =\sigma^{-2} \Sigma_{\theta,\sigma}^{-1}G_{\theta}^TA^TAGz_i + \sigma^{-1} \cdot \Sigma_{\theta,\sigma}^{-1}G_{\theta}^TA^T\tilde{\eta}_i
    \end{align*} where we note that any $\eta_i \sim \mathcal{N}(0,\sigma^2I)$ can be written as $\eta_i = \sigma\cdot\tilde{\eta}_i$ where $\tilde{\eta}_i \sim \mathcal{N}(0,I)$. Observe that \begin{align*}
        \Sigma_{\theta,\sigma}^{-1} & = (\sigma^{-2}G_{\theta}^TA^TAG_{\theta} + \sigma^{-2} \cdot\sigma^{2}I)^{-1} \\
        & = \sigma^2(G_{\theta}^TA^TAG_{\theta} + \sigma^2 I)^{-1}
    \end{align*} since $(cA)^{-1} = c^{-1}A^{-1}$ for $c \neq 0$. This simplifies $\mu_i^*(\sigma)$ to \begin{align*}
        \mu_i^*(\sigma) & = (G_{\theta}^TA^TAG_{\theta} + \sigma^2I)^{-1}G_{\theta}^TA^TAGz_i \\
        & + \sigma \cdot (G_{\theta}^TA^TAG_{\theta} + \sigma^2I)^{-1}G_{\theta}^TA^T\tilde{\eta}_i.
    \end{align*} To compute the limit of the expression above, we recall that the pseudoinverse of a matrix $M$ is a limit of the form \begin{align*}
        M^{\dagger} & := \lim_{\delta \rightarrow 0^+} (M^TM + \delta I)^{-1}M^T \\
        & = \lim_{\delta \rightarrow 0^+} M^T(MM^T + \delta I)^{-1}.
    \end{align*} Note that this quantity is defined even if the inverses of $MM^T$ or $M^TM$ exist. Applying this to $M = AG_{\theta}$, we observe that \begin{align*}
       \lim_{\sigma \rightarrow 0^+} (G_{\theta}^TA^TAG_{\theta} + \sigma^2I)^{-1}G_{\theta}^TA^TAGz_i & = (AG_{\theta})^{\dagger}AGz_i
    \end{align*} and \begin{align*}
        & \lim_{\sigma \rightarrow 0^+}  \sigma (G_{\theta}^TA^TAG_{\theta} + \sigma^2I)^{-1}G_{\theta}^TA^T\tilde{\eta}_i \\
        & = \left(\lim_{\sigma \rightarrow 0^+} \sigma \right)\left(\lim_{\sigma \rightarrow 0^+} (G_{\theta}^TA^TAG_{\theta} + \sigma^2I)^{-1}G_{\theta}^TA^T\tilde{\eta}_i\right) \\
        & = 0 \cdot (AG_{\theta})^{\dagger} = 0.
    \end{align*}  Combining the above two limits gives $\lim_{\sigma \rightarrow 0^+} \mu_i^*(\sigma) = (AG_{\theta})^{\dagger} AGz_i$ as claimed.
    
    For the special case, since $AG \in \R^{m \times k}$ has $\mathrm{rank}(AG) = k$, the pseudoinverse of $AG$ is given by $(AG)^{\dagger} = (G^TA^TAG)^{-1}G^TA^T$. Applying this to $M = AG$, we observe that \begin{align*}
       \lim_{\sigma \rightarrow 0^+} & (G^TA^TAG + \sigma^2I)^{-1}G^TA^TAGz_i \\
       & = (AG)^{\dagger}AGz_i  = z_i.
    \end{align*}
    
    Now, for the error bound, we note that for $\sigma \geqslant 0$, we have by the triangle inequality, \begin{align*}
        \|\mu_i^*(\sigma) - z_i\|_2 & \leqslant \|\mu_i^{*}(\sigma) - \lim_{\sigma \rightarrow 0^+}\mu_i^*(\sigma)\|_2 \\
        & + \|\lim_{\sigma \rightarrow 0^+}\mu_i^*(\sigma) - z_i\|_2
    \end{align*} so we bound each term separately. Recall that we have shown $\lim_{\sigma \rightarrow 0^+}\mu_i^*(\sigma) = (AG_{\theta})^{\dagger}AGz_i$. For the first term, observe that \begin{align*}
       & \|\mu_i^*(\sigma) - (AG_{\theta})^{\dagger}AGz_i\|_2  \\
        & \leqslant  \|(G_{\theta}^TA^TAG_{\theta} + \sigma^2I)^{-1}G_{\theta}^TA^TAGz_i - (AG_{\theta})^{\dagger}AGz_i\|  \\
        &+ \sigma\cdot\|(G_{\theta}^TA^TAG_{\theta} + \sigma^2)^{-1}G_{\theta}^TA^T\tilde{\eta}_i\|_2 \\
        & \leqslant \|(G_{\theta}^TA^TAG_{\theta} + \sigma^2I)^{-1}G_{\theta}^TA^T - (AG_{\theta})^{\dagger}\|\|AGz_i\|_2  \\
        & + \sigma\cdot\|(G_{\theta}^TA^TAG_{\theta} + \sigma^2)^{-1}G_{\theta}^TA^T\tilde{\eta}_i\|_2
    \end{align*} Let $A G_{\theta} = USV^T$ denote the singular value decomposition of $AG_{\theta}$. Then $G_{\theta}^TA^TAG_{\theta} = VSU^TUSV^T = VS^2V^T$. This implies that \begin{align*}
        (G_{\theta}^TA^TAG_{\theta} & + \sigma^2I)^{-1}G_{\theta}A^T \\
        & = (VS^2V^T + \sigma^2 I)^{-1}VSU^T.
    \end{align*} Moreover, since $V$ is orthogonal, we have that \begin{align*}
        (VS^2V^T + \sigma^2I)^{-1}&  = (VS^2V^T + \sigma^2VV^T)^{-1} \\
        & = V(S^2 + \sigma^2 I)^{-1}V^T
    \end{align*}Thus, for $\sigma, \alpha \geqslant 0$, we have that \begin{align*}
        & [(G_{\theta}^TA^TAG_{\theta} + \sigma^2I)^{-1}   - (G_{\theta}^TA^TAG_{\theta} + \alpha^2I)^{-1}]G_{\theta}A^T  \\
        & = \left[(VS^2V^T + \sigma^2I)^{-1} - (VS^2V^T + \alpha^2 I)^{-1}\right]VSU^T \\
        & = V\left[(S^2 + \sigma^2I)^{-1} - (S^2 + \alpha^2 I)^{-1}\right]V^TVSU^T \\
        & = V\left[(S^2 + \sigma^2I)^{-1} - (S^2 + \alpha^2 I)^{-1}\right]SU^T.
    \end{align*} Thus, we conclude by rotational invariance of the spectral norm, we obtain \begin{align*}
         &\|(G_{\theta}^TA^TAG_{\theta} + \sigma^2I)^{-1}G_{\theta}A^T \\
         & - (G_{\theta}^TA^TAG_{\theta} + \alpha^2I)^{-1}G_{\theta}A^T\| \\
        & = \left\|\left[(S^2 + \sigma^2I)^{-1} - (S^2 + \alpha^2 I)^{-1}\right]S\right\|.
    \end{align*} Recall that $S$ is a diagonal matrix containing the singular values of $AG_{\theta}$. Let $\sigma_i$ denote the $i$-th singular value of $AG_{\theta}$ for $i = 1,\dots,k$. Then by direct calculation, we have that \begin{align*}
       &  \left\|\left[(S^2 + \sigma^2I)^{-1} - (S^2 + \alpha^2 I)^{-1}\right]S\right\| \\
       & =\max_{i \in [k]}\left|\sigma_i\cdot\left(\frac{1}{\sigma_i^2 + \sigma^2} - \frac{1}{\sigma_i^2 + \alpha^2}\right)\right| \\
        & = \max_{i \in [k]}\left|\frac{\sigma_i(\sigma^2 - \alpha^2)}{(\sigma_i^2 + \sigma^2)(\sigma_i^2 + \alpha^2)}\right|.
    \end{align*} Taking the limit as $\alpha \rightarrow 0^+$, we obtain \begin{align*}
         & \|(G_{\theta}^TA^TAG_{\theta} + \sigma^2I)^{-1}G_{\theta}^TA^T - (AG_{\theta})^{\dagger}\| \\
         & =  \max_{i \in [k]}\frac{\sigma^2}{\sigma_i(\sigma_i^2 + \sigma^2)} \\
         & \leqslant \sigma_k^{-3}\cdot \sigma^2.
    \end{align*} This additionally shows that for any $\sigma \geqslant 0$, \begin{align*}
        \|(G_{\theta}^TA^TAG_{\theta} & + \sigma^2I)^{-1}G_{\theta}A^T\| \\
        & = \left\|(S^2+\sigma^2I)^{-1}S\right\|\\
        & \leqslant \sigma_k^{-1}.
    \end{align*} Let $E_N$ be the event that $\max_{i \in [N]}\|\tilde{\eta}_i\|_2 \leqslant 2\sqrt{m}$. By a single-tailed variant of Corollary 5.17 in \citePhys{VershyninHDPSupp}, we have $\mathbb{P}(E_N) \geqslant 1 - Ne^{-\gamma m}$ for some absolute constant $\gamma > 0$. Thus, with the same probability, we have that the following bound holds: \begin{align*}
        \sigma\cdot\|(G_{\theta}^TA^TAG_{\theta} + \sigma^2)^{-1}G_{\theta}^TA^T\tilde{\eta}_i\|_2 \leqslant 2\sigma \sigma_k^{-1}\sqrt{m}.
    \end{align*}
    
    For the term $\|\lim_{\sigma \rightarrow 0^+}\mu_i^*(\sigma) - z_i\|_2$, note that we have \begin{align*}
        \|(AG_{\theta})^{\dagger} AGz_i - z_i\|_2 & = \|(AG_{\theta})^{\dagger} AGz_i - (AG)^{\dagger}AGz_i\|_2 \\
        & \leqslant \|(AG_{\theta})^{\dagger} - (AG)^{\dagger}\|\cdot\|AGz_i\|_2
    \end{align*} where $\|\cdot\|$ is the spectral norm. Assuming $\mathrm{rank}(AG_{\theta}) = \mathrm{rank}(AG) = k$, we can apply Theorem 4.1 in \citePhys{Wedin1973} to bound \begin{align*}
        \|(AG_{\theta})^{\dagger} & - (AG)^{\dagger}\| \\
        & \leqslant \sqrt{2}\|(AG_{\theta})^{\dagger}\|\|(AG)^{\dagger}\|\|A(G_{\theta}-G)\|.
    \end{align*} Combining this bound with the previous equation yields the desired inequality.
\end{proof}

\subsection{Further Discussion of $\ELBOProxy$}

In Section \ref{sec:ELBO-intro}, we showed that the $\ELBO$ is equal to the $\ELBOProxy$ for injective or invertible generators $G$ with approximate posteriors $h_{\phi}$ induced by $G$. While this equality may not hold for non-injective networks $G$, we expect this proxy to be a reasonable estimate for the ELBO via the following heuristic argument. Consider the distribution $G\sharp p(z|G)$ convolved with isotropic Gaussian noise: $p_{\sigma}(x|G) = \int p_{\sigma}(x|z,G)p(z|G)dz$ where $p_{\sigma}(x|z,G) = \mathcal{N}(G(z), \sigma^2 I)$ for $\sigma \ll 1$. If $x \sim h_{\phi}= G\sharp q_{\phi}$, then $x = G(z_0)$ for some $z_0 \sim q_{\phi}$. For any small $\delta > 0$, observe that the likelihood for such an $x$ can decomposed into two terms: \begin{align*}
    p_{\sigma}(x|G) & = \int p_{\sigma}(G(z_0)| z,G) p(z|G)dz \\
    & = \int_{\|z-z_0\|_2\leqslant \delta} p_{\sigma}(G(z_0)| z,G) p(z|G)dz  \\
    & +\int_{\|z-z_0\|_2> \delta} p_{\sigma}(G(z_0)| z,G) p(z|G)dz.
\end{align*} For smooth generators $G$ and $\delta$ sufficiently small, we would expect $G(z) \approx G(z_0)$ for $z \approx z_0$ so that the density $p_{\sigma}(G(z_0)|z,G))p(z|G)$ is constant in a neighborhood of $z_0$, making the first term $\approx C_{\sigma} p(z_0|G)$ for some constant $C_{\sigma}$. Moreover, if $G(z_0)$ is of sufficiently high likelihood, we would expect the second term to be negligible for $\sigma \ll 1$ so that $\E_{x\sim h_{\phi}}[\log p_{\sigma}(x|G)] \approx \E_{z \sim q_{\phi}}[\log p(z|G)]$ (up to a constant). 

\section{Full Model Selection Results}

In this section, we expand upon the model selection results using the negative $\ELBOProxy$ in Section \ref{sec:ELBO-intro} in the main text. The experimental setup is the same, where we aim to select a generative model for a class of MNIST in two different inverse problems: denoising and phase retrieval. The full array of negative $\ELBOProxy$ values for denoising is given in Fig. \ref{fig:denoising-model-selection} and for phase retrieval in Fig. \ref{fig:PR-model-selection}. 

\begin{figure}[ht]
    \centering
    \includegraphics[width=.49\textwidth]{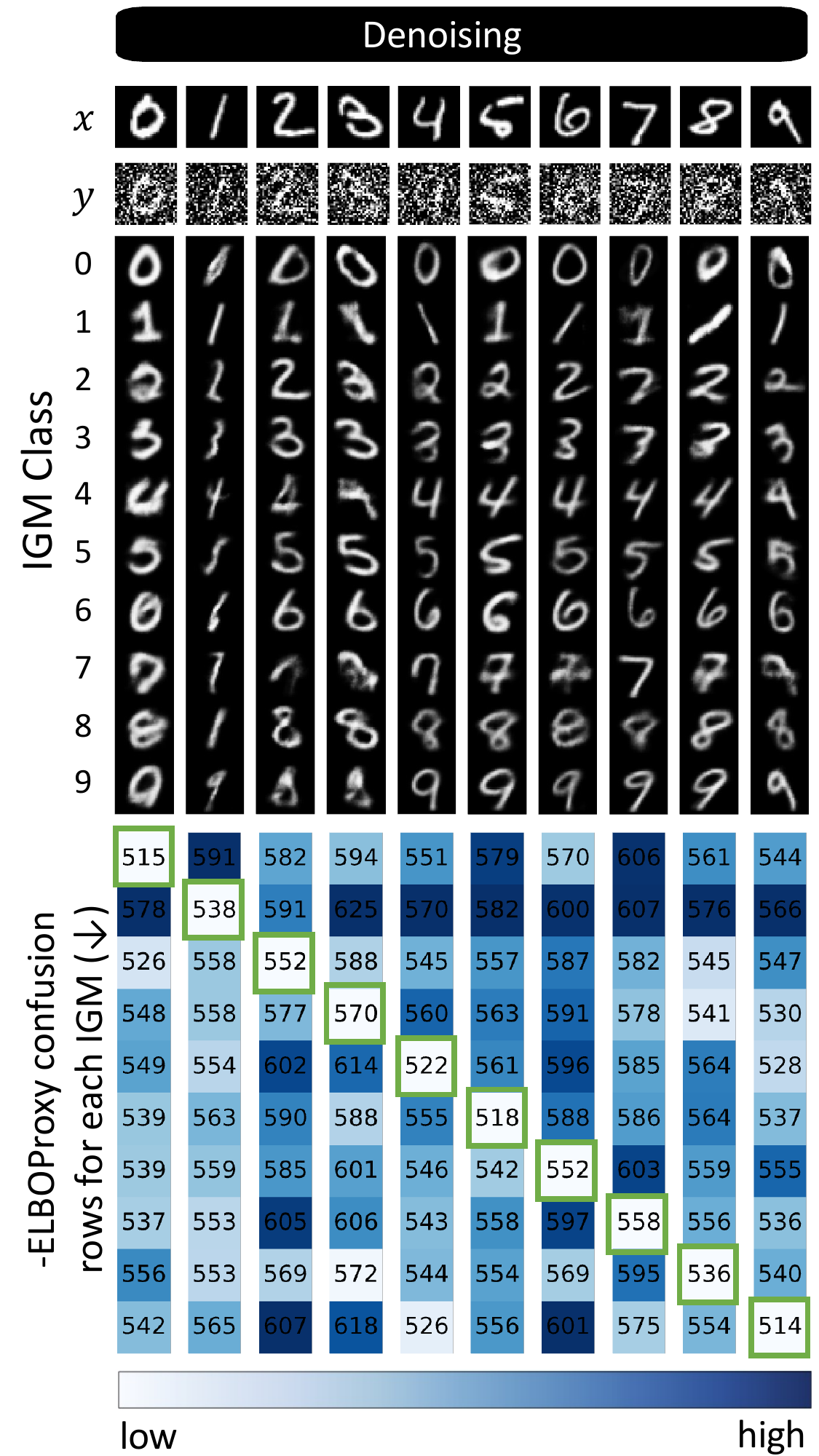}
    \caption{\textbf{Full model selection results for denoising.} We showcase further results on model selection with generative networks for denoising. Top: the two topmost rows correspond to the ground-truth image $x_c$ and the noisy measurements $y_c$. Middle: we show the means of the variational distributions given by the IGM trained on a particular class. 
    Bottom: each column of the array corresponds to the negative $\ELBOProxy$ achieved by each model in reconstructing the images. Here, lower is better. We highlight entries with green boxes with the best negative $\ELBOProxy$ values in each row.}
    \label{fig:denoising-model-selection}
\end{figure}

\begin{figure}[ht]
    \centering
    \includegraphics[width=.49\textwidth]{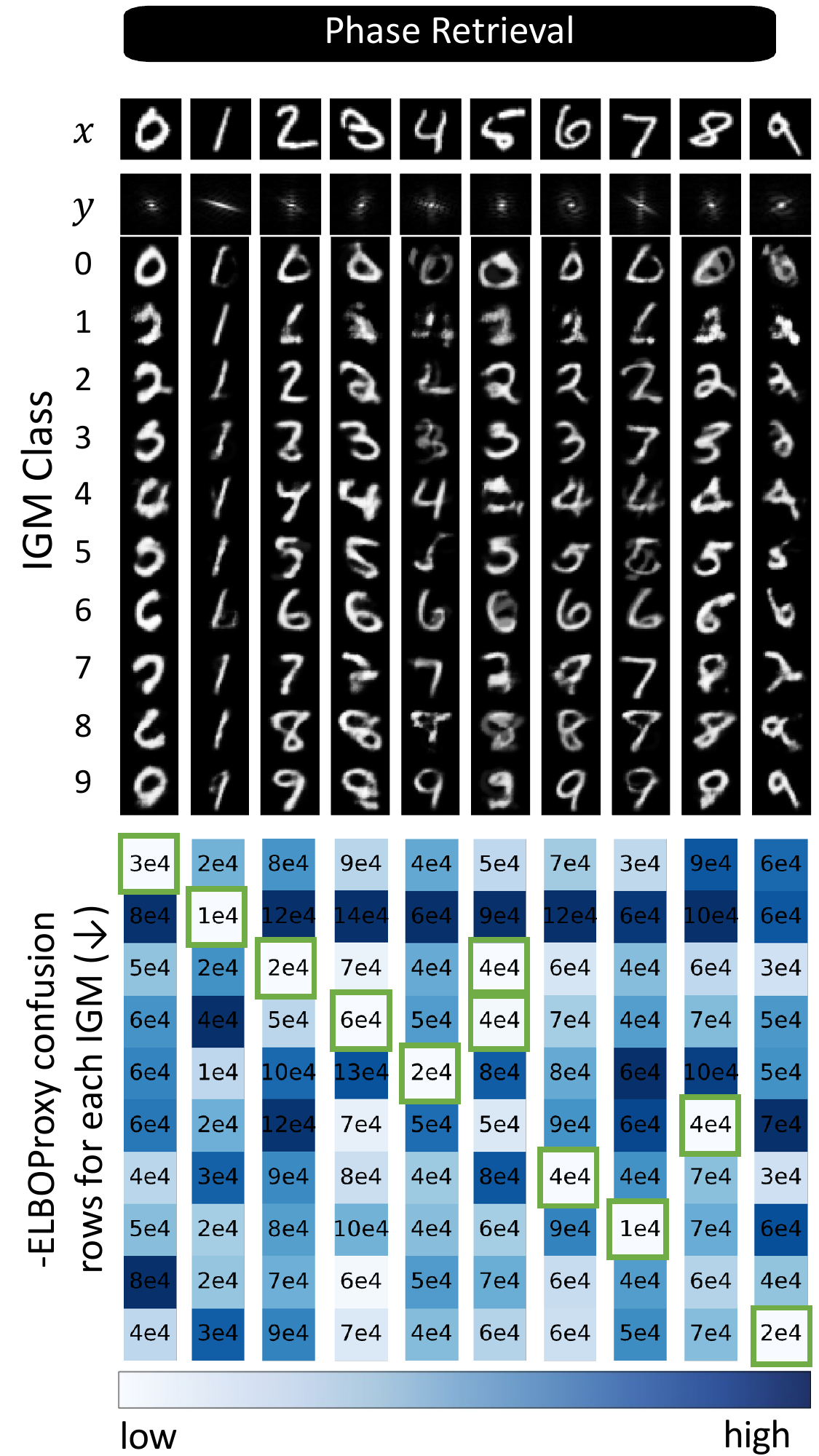}
    \caption{\textbf{Full model selection results for phase retrieval. }We showcase further results on model selection with generative networks for phase retrieval. Top: the two topmost rows correspond to the ground-truth image $x_c$ and the log of the noisy measurements $y_c$. Middle: we show the means of the variational distributions given by the IGM trained on a particular class. 
    Bottom: each column of the array corresponds to the negative $\ELBOProxy$ achieved by each model in reconstructing the images. Here, lower is better. We highlight entries with green boxes with the best negative $\ELBOProxy$ values in each row.}
    \label{fig:PR-model-selection}
\end{figure}

\section{Additional  results}
In this section, we expand upon the results for inferring the IGM using the methods described in Section \ref{sec:learning} of the main text.

\begin{figure*}[ht]
    \centering
    \includegraphics[width=.95\textwidth]{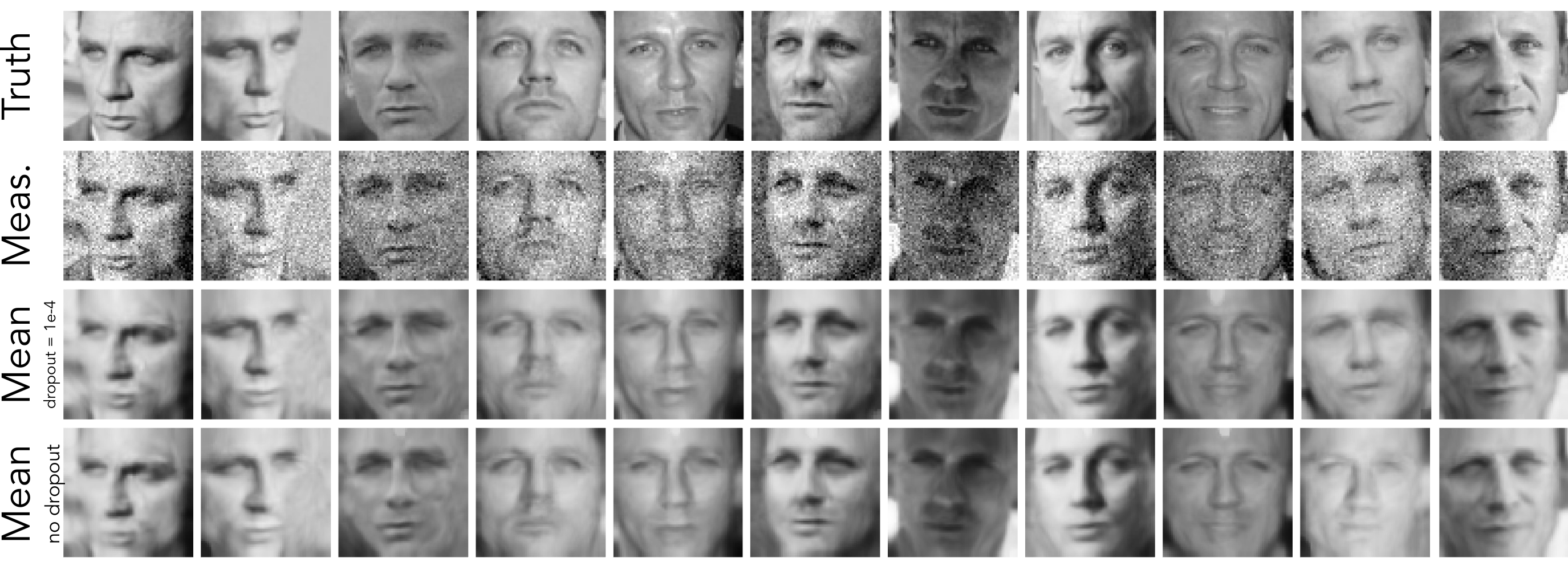}
    \caption{\textbf{Dropout ablation with denoising 95 images of celebrity A.} We demonstrate our method described in Section \ref{sec:learning} of the main body using 95 noisy images of a celebrity. Here we show the ground-truth (row 1), noisy measurements (row 2), mean reconstruction with dropout (row 3), and mean reconstruction without dropout (row 4) for a subset of the 95 different noisy images. The reconstructions with and without dropout have similar image quality. The mean PSNR for with dropout is 27.0 and for without dropout is 27.2.}
    \label{fig:bond_dropout_ablations}
\end{figure*}

\subsection{Ablations}

In Fig.~\ref{fig:bond_dropout_ablations}, we show an ablation test to validate the usage of dropout. Although the PSNR is higher when we do not have dropout, there are noticeable artifacts from the generative architecture in the reconstructions. 

\begin{figure*}[ht]
    \centering
    \includegraphics[width=.95\textwidth]{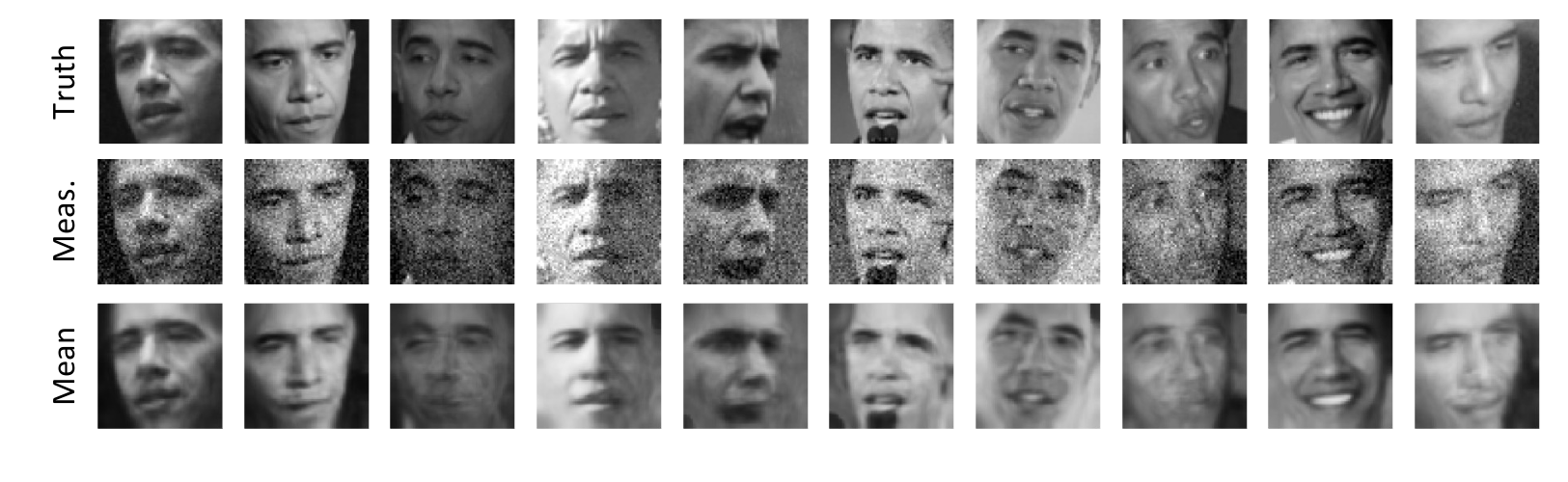}
    \caption{\textbf{Denoising 95 images of celebrity B.}  We demonstrate our method described in Section \ref{sec:learning} of the main body using 95 noisy images of a celebrity. These results are of celebrity B, but correspond to Fig. \ref{fig:Bond_Denoising} in the main body. Here we show the ground-truth (top), noisy measurements (middle), and mean reconstruction (bottom) for a subset of the 95 different noisy images. Our reconstructions are much less noisy, recovering sharper features that are hard to discern in the noisy images. Note that no predefined prior/regularizer was used in denoising.}
    \label{fig:obama_denoising}
\end{figure*}

\begin{figure*}[ht]
    \centering
    \includegraphics[width=.95\textwidth]{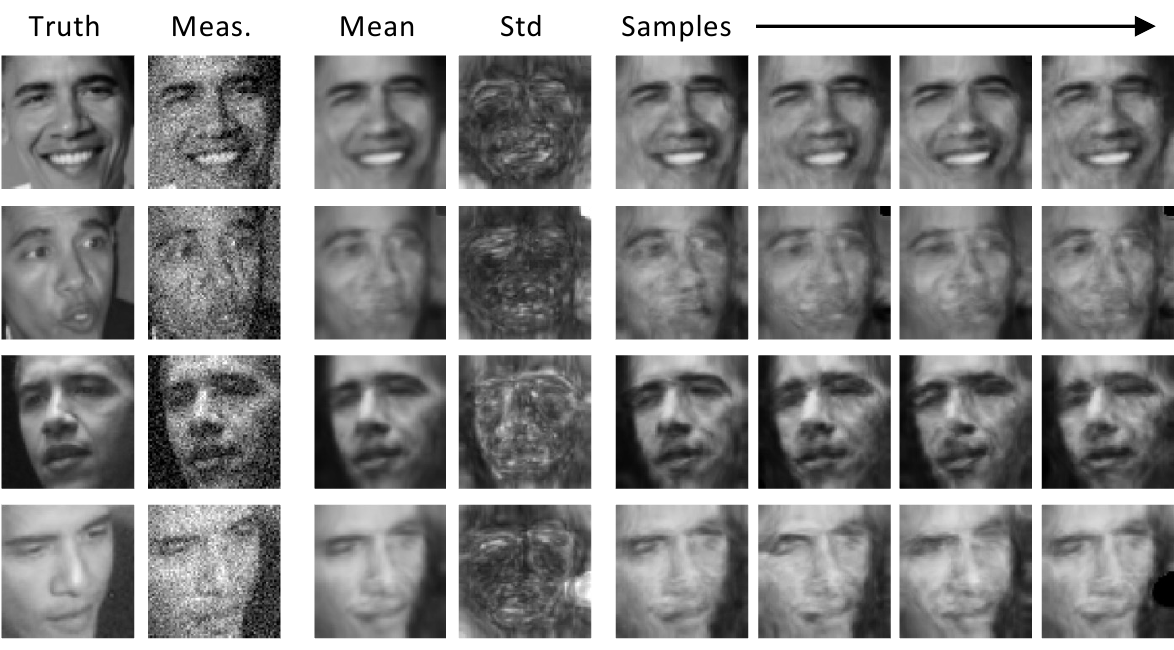}
    \caption{\textbf{Posterior samples from denoising 95 images of a celebrity.} We demonstrate our method described in Section \ref{sec:learning} of the main body to denoise 95 images. Here we show the ground-truth image, the noisy measurements, mean of the posterior, standard deviation of the posterior, and samples from the posterior.}
    \label{fig:obama_denoising_samples}
\end{figure*}

\begin{figure*}[ht]
    \centering
    \includegraphics[width=.95\textwidth]{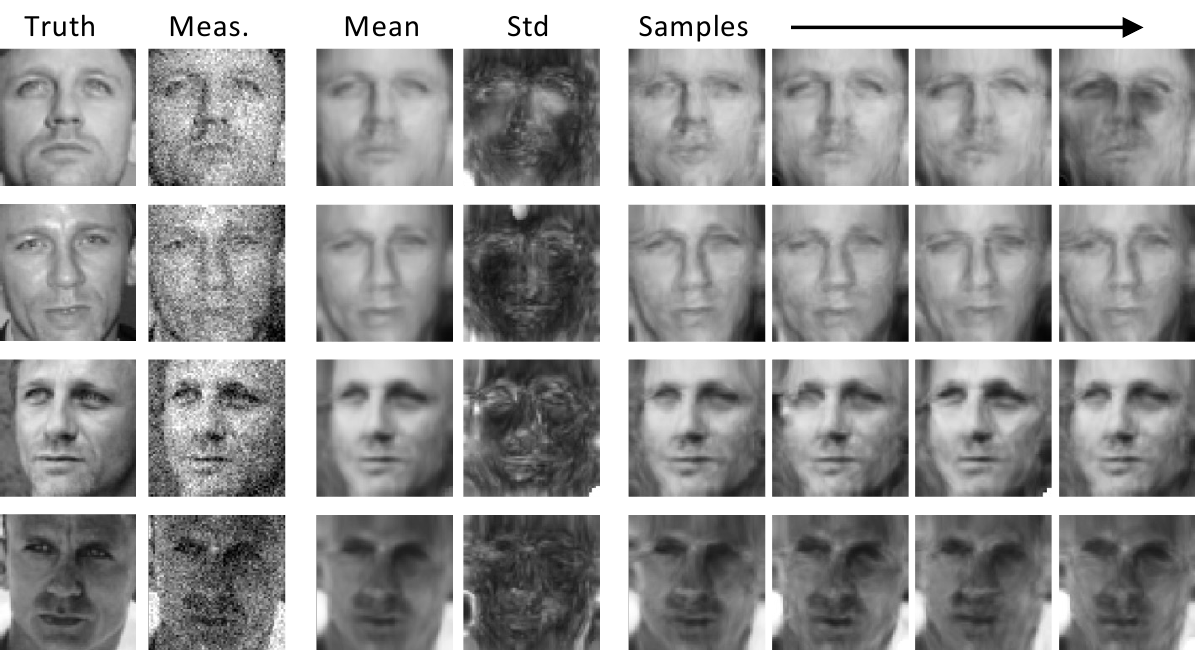}
    \caption{\textbf{Posterior samples from denoising 95 images of celebrity A.} We show additional results for denoising 95 images of celebrity A, which corresponds to the same results shown in Fig. \ref{fig:Bond_Denoising} of the main body. Here we show the ground-truth image, the noisy measurements, mean of the posterior, standard deviation of the posterior, and samples from the posterior.}
    \label{fig:bond_samples}
\end{figure*}

\subsection{Denoising}
In Fig. \ref{fig:Bond_Denoising} of the main body, we showed results for denoising 95 images of celebrity A. We include additional visualizations of the posterior by showing samples from multiple posteriors in Fig.~\ref{fig:bond_samples}, which are generated from the IGM trained as shown in Fig. \ref{fig:Bond_Denoising} in the main body. We show additional results on denoising 95 images of celebrity face B in Fig.~\ref{fig:obama_denoising} along with samples from the posterior in Fig.~\ref{fig:obama_denoising_samples}. In both cases, our reconstructions are much less noisy than the measurements and sharpen the primary facial features. 



\subsection{Phase Retrieval} \label{appx:pr-details}
We show additional results from phase retrieval with 150 images of MNIST 8's in Fig.~\ref{fig:fourier_pr_more} and Fig.~\ref{fig:gauss_pr_more}. The examples shown in Fig.~\ref{fig:fourier_pr_more} are from the Fourier phase retrieval measurements whereas the ones in Fig.~\ref{fig:gauss_pr_more} are from Gaussian phase retrieval measurements. These are in addition to those shown in Fig.~\ref{fig:Mnist_phase_retrieval} in the main text. Note that the simple, unimodal Gaussian variational distribution is not expressive enough to capture the multimodal structure of the true posterior in the Fourier phase retrieval problem.

For Table \ref{table:phase-retrieval-psnr} in the main body, we calculated the PSNRs for each measurement model in the following way: for each underlying image, we generated $1000$ samples from the generator and latent variational distribution and used cross-correlation in the frequency domain to find the most plausible shift from our reconstruction to the underlying image. After shifting the image, we calculated the PSNR to the underlying image. We then took the best PSNR amongst all samples and took the average across all $N$ image examples.


\begin{figure*}
    \centering
    \includegraphics[width=.95\textwidth]{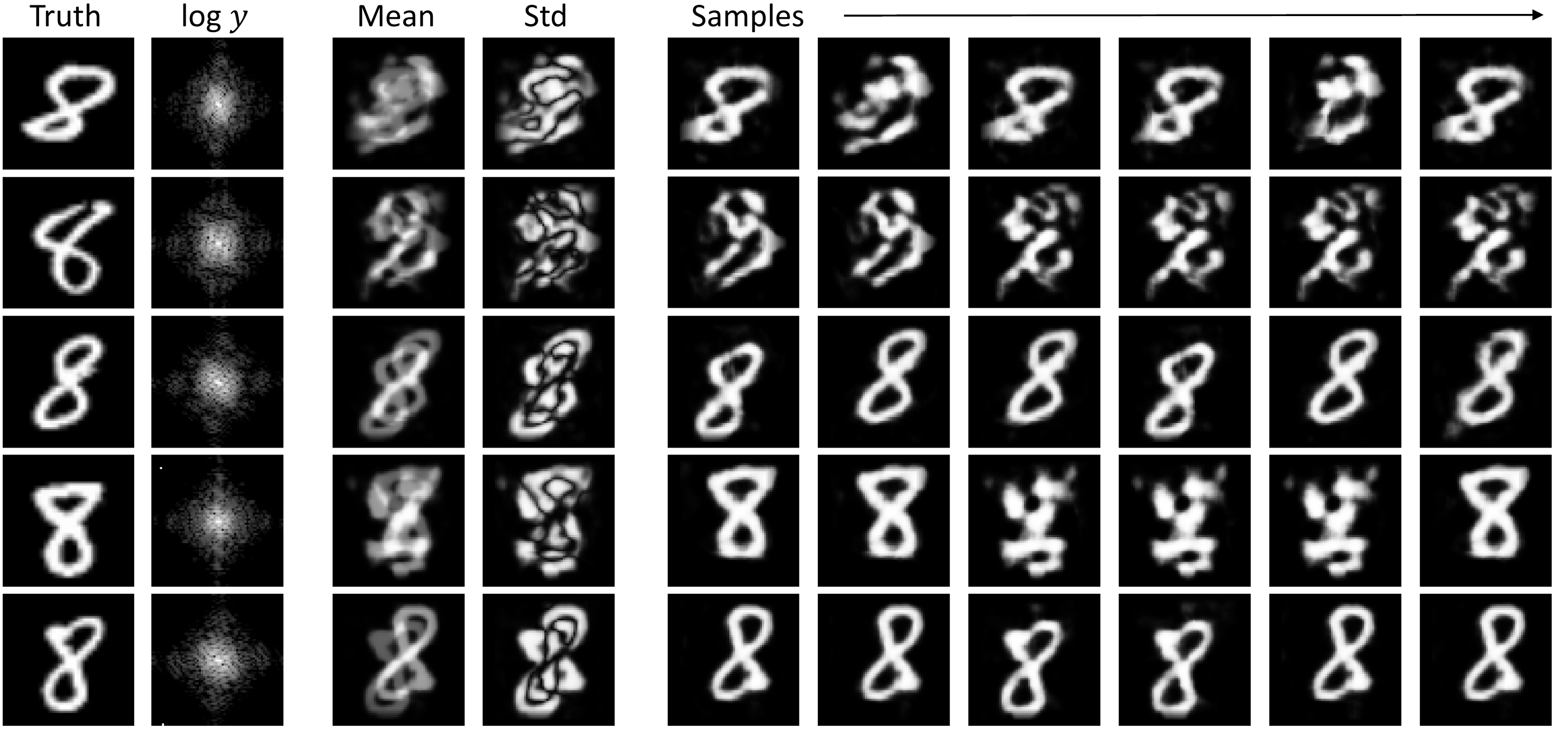}
    \caption{\textbf{Random examples from Fourier phase retrieval on MNIST 8's.} We demonstrate our method described in Section \ref{sec:learning} of the main body to perform phase retrieval on 75 images. Here we show the ground-truth image, the log magnitude of the Fourier transform, mean of the posterior, standard deviation of the posterior, and samples from the posterior. Note that the simple Gaussian latent posterior distribution is unable to capture the full multimodal structure of the true posterior. }
    \label{fig:fourier_pr_more}
\end{figure*}

\begin{figure*}
    \centering
    \includegraphics[width=.95\textwidth]{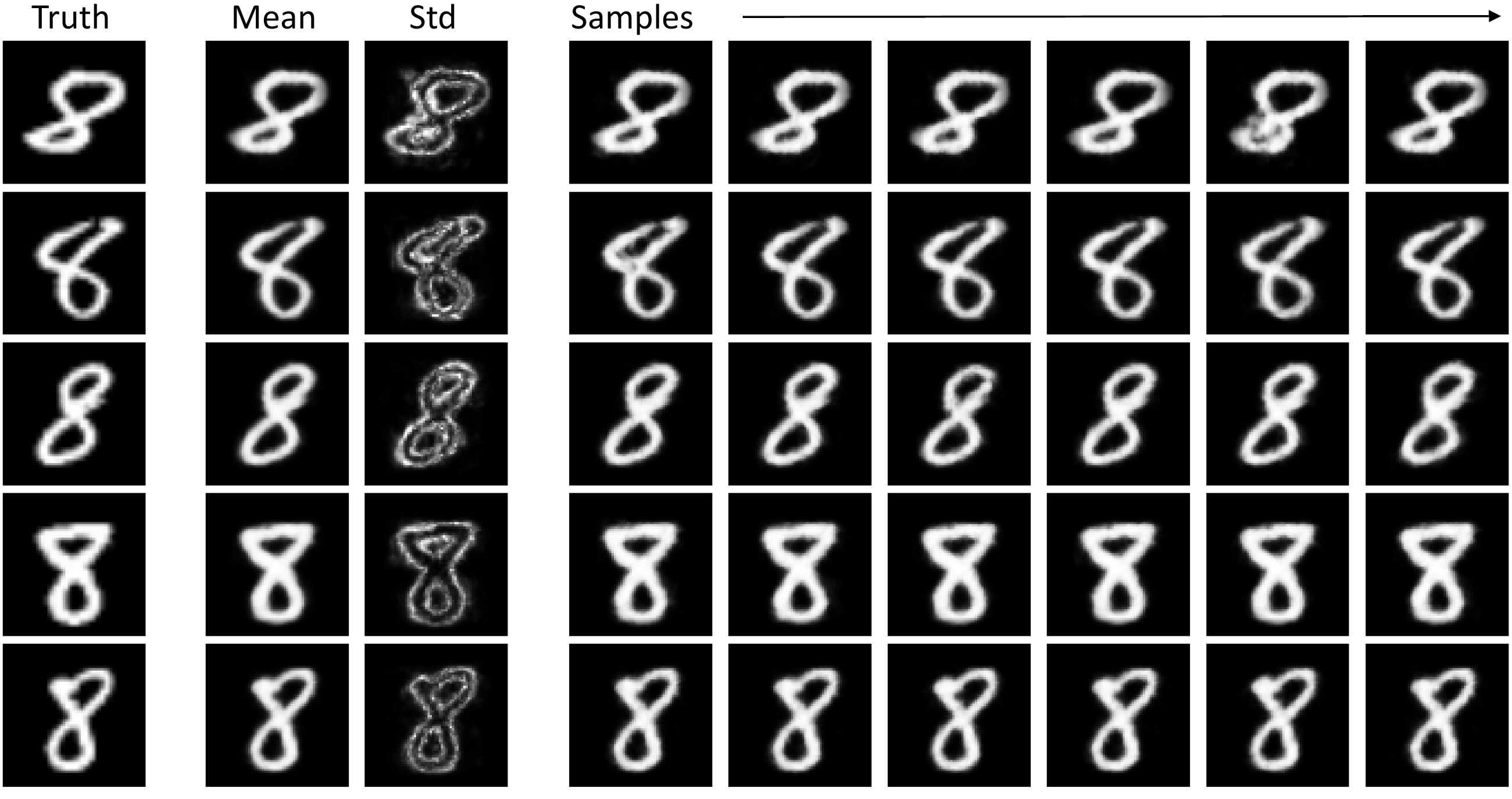}
    \caption{\textbf{Random examples from Gaussian phase retrieval on MNIST 8's.} We demonstrate our method described in Section \ref{sec:learning} of the main body to perform phase retrieval on 150 images. Here we show the ground-truth image, mean of the posterior, standard deviation of the posterior, and samples from the posterior.}
    \label{fig:gauss_pr_more}
\end{figure*}

\begin{table*}[h!]
\centering
\begin{adjustbox}{width=1\textwidth,center=\textwidth}
\begin{tabular}{l|rrrrrrrr}
\hline
& \multicolumn{8}{c}{Blur size}\\
{} &   0 $\mu$as  &   5 $\mu$as  &   10 $\mu$as &   15 $\mu$as &   20 $\mu$as &   25 $\mu$as&   30 $\mu$as &   35 $\mu$as\\
\hline
PSNR ($\uparrow$) & 26.8 & 30.6 & \textbf{33.0} & 32.4 & 30.7 & 29.3 & 28.1 & 27.3 \\
NCC ($\uparrow$) &  0.922 &  0.971 &  \textbf{0.984} &  0.978 &  0.961 &  0.941 &  0.920 &  0.901 \\
\hline
\end{tabular}
\end{adjustbox}
\caption{\textbf{Illustrating the intrinsic resolution of our result.} We show quantitative comparisons between our reconstructions and the clean underlying image blurred by varying degrees, where 0 $\mu$as is no blur and 35 $\mu$as is the highest blur. 25 $\mu$as represents the intrinsic resolution of the telescope. We find we are able to super-resolve the image by over 2x. The highest PSNR is highlighted in bold.}
\label{table:superres_table}
\end{table*}

\newpage

\subsection{Multiple Forward Models}
\begin{figure*}
    \centering
    \includegraphics[width=.95\textwidth]{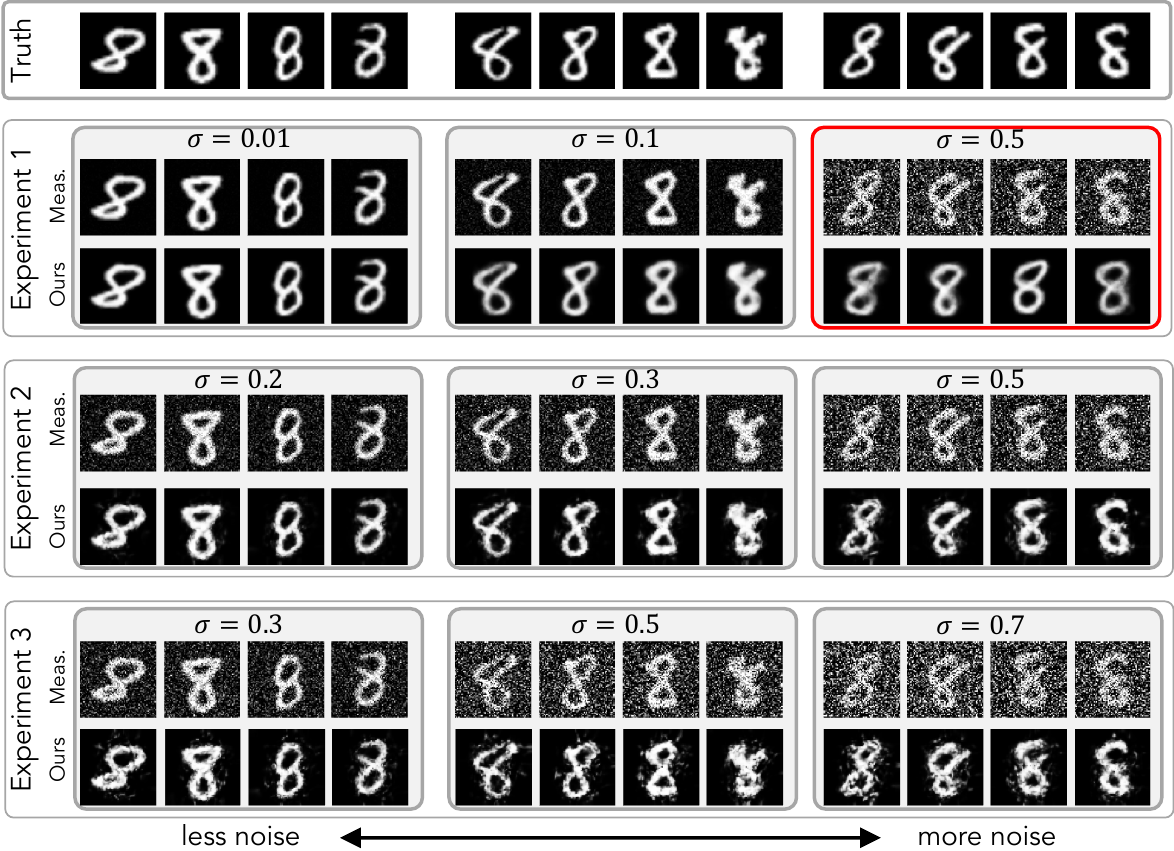}
    \caption{\textbf{Denoising many noise levels.} We demonstrate our method to perform denoising on measurements with multiple noise levels. For each experiment, we use 75 measurement examples, which are defined by  $y^{(i)} = x^{(i)} + \eta$ where $\eta \in \{\eta_1, \eta_2, \eta_3\}$ and $\eta_i \sim \mathcal{N}(0, \sigma_i^2 I)$. In Experiment 1, we use noisy measurement examples that have additive noise with standard deviations of 0.01, 0.1, and 0.5. In Experiment 2, we use noisy measurement examples that have additive noise with standard deviations of 0.2, 0.3, and 0.5. In Experiment 3, we use noisy measurement examples that have additive noise with standard deviations of 0.3, 0.5, and 0.7. We visualize the true underlying images, the measurement used for each experiment, and the mean of the image reconstruction posterior. Most of the reconstructions recover the primary features of the underlying image. However, in Experiment 1, the reconstructions of the low SNR measurements exhibit bias and do not match the true underlying images. Due to the wide range of SNRs with nearly noiseless measurements, our reconstructions overfit to the nearly clean images. The biased results are highlighted by the red box.} 
    \label{fig:multi_denoise_all}
\end{figure*}


\section{IGM as a Generator}
While the goal of our method is to improve performance in solving the underlying inverse problem, we can inspect what the inferred IGM has learned by generating samples. We show samples from different inferred IGMs in Fig.~\ref{fig:generation} for a variety of datasets and inverse problems. From left to right: IGM from denoising MNIST 8's (Fig. \ref{fig:MNIST_denoising} from the main body), IGM inferred from phase retrieval measurements of MNIST 8's, IGM from denoising celebrity face A (Fig. \ref{fig:Bond_Denoising} from the main body), IGM from denoising celebrity face B (Fig.~\ref{fig:obama_denoising}), and IGM from video reconstruction of a black hole from the black hole compressed sensing problem (Fig. \ref{fig:m87_all} from the main body).

\begin{figure*}
    \centering
    \includegraphics[width=.95\textwidth]{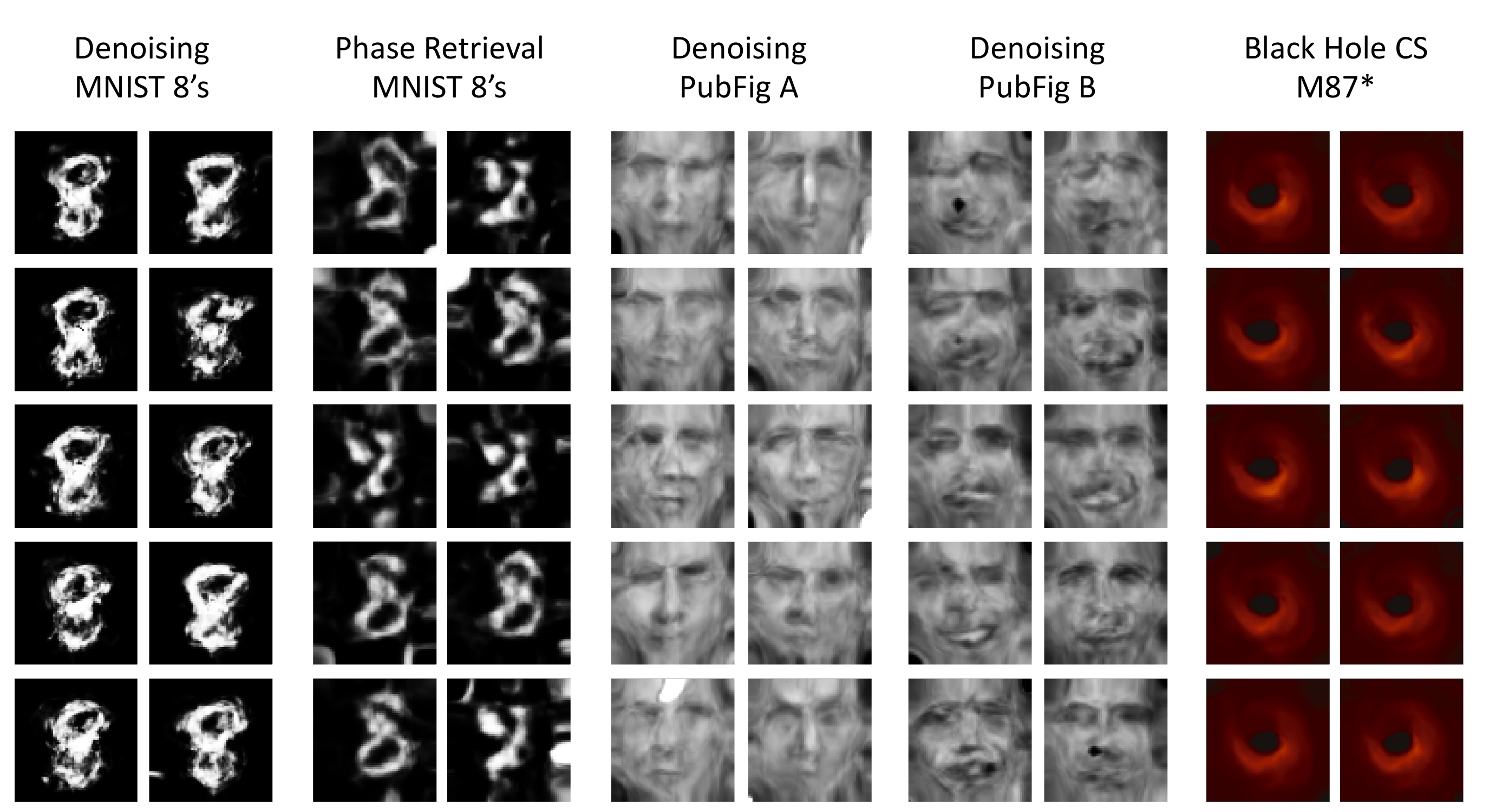}
    \caption{\textbf{Samples from IGM.}  Here we show samples of the IGM defined by $w = G_{\theta}(z)$ where $z \sim \mathcal{N}(0, \sigma^2 I)$. We set $\sigma =1$ for all cases except for phase retrieval, which uses $\sigma=5$. We show samples from a variety of IGMs using different datasets and inverse problems. From left to right: denoising MNIST 8's, phase retrieval measurements of MNIST 8's, denoising celebrity face A, denoising celebrity face B, video reconstruction of a black hole from the black hole compressed sensing problem. The range of these images are all from 0 to 1.}
    \label{fig:generation}
\end{figure*}





\bibliographystylePhys{plain}\bibliographyPhys{supp.bib}

\end{document}